\newcommand\@optsub[2]{
  \ifstrempty{#2}{%
    #1%
  }{%
    #1_{#2}%
  }%
}
\newcommand\@optsup[2]{
  {#1}%
  \ifstrempty{#2}{}{^{#2}}%
}
\newcommand\@optapp[2]{
  {#1}%
  \ifstrempty{#2}{}{(#2)}%
}
\renewcommand\phi{\varphi}
\renewcommand\emptyset{\varnothing}
\tikzset{
  mono/.style={>->},
  gnode/.style={circle,fill=black,inner sep=0mm,minimum size=2mm,font=\scriptsize,text=white},
  gedge/.style={->,>=latex},
  arlab/.style={inner sep=1pt,font=\scriptsize},
  glab/.style={inner sep=1pt,font=\scriptsize},
  hyperedge/.style={shape=rectangle,draw,inner sep=0,
    minimum width=1cm,minimum height=.4cm},
  point/.style={shape=circle,inner sep=0,fill=black,
    minimum height=1pt,minimum width=1pt}
}
\newcommand\graphboxthick[2][grbox]{
  \begin{pgfonlayer}{background}
    \node[fit=#2] (#1) {} ;
    \fill[black!20,rounded corners=2mm,postaction={draw,black}] 
          (#1.north west) -- (#1.north east) --
          (#1.south east) -- (#1.south west) -- cycle ;
  \end{pgfonlayer}
}
\newcommand\interfacebox[2][ifbox]{
  \begin{pgfonlayer}{background}
    \node[fit=#2] (#1) {} ;
    \draw[dashed,rounded corners=2mm] 
          (#1.north west) -- (#1.north east) --
          (#1.south east) -- (#1.south west) -- cycle ;
  \end{pgfonlayer}
}
\newcommand\substx[2]{
  ($ (0,0)!#1!(0,1) + (#2,0) $)
}
\newcommand\substy[2]{
  ($ (0,0)!#1!(1,0) + (0,#2) $)
}
\newcommand\tpl[1]{\langle #1 \rangle}
\newcommand\arright[1][]{%
 \ifstrempty{#1}%
 {\rightarrow}%
 {\mathbin{%
     \mathchoice%
     {\xrightarrow{#1}}%
     {\scalebox{.8}[1]{$\textstyle\relbar$}{\raisebox{.23ex}{$\scriptstyle #1$}}{\shortrightarrow}}%
     {\scalebox{.8}[1]{$\scriptstyle\relbar$}{\raisebox{.15ex}{$\scriptscriptstyle #1$}}{\shortrightarrow}}%
     {\scalebox{.8}[1]{$\scriptscriptstyle\relbar$}{\raisebox{.15ex}{$\scriptscriptstyle #1$}}{\shortrightarrow}}%
 }}
}
\newcommand\arleft[1][]{%
 \ifstrempty{#1}%
 {\leftarrow}%
 {\mathchoice%
   {\xleftarrow{#1}}
   {\mathbin{{\textstyle\shortleftarrow}{\raisebox{.23ex}{$\scriptstyle #1$}}\scalebox{.8}[1]{$\textstyle\relbar$}}}
   {\mathbin{{\scriptstyle\shortleftarrow}{\raisebox{.15ex}{$\scriptscriptstyle #1$}}\scalebox{.8}[1]{$\scriptstyle\relbar$}}}
   {\mathbin{{\scriptscriptstyle\shortleftarrow}{\raisebox{.15ex}{$\scriptscriptstyle #1$}}\scalebox{.8}[1]{$\scriptscriptstyle\relbar$}}}
 }
}
\newcommand\sLab[1][]{\@optsub{\mathit{lab}}{#1}}
\newcommand\sSrc[1][]{\@optsub{\mathit{src}}{#1}}
\newcommand\sTgt[1][]{\@optsub{\mathit{tgt}}{#1}}
\newcommand\fLab[2][]{\sLab[#1](#2)}
\newcommand\fSrc[2][]{\sSrc[#1](#2)}
\newcommand\fTgt[2][]{\sTgt[#1](#2)}
\newcommand\sFlower[1][]{\@optsub{\mbox{\ding{82}}}{#1}}
\newcommand\sFlowerM[1][]{\@optsub{\mathit{fl}}{#1}}
\newcommand{\mytilde}{{\raise.17ex\hbox{$\scriptstyle\mathtt{\sim}$}}\xspace}
\definecolor{dmagenta}{rgb}{0.81,0,0.81}
\definecolor{dcyan}{rgb}{0,0.6,0.6}
\definecolor{dgreen}{rgb}{0,0.6,0}
\definecolor{ltgray}{rgb}{.8,.8,.8}
\newcommand\Yes{{\color{dgreen}\ding{51}}}
\newcommand\No{{\color{red}\ding{55}}}
\newenvironment{proposition_for}[2]{\noindent{\bf Proposition~\ref{#1}#2}\it}{}
\newenvironment{lemma_for}[2]{\noindent{\bf Lemma~\ref{#1}#2}\it}{}
\newcommand{\TGL}{\ensuremath{{\mathit{TGL}}}}
\newcommand{\GR}[1][]{{\ensuremath{\mathbf{Graph}}}}
\newcommand{\GRL}{{\ensuremath{\mathbf{Gr_\Lambda}}}}
\newcommand{\core}[1]{\ensuremath{\mathit{core}(#1)}}
\newcommand{\LMon}{{\ensuremath{\mathbf{{\ell}Mon}}}}
\newcommand{\mor}{\mathrel{\raisebox{1.3ex}{\scalebox{1}[-1]{\mbox{$\looparrowright$}}}}}
\newcommand{\short}[1]{}
\newcommand{\full}[1]{#1}
\title{Specifying Graph Languages with Type Graphs}
\author{Andrea Corradini\inst{1} \and Barbara K\"onig\inst{2} \and Dennis 
Nolte\inst{2}}
\institute{%
  Universit\`a di Pisa, Italy\\
  \email{andrea@di.unipi.it}
  \and Universit\"at Duisburg-Essen, Germany\\ 
    \email{\{barbara\_koenig,dennis.nolte\}@uni-due.de}
}
\begin{document}
\maketitle

\begin{abstract}
  We investigate three formalisms to specify graph languages, i.e.
  sets of graphs, based on type graphs. First, we are interested in
  (pure) type graphs, where the corresponding language consists of all
  graphs that can be mapped homomorphically to a given type graph. In
  this context, we also study languages specified by restriction
  graphs and their relation to type graphs. Second, we extend this
  basic approach to a type graph logic and, third, to type graphs with
  annotations. We present decidability results and closure properties
  for each of the formalisms.
\end{abstract}


\section{Introduction}
\label{sec:introduction}

Formal languages in general and regular languages in particular play
an important role in computer science. They can be used for pattern
matching, parsing, verification and in many other domains. For
instance, verification approaches such as reachability checking,
counterexample-guided abstraction refinement
\cite{cgjlv:abstraction-refinement-journal} and non-termination
analysis \cite{ez:non-termination} could be directly adapted to graph
transformation systems if one had a graph specification formalism with
suitable closure properties, computable pre- and postconditions and
inclusion checks. Inclusion checks are also important for
checking when a fixpoint iteration sequence stabilizes.

While regular languages for words and trees are well-understood and
can be used efficiently and successfully in applications, the
situation is less satisfactory when it comes to graphs. Although the
work of Courcelle \cite{ce:graph-structure-mso} presents an accepted
notion of recognizable graph languages, equivalent to regular
languages, this is often not useful in practice, due to
the sheer size of the resulting graph automata. Other formalisms, such
as application conditions
\cite{r:representing-fol,hp:nested-constraints} and first-order or
second-order logics, feature more compact descriptions, but there are
problems with expressiveness, undecidability issues or unsatisfactory
closure properties.\footnote{A more detailed overview over related
  formalisms is given in the conclusion
  (Section~\ref{sec:conclusion}).}

Hence, we believe that it is important to study and compare
specification formalisms (i.e., automata, grammars and logics) that allow
to specify potentially infinite sets of graphs. In our opinion there
is no one-fits-all solution, but we believe that specification
mechanisms should be studied and compared more extensively.

In this paper we study specification formalisms based on type graphs,
where a type graph $T$ represents all graphs that can be mapped
homomorphically to $T$, potentially taking into account some extra
constraints. Type graphs are common in graph rewriting
\cite{cmr:graph-processes,r:gra-handbook}. Usually, one assumes that
all items, i.e., rules and graphs to be rewritten, are typed,
introducing constraints on the applicability of rules. Hence, type
graphs are in a way seen as a form of labelling. This is different
from our point of view, where graphs (and rules) are -- a priori --
untyped (but labeled) and type graphs are simply a means to represent
sets of graphs.

There are various reasons for studying type graphs: first, they are
reasonably simple with many positive decidability results and they
have not yet been extensively studied from the perspective of
specification formalisms. Second, other specification mechanisms --
especially those used in connection with verification and abstract
graph transformation
\cite{r:canonical-graph-shapes,srw:shape-analysis-3vl,sww:abstract-gts}
-- are based on type graphs: abstract graphs are basically type graphs
with extra annotations. Third, while not being as expressive as
recognizable graph languages, they retain a nice intuition from
regular languages: given a finite state automaton $M$ one can think of the
language of $M$ as the set of all string graphs that can be mapped
homomorphically to $M$ (respecting initial and final states).

We in fact study three different formalisms based on type graphs:
first, pure type graphs $T$, where the language consists simply of all
graphs that can be mapped to $T$. We also  discuss the connection
between type graph and restriction graph languages. Then, in order to
obtain a language with better boolean closure properties, we study
type graph logic, which consists of type graphs enriched with boolean
connectives (negation, conjunction, disjunction). Finally, we consider
annotated type graphs, where the annotations constrain the number of
items mapped to a specific node or edge, somewhat similar to the
proposals from abstract graph rewriting mentioned above.

In all three cases we are interested in closure properties and in
decidability issues (such decidability of the membership, emptiness
and inclusion problems) and in expressiveness.  \full{Proofs for all
  the results can be found in Appendix~\ref{sec:proofs}.}
\short{Proofs for all the results and an extended example for
  annotated type graphs can be found in
  \cite{ckn:specifying-graph-languages-arxiv}.  }
\section{Preliminaries}
\label{sec:preliminaries}

We first introduce graphs and graph morphisms. In the
context of this paper we use edge-labeled, directed graphs.

\begin{definition}[Graph]
  Let $\Lambda$ be a fixed set of edge labels. A
  \emph{$\Lambda$-labeled graph} is a tuple 
    $G = \tpl{V,E,\sSrc,\sTgt,\sLab}$,
  where 
    $V$ is a finite set of nodes, 
    $E$ is a finite set of edges, 
    $\sSrc,\sTgt\colon E\to V$ assign to each edge a source and
      a target node, and 
    $\sLab\colon E\to\Lambda$ is a labeling function.
\end{definition}

We will denote, for a given graph $G$, its components by $V_G$, $E_G$,
$\sSrc[G]$, $\sTgt[G]$ and $\sLab[G]$, unless otherwise indicated. 

\begin{definition}[Graph morphism]
  Let $G,G'$ be two $\Lambda$-labeled graphs. A \emph{graph morphism}
    $\varphi\colon G\to G'$
  consists of two functions 
    $\varphi_V\colon V_G\to V_{G'}$ and 
    $\varphi_E\colon E_G\to E_{G'}$,
  such that for each edge $e\in E_G$ it holds that 
    $\fSrc[G']{\varphi_E(e)} = \phi_V(\fSrc[G]{e})$,
    $\fTgt[G']{\varphi_E(e)} = \phi_V(\fTgt[G]{e})$ and 
    $\fLab[G']{\varphi_E(e)} = \fLab[G]{e}$.
  If $\varphi$ is both injective and surjective it is called an
  isomorphism.
\end{definition}

We will often drop the subscripts $V,E$ and write $\varphi$ instead of
$\phi_V$, $\phi_E$.  We will consider the category $\GR$ having
$\Lambda$-labeled graphs as objects and graph morphisms as arrows. The
set of its objects will be denoted by $\GRL$.
%
%
The categorical structure induces an obvious preorder on graphs,
defined as follows.


\begin{definition}[Homomorphism preorder]
  Given graphs $G$ and $H$, we write $G \to H$ if there is a graph
  morphism from $G$ to $H$ in $\GR$. The relation $\to$ is
  obviously a preorder (i.e.\ it is reflexive and transitive) and we
  call it the \emph{homomorphism preorder} on graphs. We write
  $G \nrightarrow H$ if $G \to H$ does not hold. Graphs $G$ and $H$
  are \emph{homomorphically equivalent}, written $G\sim H$, if
  both $G \to H$ and $H \to G$ hold.


\end{definition}


We will revisit the concept of \emph{retracts} and \emph{cores} from
\cite{nt:duality-finite-structures}.  \emph{Cores} are a convenient
way to minimize type graphs, as, according to
\cite{nt:duality-finite-structures}, all graphs $G,H$ with $G \sim H$
have isomorphic cores.

\begin{definition}[Retract and core]
  A graph $H$ is called a \emph{retract} of a graph $G$ if $H$ is a
  subgraph of $G$ and in addition there exists a
  morphism $\varphi \colon G \to H$. A graph $H$ is called a
  \emph{core} of $G$, written $H = \core{G}$, if it is a retract of $G$ and has itself no proper
  retracts.
\end{definition}

\begin{example}
  The graph $H$ is a retract of $G$, where the morphism $\varphi$ is indicated 
  by the node numbering:
  
\begin{center}
  \begin{tabular}{ccccc}
    $G$ = &
    \begin{tikzpicture}[x=1.5cm,y=-0.6cm,baseline=(base.south)]
      \node[glab] (base) at (0,.5) {} ;
      \node[glab] (top) at (0,-.1) {} ;
      \node[gnode] (1) at (0,0) {} ; 
      \node[glab,below] (lab1) at (1.south) {$1$} ;
      \node[gnode] (2) at (1,0) {} ; 
      \node[glab,below] (lab2) at (2.south) {$2$} ;
      \node[gnode] (3) at (2,0) {} ; 
      \node[glab,below] (lab3) at (3.south) {$3$} ;
      \node[gnode] (4) at (3,0) {} ; 
      \node[glab,below] (lab4) at (4.south) {$4$} ;
      \node[gnode] (5) at (0,1) {} ; 
      \node[glab,below] (lab5) at (5.south) {$5$} ;
      \node[gnode] (6) at (2,1) {} ; 
      \node[glab,below] (lab6) at (6.south) {$6$} ;
      \draw[gedge] (1) to node[arlab,above] {$\mathit{A}$} (2) ;
      \draw[gedge] (5) to node[arlab,below] {$\mathit{A}$} (2) ;
      \draw[gedge] (2) to node[arlab,above] {$\mathit{B}$} (3) ;
      \draw[gedge] (2) to node[arlab,below] {$\mathit{B}$} (6) ;
      \draw[gedge] (4) to node[arlab,above] {$\mathit{B}$} (3) ;
      \graphboxthick[l]{ (1) (5) (4) (lab1) (lab5) (top)}
    \end{tikzpicture}
  & ${\begin{array}{c} \mbox{$\varphi$} \vspace{-0.1cm}\\ 
  \rightleftarrows\vspace{-0.1cm} \\\mbox{$\delta$} \end{array}}$
   &
    \begin{tikzpicture}[x=1.5cm,y=-1.2cm,baseline=(1.south)]
      \node[glab] (top) at (0,-.1) {} ;
      \node[gnode] (1) at (0,0) {} ; 
      \node[glab,below] (lab1) at (1.south) {$1,5$} ;
      \node[gnode] (2) at (1,0) {} ; 
      \node[glab,below] (lab2) at (2.south) {$2,4$} ;
      \node[gnode] (3) at (2,0) {} ; 
      \node[glab,below] (lab3) at (3.south) {$3,6$} ;
      \draw[gedge] (1) to node[arlab,above] {$\mathit{A}$} (2) ;
      \draw[gedge] (2) to node[arlab,above] {$\mathit{B}$} (3) ;
      \graphboxthick[r]{ (1) (3) (lab1) (lab3) (top)}
    \end{tikzpicture}
  & = $H$
  \end{tabular}
\end{center}
  Since the graph $H$ does not have a proper retract itself it is also the core 
  of $G$.
\end{example}

\section{Languages Specified by Type or Restriction Graphs}
\label{sec:type-graphs-new}

In this section we introduce two classes of graph languages that are
characterized by two somewhat dual properties. A \emph{type graph
  language} contains all graphs that can be mapped homomorphically to
a given \emph{type graph}, while a \emph{restriction graph language}
includes all graphs that \emph{do not contain} an homomorphic image of
a given \emph{restriction graph}.  Next, we discuss for these two
classes of languages some properties such as closure under set
operators, decidability of emptiness and inclusion, and decidability
of closure under rewriting via double-pushout rules. Finally we
discuss the relationship between these two classes of graph
languages.

\begin{definition}[Type graph language]
\label{def:typeGraphLanguage}
A \emph{type graph} $T$ is just a $\Lambda$-labeled graph. The language $\mathcal{L}(T)$ is 
defined as:
\[ \mathcal{L}(T) = \{ G \mid G \arright T \}. 
\] 
\end{definition}

  \begin{wrapfigure}{r}{3.1cm}
  \vspace{-1cm}
    \begin{tabular}{rl}
      $T_{\sFlower} = {}$ &
      \begin{tikzpicture}[x=1.5cm,y=-1.2cm,baseline=(1)]
        \node[gnode] (1) at (0,0) {} ;
        \draw[gedge] (1) .. controls +(-20:1cm) and +(20:1cm) .. 
        node[arlab,right] {A} (1) ;
        \draw[gedge] (1) .. controls +(114:1cm) and +(154:1cm) .. 
        node[arlab,above left] {B} (1) ;
        \draw[gedge] (1) .. controls +(216:1cm) and +(256:1cm) ..
        node[arlab,below right] {C} (1) ;
      \end{tikzpicture}
    \end{tabular}
    \vspace{-.9cm}
  \end{wrapfigure}

\begin{example}
\label{ex:typeGraph}
The following type graph $T$ over the edge label set
$\Lambda = \{A,B\}$ specifies a type graph language $\mathcal{L}(T)$
consisting of infinitely many graphs:
\begin{center}
  \begin{tabular}{ccccccccccc}
    {\LARGE $\mathcal{L}$(}
    \begin{tikzpicture}[x=1.5cm,y=-1.2cm,baseline=(1.south)]
      \node[gnode] (1) at (0,0) {} ; 
      \node[gnode] (2) at (.75,0) {} ;
      \draw[gedge] (1) to node[arlab,above] {$\mathit{B}$} (2) ;
      \draw[gedge] (1) .. controls +(160:.75cm) and +(200:.75cm) ..
            node[arlab,left] (loop1) {$\mathit{A}$} (1) ;
    \end{tikzpicture}
    {\LARGE )} = \Bigg\{&
  & {\LARGE $\emptyset$}
  & {\LARGE ,}
    \begin{tikzpicture}[x=1.5cm,y=-1.2cm,baseline=(1.south)]
      \node[gnode] (1) at (0,0) {} ; 
    \end{tikzpicture}
  & {\LARGE ,}
    \begin{tikzpicture}[x=1.5cm,y=-1.2cm,baseline=(1.south)]
      \node[glab] (top) at (0,-.1) {} ;
      \node[gnode] (1) at (0,0) {} ; 
      \node[gnode] (2) at (.75,0) {} ; 
      \node[gnode] (3) at (1.5,0) {} ; 
      \draw[gedge] (1) to node[arlab,above] {$\mathit{A}$} (2) ;
      \draw[gedge] (2) to node[arlab,above] {$\mathit{B}$} (3) ;
    \end{tikzpicture}
  & {\LARGE ,}
   &
    \begin{tikzpicture}[x=1.5cm,y=-1.2cm,baseline=(1.south)]
      \node[gnode] (1) at (0,0) {} ; 
      \node[gnode] (2) at (.75,0) {} ; 
	    \draw[gedge] (1) to[bend right=16] node[glab,below] 
	    (labA1) {$A$} (2) ;
	    \draw[gedge] (2) to[bend right=16] node[glab,above] 
	    (labA2) {$A$} (1) ;
    \end{tikzpicture}
  & {\LARGE ,}
  & {\LARGE \ldots}
  & \Bigg\}
  \end{tabular}
\end{center}
\end{example}

\noindent The category $\GR$ has a final object, that we denote
$T^{\Lambda}_{\sFlower}$, consisting of one node (called \emph{flower
  node} $\sFlower$) and one loop for each label in $\Lambda$.
Therefore $\mathcal{L}(T^{\Lambda}_{\sFlower}) = \GRL$. The graph
$T^{\Lambda}_{\sFlower}$ for $\Lambda = \{A,B,C\}$ is depicted to the
right.

\smallskip

Specifying graph languages using type graphs gives us the possibility
to forbid certain graph structures by not including them into the type
graph.  For example, no graph in the language of
Example~\ref{ex:typeGraph} can contain a $B$-loop or an $A$-edge
incident to the target of a $B$-edge.  However, it is not possible to
force some structures to exist in all graphs of the language, since
the morphism to the type graph need not be surjective. This point will
be addressed with the notion of \emph{annotated type graph} in
Section~\ref{sec:annot-tg}.

Another way (possibly more explicit) to specify languages of graphs not including certain 
structures, is the following one. 

\begin{definition}[Restriction graph language]
\label{def:restrictionGraphLanguage}
A \emph{restriction graph} $R$ is just a $\Lambda$-labeled graph. The language 
$\mathcal{L}_R(R)$ is 
defined as:
\[ \mathcal{L}_R(R) = \{ G \mid R \nrightarrow G\}. 
\] 
\end{definition}

We will consider the relationship between the class of languages
introduced in Definitions~\ref{def:typeGraphLanguage}
and~\ref{def:restrictionGraphLanguage} in
Section~\ref{sec:relatingTypeRestriction}. 

\subsection{Closure and Decidability Properties}
\label{sec:dppure-cppure}

The type graph and restriction graph languages enjoy the following 
complementary closure properties with respect to set operators.

\newcommand{\closurePropsPureAndRestrict}{ Type graph languages are
  closed under intersection (by taking the product of type graphs) but
  not under union or complementation, while restriction graph
  languages are closed under union (by taking the coproduct of
  restriction graphs) but not under intersection or complementation. }

\begin{proposition}
  \label{prop:closure properties of pure and restriction}
  \closurePropsPureAndRestrict
\end{proposition}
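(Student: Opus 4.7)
The plan is to handle the six claims separately, using the categorical structure of $\GR$ for the two positive closures and small explicit counterexamples for the four non-closures.

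For closure of type graph languages under intersection I would set $T_1 \times T_2$ to the categorical product in $\GR$: the universal property of the product gives that $G \to T_1$ and $G \to T_2$ hold simultaneously iff $G \to T_1 \times T_2$, whence $\mathcal{L}(T_1) \cap \mathcal{L}(T_2) = \mathcal{L}(T_1 \times T_2)$. Dually, for closure of restriction graph languages under union I would set $R_1 + R_2$ to the coproduct (disjoint union), so that $R_1 + R_2 \to G$ iff both $R_1 \to G$ and $R_2 \to G$; negating both sides yields $\mathcal{L}_R(R_1) \cup \mathcal{L}_R(R_2) = \mathcal{L}_R(R_1 + R_2)$.

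The four non-closure results all rely on the basic observation that the empty graph $\emptyset$ admits a unique (empty) morphism into every graph. Consequently $\emptyset \in \mathcal{L}(T)$ for every $T$, and $\emptyset \in \mathcal{L}_R(R)$ for every non-empty $R$ while $\mathcal{L}_R(\emptyset) = \emptyset$. For non-closure under complementation, take $T$ to be a single isolated node, so $\mathcal{L}(T)$ is the set of edgeless graphs; its complement excludes $\emptyset$ and hence cannot be any $\mathcal{L}(T')$. Symmetrically, let $R$ be a single $A$-edge; the complement of $\mathcal{L}_R(R)$ is non-empty and excludes $\emptyset$, so it cannot be any $\mathcal{L}_R(R')$. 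For non-closure of type graph languages under union, take $T_1$, $T_2$ to be one-node graphs carrying a single $A$-loop and a single $B$-loop respectively, and argue by contradiction: if $\mathcal{L}(T_1) \cup \mathcal{L}(T_2) = \mathcal{L}(T)$, then $T_1, T_2 \to T$, so $T_1 + T_2 \to T$ by the coproduct property, placing $T_1 + T_2 \in \mathcal{L}(T)$; but $T_1 + T_2$ contains both an $A$- and a $B$-loop, so it lies in neither $\mathcal{L}(T_1)$ nor $\mathcal{L}(T_2)$. The dual argument for restriction graph languages and intersection takes $R_1$, $R_2$ to be single $A$- and $B$-edges: any hypothetical $R$ with $\mathcal{L}_R(R) = \mathcal{L}_R(R_1) \cap \mathcal{L}_R(R_2)$ must map into both the one-$A$-edge and one-$B$-edge graphs, forcing $R$ to be edgeless; but an edgeless $R$ yields a language that is either empty or exactly $\{\emptyset\}$, whereas the intersection contains every edgeless graph.

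I do not expect a real obstacle; the only care required is in selecting $|\Lambda| \geq 2$ so the union and intersection counterexamples are non-trivial, and in cleanly separating the roles of the empty morphism, the empty graph, and empty languages throughout the $\emptyset$-based arguments.
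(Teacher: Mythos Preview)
Your proposal is correct and largely parallels the paper's proof: both use the categorical product and coproduct for the two positive closures, and both use the one-$A$-loop / one-$B$-loop graphs (your variant with single non-loop edges for the restriction case works equally well) to refute closure of type graph languages under union and of restriction graph languages under intersection, via the same ``$R\to R_1$ and $R\to R_2$ force $R$ discrete'' mechanism. The one genuine difference is complementation. The paper does not give a direct counterexample; it derives both non-closures indirectly via De~Morgan, from the already-established failure of union (resp.\ intersection) together with the positive closure under intersection (resp.\ union). Your route is instead a direct argument exploiting that $\emptyset$ lies in every $\mathcal{L}(T)$ and in every $\mathcal{L}_R(R)$ with $R\neq\emptyset$, so any non-empty language that excludes $\emptyset$ cannot be of the respective form. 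Your approach is more self-contained and pinpoints the obstruction explicitly; the paper's is shorter once the other failures are available.
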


Instead the two classes of languages enjoy similar decidability properties.

\newcommand{\decidabilityPureAndRestrict}{ 
    For a graph language $\mathcal{L}$  characterized by a type graph 
    $T$ (i.e.\ $\mathcal{L} = \mathcal{L}(T)$) or by a restriction graph $R$ 
    (i.e.\ $\mathcal{L} = \mathcal{L}_R(R)$) 
   the following problems are decidable:
    \begin{enumerate}
      \item Membership, i.e.\ for each graph $G$ it is decidable if $G \in 
      \mathcal{L}$ holds.
      \item Emptiness, i.e.\ it is decidable if $\mathcal{L} = \emptyset$ holds.
    \end{enumerate}  
    Furthermore, language inclusion 
    is 
    decidable  for both classes of languages:
      \begin{enumerate}  
      \item[3.] Given type graphs $T_1$ and $T_2$ it is 
      decidable if $\mathcal{L}(T_1) \subseteq \mathcal{L}(T_2)$ holds.
      \item[4.] Given  restriction graphs $R_1$ and $R_2$ it is 
      decidable if $\mathcal{L}_R(R_1) \subseteq \mathcal{L}_R(R_2)$ holds.
    \end{enumerate} }

\begin{proposition}
  \label{prop:decidability problems of pure and restriction}
  \decidabilityPureAndRestrict
\end{proposition}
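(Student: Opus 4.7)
The plan is to reduce all four problems to the single task of deciding whether a graph homomorphism between two finite $\Lambda$-labeled graphs exists. This latter problem is decidable by an exhaustive search over the finitely many candidate pairs of maps $\varphi_V\colon V_G \to V_{G'}$, $\varphi_E\colon E_G \to E_{G'}$, checking the (finitely many) source/target/label conditions. Once this is available, the membership parts (1) and (2) are immediate: $G\in\mathcal{L}(T)$ iff $G\to T$, and $G\in\mathcal{L}_R(R)$ iff $R\not\to G$.

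For emptiness (part 2), first observe that $\mathcal{L}(T)$ is never empty, since the identity yields $T\to T$ (and moreover the empty graph is in every type graph language), so the problem is trivial for type graphs. For restriction graphs I would show that $\mathcal{L}_R(R)=\emptyset$ iff $R$ is the empty graph: if $R$ contains at least one node or edge, then there is no morphism from $R$ to the empty graph, so $\emptyset\in\mathcal{L}_R(R)$; conversely, if $R=\emptyset$, the unique empty morphism witnesses $R\to G$ for every $G$, and $\mathcal{L}_R(R)=\emptyset$. Checking whether $R$ has any nodes or edges is obviously decidable.

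For inclusion I would prove the two characterisations
\[
\mathcal{L}(T_1)\subseteq\mathcal{L}(T_2)\ \iff\ T_1\to T_2,\qquad
\mathcal{L}_R(R_1)\subseteq\mathcal{L}_R(R_2)\ \iff\ R_1\to R_2,
\]
each of which reduces the inclusion to one homomorphism test. For (3), the direction ``$\Leftarrow$'' is composition: any $G\to T_1$ composed with $T_1\to T_2$ gives $G\to T_2$. For ``$\Rightarrow$'', instantiate the inclusion at $G=T_1$, which lies in $\mathcal{L}(T_1)$ via the identity, hence must lie in $\mathcal{L}(T_2)$, yielding $T_1\to T_2$. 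For (4), the argument dualises via the contrapositive. The inclusion $\mathcal{L}_R(R_1)\subseteq\mathcal{L}_R(R_2)$ is equivalent to: for every $G$, $R_2\to G$ implies $R_1\to G$. Setting $G=R_2$ (identity morphism) gives $R_1\to R_2$ for ``$\Rightarrow$''; for ``$\Leftarrow$'', if $R_1\to R_2\to G$ then $R_1\to G$ by composition.

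There is no real obstacle: every step rests on the decidability of graph homomorphism together with the functoriality of composition and the availability of the identity morphism. The only mild subtlety is the boundary case of the empty graph in the emptiness argument for restriction languages, which must be handled explicitly since conventions on morphisms out of the empty graph determine whether $\mathcal{L}_R(\emptyset)$ is empty.
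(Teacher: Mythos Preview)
Your proposal is correct and follows essentially the same route as the paper: both reduce membership to a homomorphism test, show emptiness is trivial (type graph languages always contain the empty graph; $\mathcal{L}_R(R)=\emptyset$ iff $R=\emptyset$), and prove the identical characterisations $\mathcal{L}(T_1)\subseteq\mathcal{L}(T_2)\iff T_1\to T_2$ and $\mathcal{L}_R(R_1)\subseteq\mathcal{L}_R(R_2)\iff R_1\to R_2$ via the identity morphism and composition. The paper additionally remarks that membership is NP-complete (via $3$-colouring), but this does not affect decidability and is not something you were required to observe.
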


\subsection{Closure under Double-Pushout Rewriting}
\label{sec:icpure}

In this subsection we are using the DPO approach with general, not
necessarily injective, rules and matches.  We discuss how we can show
that a graph language $\mathcal{L}$ is a closed under a given graph
transformation rule
$\rho = (L \arleft[\varphi_L] I \arright[\varphi_R] R)$, i.e.,
$\mathcal{L}$ is an invariant for $\rho$. This means that for all
graphs $G,H$, where $G$ can be rewritten to $H$ via $\rho$, it holds
that $G \in \mathcal{L}$ implies $H \in \mathcal{L}$.

For both type graph languages and restriction graph languages,
separately, we characterize a sufficient and necessary condition which
shows that closure under rule application is decidable.  The condition
for restriction graph languages is related to a condition already
discussed in \cite{hw:consistency-conditional-gragra}.

\newcommand{\closureUnderRewritingRestrict}{ A restriction graph
  language $\mathcal{L}_R(S)$ is closed under a rule
  $\rho = (L \arleft[\varphi_L] I \arright[\varphi_R] R)$ if and only
  if the following condition holds: for every pair of morphisms
  $\alpha\colon R\to F$, $\beta\colon S\to F$ which are jointly
  surjective, applying the rule $\rho$ with (co-)match $\alpha$
  backwards to $F$ yields a graph $E$ with a homomorphic image of $S$,
  i.e., $E\not\in \mathcal{L}_R(S)$. }

\begin{proposition}[Closure under DPO rewriting for restriction graphs]
  \label{prop:closure-rg}
  \closureUnderRewritingRestrict
\end{proposition}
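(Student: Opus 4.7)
The plan is to prove both directions by combining the reversibility of DPO rewriting with a jointly surjective image factorisation. In each direction, the graph $F$ from the statement will play the role of the overlap in $H$ of the co-match $m^*\colon R\to H$ and a hypothetical occurrence of $S$ in $H$.

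For the \emph{only if} part I would argue by contrapositive. If the condition fails, there exist jointly surjective $\alpha\colon R\to F$ and $\beta\colon S\to F$ such that the backward DPO step on $F$ with co-match $\alpha$ produces a graph $E$ with $S\nrightarrow E$, i.e.\ $E\in\mathcal{L}_R(S)$. By reversibility of DPO rewriting, $E$ rewrites forward to $F$ via $\rho$ with co-match $\alpha$, while $\beta$ is a witness that $F\notin\mathcal{L}_R(S)$; so $\mathcal{L}_R(S)$ is not closed under $\rho$.

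For the \emph{if} part, assume the condition and a step $G\Rightarrow_\rho H$ with match $m\colon L\to G$ and co-match $m^*\colon R\to H$, where $G\in\mathcal{L}_R(S)$. Suppose for contradiction that some $s\colon S\to H$ exists. I would let $F$ be the jointly surjective image of $m^*$ and $s$ inside $H$, with induced $\alpha\colon R\to F$, $\beta\colon S\to F$ and inclusion $\iota\colon F\hookrightarrow H$ satisfying $m^*=\iota\circ\alpha$ and $s=\iota\circ\beta$. I would first observe that the backward step on $F$ via $\alpha$ is defined, by taking for its pushout complement the restriction of the pushout complement $D\to H$ to the items reached through $F$. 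Call $E$ the result; by the hypothesis applied to $(\alpha,\beta)$ we obtain a morphism $S\arright E$. The crucial remaining step is to produce a morphism $E\to G$, after which the composite $S\arright E\to G$ contradicts $G\in\mathcal{L}_R(S)$.

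To build $E\to G$, I would carry out a diagram chase on the two DPO squares for $G\Rightarrow H$ and $E\Rightarrow F$: the pushout complements $D\to H$ and $D'\to F$ fit into a commuting cube with $\iota$, yielding a morphism $D'\to D$ that commutes with the interface maps $I\to D,\,D'$; then the universal property of the pushout that defines $E$ from $L\leftarrow I\to D'$ produces the required $E\to G$ using the cocone $(m\colon L\to G,\ D'\to D\to G)$. The main obstacle will be justifying the existence of the pushout complement for the backward step on $F$ and the commutativity of this cube: both rely on the monicity of $\iota$ together with the transfer of the identification/dangling conditions from $m^*$ to $\alpha$. Since we permit non-injective rules and matches, this step requires a direct diagram chase in $\GR$ rather than a generic appeal to adhesivity.
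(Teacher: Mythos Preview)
Your proposal is correct and follows essentially the same route as the paper: both directions match, and for the ``if'' part you take the joint image $F$ of the co-match and the $S$-occurrence in $H$, apply $\rho$ backwards to obtain $E$, invoke the hypothesis to get $S\to E$, and then produce $E\to G$. The only difference is packaging: the paper obtains the embedded DPO diagram (and hence both the existence of the backward step on $F$ and the mono $E\rightarrowtail G$) in one stroke by citing the pushout-decomposition property of adhesive categories from Lack--Soboci\'nski, whereas you reconstruct these morphisms explicitly via pullbacks and the universal property of the left pushout. Your caveat that non-injective rules and matches force a direct diagram chase is unnecessary: the decomposition relies only on the monicity of $\iota\colon F\rightarrowtail H$ (and the induced pulled-back verticals), not on the rule legs or the match, so the abstract adhesive argument applies as stated in the paper.
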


\newcommand{\closureUnderRewritingTypeGraphs}{ A type graph language
  $\mathcal{L}(T)$ is closed under a rule
  $\rho = (L \arleft[\varphi_L] I \arright[\varphi_R] R)$ if and only if for each
  morphism $t_L\colon L\to \core{T}$, there exists a morphism
  $t_R\colon R\to \core{T}$ such that
  $t_L\circ\varphi_L = t_R\circ\varphi_R$.
    
    \begin{center}
      \begin{tikzpicture}[shorten >=1pt, node distance=15mm and 15mm, on grid]
      	\draw
      		node (L) at (0,0) {\(L\)}
      		node [right=of L] (I) {\(I\)}
      		node [right=of I] (R) {\(R\)}
      		node [below=of I] (T) {\(\core{T}\)}
          node (equi) at (-1,-.7) {\(\Leftrightarrow\)}
          node (equi) at (-4.5,-.7) {$\mathcal{L}(T)$ is closed under application 
          of 
          $\rho$};
      
      	\begin{scope}[decoration={brace, raise=4mm}]
      		\draw[decorate] (L.west) -- node[midway, above=5mm] {\(\rho\)} 
      		(R.east);
      
      	\end{scope}	
      	\begin{scope}[->]
      		\draw (L) -- node[midway, left=2mm] {\(\forall t_L\)}  (T);
      		\draw (I) -- node[midway, above=.1mm] {\(\varphi_L\)} (L);
      		\draw (I) -- node[midway, above=.1mm] {\(\varphi_R\)}(R);
      		\draw (R) -- node[midway, right=2mm] {\(\exists t_R\)} (T);
      	\end{scope}
      \end{tikzpicture}
    \end{center}
  
    }

\begin{proposition}[Closure under DPO rewriting for type graphs]
  \label{prop:closure-tg}
  \closureUnderRewritingTypeGraphs
\end{proposition}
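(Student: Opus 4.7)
The proof splits into two directions, of which the ($\Leftarrow$) direction is essentially a diagram chase in the double pushout, while the ($\Rightarrow$) direction relies crucially on the core property of $\core{T}$.

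For the ($\Leftarrow$) direction, I would take an arbitrary DPO derivation via $\rho$ with match $m \colon L \to G$ producing $G \arleft D \arright H$, and suppose $G \in \mathcal{L}(T)$ witnessed by $t_G \colon G \to \core{T}$. Composing, set $t_L := t_G \circ m$ and $t_D := t_G \circ (D \to G)$; commutation of the left-hand pushout then yields $t_L \circ \varphi_L = t_D \circ (I \to D)$, where $I \to D$ is the morphism into the pushout complement. The hypothesis applied to $t_L$ supplies $t_R \colon R \to \core{T}$ with $t_R \circ \varphi_R = t_L \circ \varphi_L = t_D \circ (I \to D)$, so $t_D$ and $t_R$ form a compatible cocone over $I$. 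The universal property of the right-hand pushout then yields $t_H \colon H \to \core{T}$, and hence $H \in \mathcal{L}(T)$.

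For the ($\Rightarrow$) direction, I would assume closure and, given $t_L \colon L \to \core{T}$, instantiate closure on a carefully chosen test rewrite. The natural choice is to take $G := \core{T}$ with match $m := t_L$ (temporarily assuming the pushout complement exists; otherwise one augments $G$ slightly to satisfy the dangling and identification conditions, which does not affect the argument). The left pushout then produces $D$ together with a canonical morphism $d_T \colon D \to \core{T}$ satisfying $d_T \circ (I \to D) = t_L \circ \varphi_L$. Since $G = \core{T} \in \mathcal{L}(T)$ via the identity, closure provides some $t_H \colon H \to \core{T}$, and the candidate witness is $t_R := t_H \circ (R \to H)$.

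The main obstacle is that the induced morphism $t_H \circ (D \to H) \colon D \to \core{T}$ need not coincide with $d_T$, so the candidate $t_R$ is a priori only compatible with this ``twisted'' morphism, not with $t_L$. Here the defining property of cores is essential: every endomorphism of $\core{T}$ is an automorphism. In the representative subcase where $\varphi_L$ is an isomorphism (so that $D = \core{T}$ and $d_T = \mathrm{id}$), the morphism $t_H \circ (D \to H)$ is itself an endomorphism of $\core{T}$, hence an automorphism $\sigma$; correcting the witness to $t_R := \sigma^{-1} \circ t_H \circ (R \to H)$ then yields $t_R \circ \varphi_R = \sigma^{-1} \circ \sigma \circ t_L = t_L \circ \varphi_L$, as required. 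For a general $\varphi_L$, I expect to reduce to this subcase, either by decomposing $\rho$ into a pure deletion rule followed by a pure addition rule and treating the components separately, or by enlarging $G$ beyond $\core{T}$ so that every morphism $D \to \core{T}$ is forced to agree with $d_T$ on the image of $I$.
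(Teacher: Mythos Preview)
Your $\Leftarrow$ direction is correct and matches the paper's argument.

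Your $\Rightarrow$ direction has the right core insight (use closure on a test rewrite; repair the resulting typing using that every endomorphism of a core is an automorphism) but the general case is a genuine gap. You only complete the argument when $\varphi_L$ is an isomorphism, and neither of your proposed reductions is easy to carry out: ``augmenting $G$'' to force a pushout complement tends to destroy the property that $D$ is a core, which is exactly what your repair step needs; and decomposing $\rho$ into deletion followed by addition does not obviously let you control the intermediate typing well enough to glue the two halves.

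The paper sidesteps both difficulties with one move: instead of taking $G=\core{T}$ and hoping for a pushout complement, it \emph{takes the context to be $\core{T}$} and builds $G$ and $H$ as pushouts. Concretely, set $n:=t_L\circ\varphi_L\colon I\to\core{T}$, let $A$ be the pushout of $(\varphi_L,n)$ and $B$ the pushout of $(\varphi_R,n)$. This is automatically a DPO step $A\Rightarrow_\rho B$ with context $\core{T}$, so no pushout complement has to be constructed. The mediating arrow $A\to\core{T}$ (from $t_L$ and $\mathrm{id}_{\core{T}}$) shows $A\in\mathcal{L}(T)$; closure gives some $g\colon B\to\core{T}$. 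Now the pushout already supplies $f\colon\core{T}\to B$, so $g\circ f$ is an endomorphism of $\core{T}$, hence an automorphism; setting $g':=(g\circ f)^{-1}\circ g$ yields $g'\circ f=\mathrm{id}$, and $t_R:=g'\circ(R\to B)$ then satisfies $t_R\circ\varphi_R = g'\circ f\circ n = n = t_L\circ\varphi_L$. This is exactly your ``iso subcase'' computation, but by making $\core{T}$ the \emph{context} rather than the start graph, it applies for arbitrary $\varphi_L$ without any case analysis.
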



  We show that the \emph{only if part} ($\Rightarrow$) of
  Proposition~\ref{prop:closure-tg} cannot be weakened by considering morphisms 
  to the type graph  $T$, instead of to $\core{T}$.  
In fact, consider the following type
  graph $T$ and the 
  rule $\rho$:
  \begin{center}
    \begin{tabular}{lcr}
      \begin{tabular}{rl}
        $\rho = {}$ &
        \begin{tikzpicture}[x=1.2cm,y=-1.2cm,baseline=(a1.south)]
      	  \begin{scope}[shift={(-2.5,0)}]
      	    \node[gnode] (a2) at ( 0,0) {} ; \node[glab,below] (lab1)
            at (a2.south) {$1$} ; \node[gnode] (a3) at ( 1,0) {} ;
            \node[glab,below] (lab3) at (a3.south) {$2$} ;
            \draw[gedge] (a2) -- node[glab,above] (labA2)
            {$\mathit{A}$} (a3) ; \graphboxthick[l]{ (lab1) (a2) (a3)
              (lab3) (labA2)}
      	  \end{scope}
      	  \begin{scope}
      	    \node[gnode] (a1) at (-.4,0) {} ; \node[glab,below] (lab1)
            at (a1.south) {$1$} ; \node[gnode] (a3) at (.4,0) {} ;
            \node[glab,below] (lab3) at (a3.south) {$2$} ;
            \graphboxthick[i]{(a1) (lab1) (a3) (lab3)}
      	  \end{scope}
      	  \begin{scope}[shift={(2.5,0)}]
      	    \node[gnode] (a1) at (-1,0) {} ; \node[glab,below] (lab1)
            at (a1.south) {$1$} ; \node[gnode] (a2) at ( 0,0) {} ;
            \node[glab,below] (lab3) at (a2.south) {$2$} ;
            \draw[gedge] (a1) -- node[glab,above] (labA1)
            {$\mathit{B}$} (a2) ; \graphboxthick[r]{(a1) (lab1) (a2)
              (lab3) (labA1)}
      	  \end{scope}
      	  \draw[->] \substy{(i.west)}{0} -- \substy{(l.east)}{0} ;
          \draw[->] \substy{(i.east)}{0} -- \substy{(r.west)}{0} ;
        \end{tikzpicture}
      \end{tabular}
      & \qquad \qquad \quad &
      \begin{tabular}{rl}
        $T = {}$ &
        \begin{tikzpicture}[x=1.5cm,y=-1.2cm,baseline=(1.south)]
          \node[gnode] (1) at (0,0) {} ; \node[gnode] (2) at (1,0) {}
          ; \draw[gedge] (2) .. controls +(40:.75cm) and +(80:.75cm)
          ..  node[arlab,above] {$\mathit{A}$} (2) ; \draw[gedge] (2)
          .. controls +(280:.75cm) and +(320:.75cm) ..
          node[arlab,below] {$\mathit{B}$} (2) ; \draw[gedge] (1) to
          node[arlab,above] {$\mathit{A}$} (2) ;
        \end{tikzpicture}
      \end{tabular}
    \end{tabular}
  \end{center}

  The type graph $T$ contains the flower node, i.e., it has
  $T^{\{A,B\}}_{\sFlower}$ as subgraph. This ensures that each graph
  $G$, edge-labeled over $\Lambda = \{A,B\}$, is in the language
  $\mathcal{L}(T)$, and thus by rewriting any graph
  $G \in \mathcal{L}(T)$ into a graph $H$ using $\rho$ it is
  guaranteed that $H \in \mathcal{L}(T)$. However there is a morphism
  $t_L\colon L \to T$, the one mapping the $A$-labeled edge of $L$ to the
  left $A$-labeled edge of $T$, such that there exists no morphism
  $t_R \colon R \to T$ satisfying
  $t_L\circ\varphi_L = t_R\circ\varphi_R$.
  
%
\subsection{Relating Type graph and Restriction Graph Languages}
\label{sec:relatingTypeRestriction}

Both type graph and restriction graph languages specify collections of
graphs by forbidding the presence of certain structures. This is more
explicit with the use of restriction graphs, though. A natural
question is how the two classes of languages are related. A partial
answer to this is provided by the notion of \emph{duality pairs} and
by an important result concerning their existence, presented in
\cite{nt:duality-finite-structures}.\footnote{Note that in
  \cite{nt:duality-finite-structures} graphs are simple, but it can be
  easily seen that for our purposes the results can be transferred
  straightforwardly.}

\begin{definition}[Duality pair]
Given two graphs $R$ and $T$, we call $T$ the \emph{dual} of $R$ if for every graph 
$G$ it holds that $G \to T$ if and only if $R \nrightarrow G$. In this case the pair 
$(R,T)$ is called \emph{duality pair}. 
\end{definition}

Clearly, we have that $(R,T)$ is a duality pair if and only if the restriction graph language
$\mathcal{L}_R(R)$ coincides with the type graph language $\mathcal{L}(T)$.

\begin{example}
  Let $\Lambda = \{A,B\}$ be given. The following is a 
  duality pair:
\begin{center}
  \begin{tabular}{ccccc}
    $(R,T)$ = \Bigg(&
    \begin{tikzpicture}[x=1.5cm,y=-1.2cm,baseline=(1.south)]
      \node[glab] (top) at (0,-.1) {} ;
      \node[gnode] (1) at (0,0) {} ; 
      \node[glab,below] (lab1) at (1.south) {$1$} ;
      \node[gnode] (2) at (.75,0) {} ; 
      \node[glab,below] (lab2) at (2.south) {$2$} ;
      \node[gnode] (3) at (1.5,0) {} ; 
      \node[glab,below] (lab3) at (3.south) {$3$} ;
      \draw[gedge] (1) to node[arlab,above] {$\mathit{A}$} (2) ;
      \draw[gedge] (2) to node[arlab,above] {$\mathit{B}$} (3) ;
    \end{tikzpicture}
  & {\LARGE ,}
   &
    \begin{tikzpicture}[x=1.5cm,y=-1.2cm,baseline=(1.south)]
      \node[glab] (top) at (0,-.15) {} ;
      \node[gnode] (1) at (0,0) {} ; 
      \node[glab,below] (lab1) at (1.south) {$1$} ;
      \node[gnode] (2) at (.75,0) {} ; 
      \node[glab,below] (lab2) at (2.south) {$2$} ;
      \draw[gedge] (2) to node[arlab,above] {$\mathit{A,B}$} (1) ;
      \draw[gedge] (1) .. controls +(160:1cm) and +(200:1cm) ..
            node[arlab,left] (loop1) {$\mathit{A}$} (1) ;
      \draw[gedge] (2) .. controls +(-20:1cm) and +(20:1cm) ..
            node[arlab,right] (loop2) {$\mathit{B}$} (2) ;
    \end{tikzpicture}
  & \Bigg)
  \end{tabular}
\end{center}
%
%
\noindent
Since node 1 of $T$ is not the source of a $B$-labeled edge and node 2
is not the target of an $A$-labeled edge, for every graph $G$ we have
$G \to T$ iff it does not contain a node which is both the target of
an $A$-labeled edge and the source of a $B$-labeled edge. But it
contains such a node if and only if $R \to G$.
\end{example}

One can identify the class of restriction graphs for which a
corresponding type graph exists which defines the same graph
language. Results from \cite{nt:duality-finite-structures}
state\footnote{We refer to Lemma~2.3, Lemma~2.5 and Theorem~3.1 in
  \cite{nt:duality-finite-structures}.} that given a core graph $R$, a
graph $T$ can be constructed such that $(R,T)$ is a duality pair if
and only if $R$ is a tree.


Thus we have a precise characterisation of the intersection of the
classes of type and restriction graph languages: $\mathcal{L}$ belongs
to the intersection if and only if it is of the form
$\mathcal{L} = \mathcal{L}_R(R)$ and $\core{R}$ is a tree. It is worth
mentioning that the construction of $T$ from $R$ using the results
from \cite{nt:duality-finite-structures} contains two exponential
blow-ups. This can be interpreted by saying that type graphs have
limited expressiveness if used to forbid the presence of certain
structures.
%

\section{Type Graph Logic}
\label{sec:tg-logic}

In this section we investigate the possibility to define a language of
graphs using a logical formula over type graphs.  We start by defining
the syntax and semantics of a type graph logic ($\TGL$).  



\begin{definition}[Syntax and semantics of \TGL]
  A $\TGL$ formula $F$ over a fixed set of edge labels $\Lambda$ is
  formed according to the following grammar:
  \[
    F := T\ |\ F \lor F\ |\ F \land F\ |\ \lnot F, \qquad 
    \text{where $T$ is a type graph.}
  \]
\noindent
Each $\TGL$ formula $F$ denotes a graph language
$\mathcal{L}(F)\subseteq \GRL$ defined by structural induction as
follows:
%
\begin{align*}
  \mathcal{L}(T) &= \{G \in \GRL \mid G \to T\} & 
  \mathcal{L}(\lnot F) &= \GRL \setminus \mathcal{L}(F) \\
  \mathcal{L}(F_1 \land F_2) &= \mathcal{L}(F_1) 
  \cap \mathcal{L}(F_2) 
  & \mathcal{L}(F_1 \lor F_2) &= \mathcal{L}(F_1) \cup \mathcal{L}(F_2)
\end{align*}
\end{definition}

Clearly, due to the presence of boolean connectives, boolean closure
properties come for free.

\begin{example}
  Let the following $\TGL$ formula $F$ over 
  $\Lambda = 
  \{A,B\}$ be given:
  \begin{center}
    \begin{tabular}{ccc}
      {\large $F = \lnot$} &
      \begin{tikzpicture}[x=1.5cm,y=-1.2cm,baseline=(1.south)]
      	    \node[gnode] (a1) at (0,0) {} ;
      	    \draw[gedge] (a1) .. controls +(-20:1cm) and +(20:1cm) .. 
      	            node[arlab,right] (lab1) {A} (a1) ;
      \end{tikzpicture}
      {\Large $\land \ \lnot$} &
      \begin{tikzpicture}[x=1.5cm,y=-1.2cm,baseline=(1.south)]
      	    \node[gnode] (a1) at (0,0) {} ;
      	    \draw[gedge] (a1) .. controls +(-20:1cm) and +(20:1cm) .. 
      	            node[arlab,right] (lab1) {B} (a1) ;
      \end{tikzpicture}
    \end{tabular}
  \end{center}
  The graph language $\mathcal{L}(F)$ consists of all graphs
  which do not consist exclusively of $A$-edges or of
    $B$-edges, i.e., which contain at least one $A$-labeled edge and
  at least one $B$-labeled edge,
something that can not be expressed by pure type
  graphs.
\end{example}

We  now present some positive results for graph languages $\mathcal{L}(F)$ over $\TGL$ 
formulas $F$ with respect to decidability problems.
%
%
Due to the conjunction and negation operator, the emptiness (or
unsatisfiability) check is not as trivial as it is for pure type
graphs.
Note that thanks to the presence of boolean connectives,
inclusion can be reduced to emptiness.

\newcommand{\decidabilityTypeGraphLogic}{ 
  For a graph language $\mathcal{L}(F)$ characterized by a $\TGL$ formula 
  $F$, the following problems are decidable:
  \begin{itemize}
    \item Membership, i.e. for all graphs $G$ it is decidable if $G \in 
    \mathcal{L}(F)$ holds.
    \item Emptiness, i.e. it is decidable if $\mathcal{L}(F) = \emptyset$ holds.
    \item Language inclusion, i.e. given two $\TGL$ formulas $F_1$ and $F_2$ it 
    is decidable if $\mathcal{L}(F_1) \subseteq \mathcal{L}(F_2)$ holds.
  \end{itemize} }

\begin{proposition}
  \label{prop:decidability problems of the type graph logic}
  \decidabilityTypeGraphLogic
\end{proposition}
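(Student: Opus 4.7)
The plan is to dispatch membership by a direct structural induction, decide emptiness by normalizing $F$ into disjunctive normal form and reducing each clause to a few homomorphism tests, and obtain language inclusion by a standard reduction to emptiness using the boolean closure of $\TGL$.

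For membership, given $G$ and $F$, I would recurse on the shape of $F$. In the base case $F = T$, deciding $G \in \mathcal{L}(T)$ amounts to testing whether $G \to T$, which I can do by enumerating the finitely many candidate pairs of functions $\phi_V \colon V_G \to V_T$ and $\phi_E \colon E_G \to E_T$ and checking the source, target and label conditions. The inductive cases for $\land$, $\lor$ and $\lnot$ follow immediately from the definition of $\mathcal{L}$.

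For emptiness, I would rewrite $F$ into disjunctive normal form, so that $\mathcal{L}(F)$ is the union of the languages of finitely many conjunctive clauses, and it suffices to decide emptiness of one clause $D = T_1 \land \cdots \land T_n \land \lnot S_1 \land \cdots \land \lnot S_m$. By Proposition~\ref{prop:closure properties of pure and restriction} the positive part is equivalent to a single type graph $T$ obtained as the product $T_1 \times \cdots \times T_n$, where for $n = 0$ I take $T = T^{\Lambda}_{\sFlower}$. Hence $\mathcal{L}(D) = \{G \mid G \to T \text{ and } G \nrightarrow S_k \text{ for all } k\}$. The central claim I would prove is
\[
\mathcal{L}(D) \neq \emptyset \quad \Longleftrightarrow \quad T \nrightarrow S_k \text{ for every } k.
\]
For $(\Leftarrow)$, the graph $G = T$ itself is a witness: the identity yields $T \to T$ and the hypothesis gives $T \nrightarrow S_k$. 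For $(\Rightarrow)$, suppose $T \to S_k$ for some $k$; then any $G$ with $G \to T$ also admits a morphism $G \to S_k$ by composition, contradicting $G \nrightarrow S_k$. The right-hand condition is decidable by $m$ homomorphism tests of the kind already handled in the membership case.

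For language inclusion, I would use the reduction $\mathcal{L}(F_1) \subseteq \mathcal{L}(F_2)$ iff $\mathcal{L}(F_1 \land \lnot F_2) = \emptyset$, valid because $\TGL$ is closed under $\land$ and $\lnot$, and invoke the emptiness procedure above. The main obstacle is not a conceptual one but a complexity-theoretic one: DNF normalization can cause an exponential blow-up and homomorphism checking is itself NP-complete, so the resulting algorithm is far from efficient; however, only decidability is claimed here. A minor subtlety is the case of a conjunctive clause with no positive literals, which I handle uniformly by taking the implicit positive type graph to be the final object $T^{\Lambda}_{\sFlower}$, ensuring that the witness $G = T$ is always well-defined.
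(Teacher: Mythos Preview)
Your proposal is correct and follows essentially the same route as the paper: membership by structural recursion on $F$, emptiness by DNF normalization with collapsing of positive literals into a single type graph (using the product, with $T^{\Lambda}_{\sFlower}$ as the empty product) and then the key observation that the clause is nonempty iff $T$ itself is a witness, and inclusion via the reduction $\mathcal{L}(F_1)\subseteq\mathcal{L}(F_2)\iff\mathcal{L}(F_1\land\lnot F_2)=\emptyset$. Your criterion $T\nrightarrow S_k$ for all $k$ is exactly the contrapositive of the paper's condition that the clause is empty iff $T\to S_k$ for some $k$.
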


  Such a logic could alternatively also be defined based on
  restriction graphs. A related logic, for injective occurrences of
  restriction graphs, is studied in
  \cite{oep:logic-graph-constraints}, where the authors also give a
  decidability result via inference rules.






\section{Annotated Type Graphs}
\label{sec:annot-tg}

In this section we will improve the expressiveness of the type graphs
themselves, rather than using an additional logic to do so. We will
equip graphs with additional annotations. As explained in the
introduction, this idea was already used similarly in abstract graph
rewriting. In contrast to most other approaches, we will
investigate the problem from a categorical point of
view. 

The idea we follow is to annotate each element of a type graph with
pairs of multiplicities, denoting upper and lower bounds. We will
define a category of multiply annotated graphs, where we consider
elements of a lattice-ordered monoid (short $\ell$-monoid) as
multiplicities.

\begin{definition}[Lattice-ordered monoid]
  A lattice-ordered monoid ($\ell$-\linebreak monoid)
  $(\mathcal{M},+,\leq)$
  consists of a set $\mathcal{M}$, a partial order $\le$ and a binary
  operation $+$ such that
  \begin{itemize}
  \item $(\mathcal{M},\le)$ is a lattice.
  \item $(\mathcal{M},+)$ is a monoid; we denote its unit by $0$.
  \item It holds that $a+(b\lor c) = (a+b) \lor (a+c)$ and $a+(b\land
    c) = (a+b)\land (a+c)$, where $\land,\lor$ are the meet and join
    of $\le$.
  \end{itemize} 
  \noindent
  We denote by $\LMon$ the category having $\ell$-monoids as objects
  and as arrows monoid homomorphisms which are monotone. 
\end{definition}

\begin{example}
  \label{ex:lmonoid}
  Let $n \in \mathbb{N}\backslash\{0\}$ and take
  $\mathcal{M}_n = \{0,1,\dots,n,m\}$ (zero, one, $\dots$, $n$, many)
  with $0 \leq 1 \leq \dots\leq n\leq m$ and addition as monoid
  operation with the proviso that $\ell_1+\ell_2=m$ if the sum is
  larger than $n$. Clearly, for all $a,b,c \in \mathcal{M}_n$
  $a \lor b = \text{max}\{a,b\}$ and $a \land b = \text{min}\{a,b\}$.
  From this we can infer distributivity and therefore
  $(\mathcal{M}_n,+,\leq)$ forms an $\ell$-monoid.
  
  Furthermore, given a set $S$ and an $\ell$-monoid
  $(\mathcal{M},+,\leq)$, it is easy to check that also
  $(\{a\colon S \to \mathcal{M}\},+,\leq)$ is an $\ell$-monoid, where
  the elements are functions from $S$ to $\mathcal{M}$ and the partial
  order and the monoidal operation are taken pointwise.
  
  In the following we will sometimes denote an $\ell$-monoid by its
  underlying set.
\end{example}

%
\begin{definition}[Annotations and multiplicities for graphs]
  Given a functor $\mathcal{A}\colon \GR \to \LMon$, an
  \emph{annotation based on $\mathcal{A}$} for a graph $G$ is an
  element $a \in \mathcal{A}(G)$.  We write $\mathcal{A}_\phi$,
  instead of $\mathcal{A}(\phi)$, for the action of functor
  $\mathcal{A}$ on a graph morphism $\phi$.  We assume that for each
  graph $G$ there is a \emph{standard annotation} based on
  $\mathcal{A}$ that we denote by $s_G$, thus
  $s_G \in \mathcal{A}(G)$.
  
  \noindent Given an $\ell$-monoid $\mathcal{M}_n = \{0,1,\dots,n,m\}$
  we define the functor $\mathcal{B}^n:\GR \to \LMon$ as follows:
 \begin{itemize}
 \item for every graph $G$, $\mathcal{B}^n(G) = \{a\colon (V_G \cup E_G) \to 
 \mathcal{M}_n\}$;

\item for every graph morphism $\varphi \colon G \to G'$ and $a \in
  \mathcal{B}^n(G)$, we have \\
  $\mathcal{B}^n_{\varphi}(a) \colon V_{G'} \cup E_{G'} \to
  \mathcal{M}_n$ with:
  \[ \mathcal{B}^n_{\varphi}(a)(y) = \sum\limits_{\varphi(x)=y}^{} a(x), \quad 
  \textit{where } x \in (V_G \cup E_G) \textit{ and } y \in (V_{G'} \cup 
  E_{G'}) \]
\end{itemize}
\noindent Therefore an annotation based on a functor $\mathcal{B}^n$
associates every item of a graph with a number (or the top value
$m$). We will call such kind of annotations \emph{multiplicities}.
Furthermore, the action of the functor on a morphism transforms a
multiplicity by summing up (in $\mathcal{M}_n$) the values of all
items of the source graph that are mapped to the same item of the
target graph.

For a graph $G$, its \emph{standard multiplicity}  $s_G \in \mathcal{B}^n(G)$ 
is defined  as the function which maps every node and edge of $G$ to 1.
%
\end{definition}

Some of the results that we will present in the rest of the paper will
hold for annotations based on a generic functor $\mathcal{A}$, some 
only for annotations based on functors $\mathcal{B}^n$, i.e.~for multiplicities.

The type graphs which we are going to consider are enriched with a set
of pairs of annotations. The motivation for considering multiple
  annotations rather than a single one is mainly to ensure closure
  under union. Each pair can be interpreted as establishing a
lower and an upper bound to what a graph morphism can map to the
graph.

\begin{definition}[Multiply annotated graphs]
  Given a functor $\mathcal{A}\colon \GR \to \LMon$, a \emph{multiply
    annotated graph $G[M]$ (over $\mathcal{A}$)} is a graph $G$
  equipped with a finite set of pairs of annotations
  $M \subseteq \mathcal{A}(G)\times \mathcal{A}(G)$, such that
  $\ell \leq u$ for all $(\ell,u) \in M$.
  
  An arrow $\phi\colon G[M]\to G'[M']$, also called a \emph{legal
    morphism}, is a graph morphism $\phi\colon G\to G'$ such that for
  all $(\ell,u) \in M$ there exists $(\ell',u') \in M'$ with
  $\mathcal{A}_\phi(\ell)\ge \ell'$ and $\mathcal{A}_\phi(u) \le
  u'$. We will write $G[\ell,u]$ as an abbreviation of
  $G[\{(\ell,u)\}]$.  In case of annotations based on $\mathcal{B}^n$,
  we will often call a pair $(\ell, u)$ a \emph{double multiplicity}.
\end{definition}

Multiply annotated graphs and legal morphisms form a category.

\newcommand{\compOfLegalMorph}{ The composition of two legal morphisms
  is a legal
  morphism. }

\begin{lemma}
  \label{le:composition of legal morphism}
  \compOfLegalMorph
\end{lemma}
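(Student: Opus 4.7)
The plan is to prove the lemma by direct unfolding of the legal-morphism condition, using two properties: the functoriality of $\mathcal{A}$ (which gives $\mathcal{A}_{\psi\circ\phi} = \mathcal{A}_\psi \circ \mathcal{A}_\phi$) and the fact that arrows in $\LMon$, and hence $\mathcal{A}_\psi$, are monotone monoid homomorphisms.

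Concretely, suppose $\phi\colon G[M] \to G'[M']$ and $\psi\colon G'[M'] \to G''[M'']$ are legal morphisms. I would pick an arbitrary pair $(\ell,u) \in M$. First, applying legality of $\phi$ produces a pair $(\ell',u')\in M'$ with $\mathcal{A}_\phi(\ell) \ge \ell'$ and $\mathcal{A}_\phi(u)\le u'$. Then, applying legality of $\psi$ to this intermediate pair yields $(\ell'',u'')\in M''$ with $\mathcal{A}_\psi(\ell')\ge \ell''$ and $\mathcal{A}_\psi(u')\le u''$. I would then combine these by chaining: using monotonicity of $\mathcal{A}_\psi$, the inequality $\mathcal{A}_\phi(\ell)\ge \ell'$ gives $\mathcal{A}_\psi(\mathcal{A}_\phi(\ell))\ge \mathcal{A}_\psi(\ell')\ge \ell''$, and symmetrically $\mathcal{A}_\psi(\mathcal{A}_\phi(u))\le \mathcal{A}_\psi(u')\le u''$. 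Functoriality of $\mathcal{A}$ then identifies $\mathcal{A}_\psi \circ \mathcal{A}_\phi$ with $\mathcal{A}_{\psi\circ\phi}$, so the pair $(\ell'',u'')$ witnesses that $\psi\circ\phi$ is legal on $(\ell,u)$.

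There is essentially no obstacle here: the lemma is a bookkeeping result asserting that the two conditions imposed on legal morphisms (a lower bound preserved up to $\ge$, an upper bound preserved up to $\le$) are both compatible with composition, and both compatibilities follow immediately from monotonicity together with functoriality of $\mathcal{A}$. The only thing worth being slightly careful about is ensuring that the witnessing pair $(\ell'',u'')$ satisfies $\ell''\le u''$, but this is already guaranteed by the definition of multiply annotated graphs, which requires $\ell\le u$ for every pair of annotations in $M''$. I would also briefly note that the identity graph morphism $\mathrm{id}_G\colon G[M]\to G[M]$ is trivially legal (taking $(\ell',u')=(\ell,u)$ and using $\mathcal{A}_{\mathrm{id}_G} = \mathrm{id}_{\mathcal{A}(G)}$), so together with the closure under composition this actually yields the category of multiply annotated graphs and legal morphisms announced right after the lemma.
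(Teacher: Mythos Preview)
Your proof is correct and follows essentially the same route as the paper's own argument: pick a pair $(\ell,u)$ in the source, use legality of the first morphism to obtain an intermediate pair, use legality of the second morphism to obtain a target pair, then chain the inequalities via monotonicity of $\mathcal{A}_\psi$ and collapse the composite via functoriality of $\mathcal{A}$. Your additional remark on the identity morphism being legal is not part of the paper's proof of the lemma, but it matches the paper's subsequent claim that multiply annotated graphs and legal morphisms form a category.
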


\begin{example}
  \label{ex:graphwdm}
  Consider the following multiply annotated graphs (over
  $\mathcal{B}^2$) $G[\ell,u]$ and $H[\ell',u']$, both having one
  double multiplicity.
  
  
  \begin{center}
    \begin{tabular}{ccccc}
      $G[\ell,u] = $ &
      \begin{tikzpicture}[x=1.5cm,y=-1.2cm,baseline=(1.south)]
      	    \node[gnode] (a1) at (0,0) {} ;
            \node[glab,below] (labm1) at (a1.south) {[1,1]} ;
      	    \node[gnode] (a2) at (1,0) {} ;
            \node[glab,below] (labm2) at (a2.south) {[1,m]} ;
      	    \draw[gedge] (a1) to
      	            node[arlab,above] (lab1) {A [0,1]} (a2) ;
      \end{tikzpicture}
      & \qquad \qquad \qquad & $H[\ell',u'] = $ &
      \begin{tikzpicture}[x=1.5cm,y=-1.2cm,baseline=(1.south)]
      	    \node[gnode] (a1) at (0,0) {} ;
            \node[glab,below] (labm1) at (a1.south) {[1,m]} ;
      	    \draw[gedge] (a1) .. controls +(-20:1cm) and +(20:1cm) .. 
      	            node[arlab,right] (lab1) {A [0,m]} (a1) ;
      \end{tikzpicture}
    \end{tabular}
  \end{center}
  
  As evident from the picture, multiplicities are represented by writing 
  the lower and upper bounds next to the corresponding graph elements.
  Note that there is a unique, obvious graph morphism $\varphi \colon G \to H$, mapping both
  nodes of $G$ to the only node of $H$. Concerning multiplicities, by adding the lower and upper
  bounds of the two nodes of $G$, one gets the interval $[2,m]$ which is included in the 
  interval of the node of $H$, $[1,m]$. Similarly, the double multiplicity $[0,1]$ of the edge of
  $G$ is included in $[0,m]$.
  Therefore, since both $\mathcal{B}^2_{\varphi}(\ell) \geq \ell'$ and
  $\mathcal{B}^2_{\varphi}(u) \leq u'$ hold, we can conclude that
  $\varphi \colon G[\ell,u] \to H[\ell',u']$ is a legal morphism.
\end{example}
%

%
%
%

We are
now ready to define how a graph language $\mathcal{L}(T[M])$ looks
like.

\begin{definition}[Graph languages of multiply annotated type graphs]
  We say that a graph $G$ is represented by a multiply annotated type 
  graph $T[M]$ whenever there exists a legal 
  morphism $\phi\colon G[s_G,s_G]\to T[M]$, i.e., there exists $(\ell,u) \in 
  M$ such that $\ell \le \mathcal{A}_\phi(s_G)\le u$. We will write $G\in 
  \mathcal{L}(T[M])$ in this case. Whenever $M = \emptyset$ for a multiply 
  annotated type graph $T[M]$ we get $\mathcal{L}(T[M]) = \emptyset$.
\end{definition}


\full{An extended example can be found in 
Appendix~\ref{sec:exampleAnnotGraphs}.}
\short{An extended example can be found in 
\cite{ckn:specifying-graph-languages-arxiv}.} 

\subsection{Decidability Properties for Multiply Annotated Graphs}
\label{sec:dpannot}

We now address some 
decidability problems for languages defined by multiply annotated graphs. We get positive results with
respect to the membership  and  emptiness problems. However,
for decidability of language inclusion  we only get partial results.

For the membership problem we can simply enumerate all graph morphisms
$\varphi \colon G \to T$ and check if there exists a legal morphism
$\varphi \colon G[s_G,s_G] \to T[M]$. 

The emptiness check is somewhat more involved, since we have to take
care of ``illegal'' annotations. 

\newcommand{\emptinessCheckAnnot}{ 
  For a graph language $\mathcal{L}(T[M])$ characterized by a multiply
  annotated type graph $T[M]$ over  $\mathcal{B}^n$
  the emptiness problem is decidable:
  $\mathcal{L}(T[M]) = \emptyset$ iff $M = \emptyset$ or for each
  $(\ell,u)\in M$ there exists an edge $e\in E_T$ such that
  $\ell(e) \ge 1$ and $(u(\sSrc(e)) = 0$ or $u(\sTgt(e)) = 0)$. }

\begin{proposition}
  \label{prop:emptiness is decidable for annotated type graphs}
  \emptinessCheckAnnot
\end{proposition}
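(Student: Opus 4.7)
The plan is to prove the stated biconditional directly; decidability of emptiness then follows because the right-hand condition is a finite syntactic check, enumerating the pairs in $M$ and the edges of $T$.

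For the $(\Leftarrow)$ direction, the case $M=\emptyset$ is immediate from the definition. Otherwise, assume every $(\ell,u)\in M$ admits a ``bad'' edge $e$ with $\ell(e)\geq 1$ and (say) $u(\sSrc(e))=0$. If a graph $G$ and a legal morphism $\varphi\colon G[s_G,s_G]\to T[M]$ existed, witnessed by some $(\ell,u)\in M$, then $\mathcal{B}^n_\varphi(s_G)(e)\geq \ell(e)\geq 1$ would force some $e'\in E_G$ with $\varphi(e')=e$, hence $\varphi(\sSrc(e'))=\sSrc(e)$, and therefore $\mathcal{B}^n_\varphi(s_G)(\sSrc(e))\geq 1$, contradicting $u(\sSrc(e))=0$.

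For the $(\Rightarrow)$ direction I would argue contrapositively: assume $M\neq\emptyset$ and fix a ``good'' pair $(\ell,u)\in M$ such that, for every $e\in E_T$, either $\ell(e)=0$ or both $u(\sSrc(e)),u(\sTgt(e))\geq 1$. I would then construct a concrete witness $G$ with a morphism $\varphi\colon G\to T$. For each $x\in V_T\cup E_T$ pick a multiplicity $k_x\in\mathbb{N}$ by setting $k_x=\ell(x)$ if $\ell(x)\leq n$ and $k_x=n+1$ if $\ell(x)=m$; then bump $k_v$ up to $1$ for each $v\in V_T$ that is source or target of some edge $e$ with $k_e\geq 1$. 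Let $G$ contain $k_v$ fresh nodes mapped to $v$ and $k_e$ fresh edges mapped to $e$, all sharing one fixed preimage of $\sSrc(e)$ and of $\sTgt(e)$ as their endpoints.

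Verifying $\ell\leq\mathcal{B}^n_\varphi(s_G)\leq u$ pointwise reduces to a short case analysis on whether $\ell(x)$ and $u(x)$ lie in $\{0,\dots,n\}$ or equal $m$, using that summation in $\mathcal{M}_n$ caps at $m$ beyond $n$. The main subtlety, and the sole place the good-edge hypothesis is used, is the bump $k_v\mapsto 1$: in general this could violate the upper bound, but the hypothesis guarantees $u(v)\geq 1$ precisely when such a bump is needed, so $k_v\leq u(v)$ is preserved. This tight interplay between lower bounds on edges and upper bounds on their endpoints is what pins down the exact boundary between the two sides of the iff.
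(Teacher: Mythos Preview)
Your proposal is correct and follows essentially the same route as the paper. Both directions match: the $(\Leftarrow)$ argument is identical, and for $(\Rightarrow)$ both you and the paper construct a witness graph by taking, over each item of $T$, enough copies to meet the lower bound, then check that the one place where an extra node is forced (an endpoint of a required edge whose own lower bound is $0$) is covered by the good-edge hypothesis guaranteeing $u\geq 1$ there. The only organisational difference is that the paper first passes to the subgraph $T'$ of items with $u\geq 1$ and builds $G$ on top of $T'$ (so every node of $T'$ automatically gets at least one preimage), whereas you build $G$ directly over $T$ and make the ``bump to $1$'' explicit; you also spell out the case $\ell(x)=m$ by taking $k_x=n+1$, which the paper leaves implicit.
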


Language inclusion can be deduced from the existence of a legal morphism 
between the two multiply annotated type graphs. 

\newcommand{\languageinclusionWithLegalMorphism}{ The existence of a
  legal morphism $\varphi \colon T_1[M] \to T_2[N]$ implies
  $\mathcal{L}(T_1[M]) \subseteq \mathcal{L}(T_2[N])$. }

\begin{proposition}
  \label{prop:language inclusion using legal morphisms}
  \languageinclusionWithLegalMorphism
\end{proposition}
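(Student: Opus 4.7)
The plan is to derive the language inclusion directly from the definition of $\mathcal{L}(T[M])$ together with Lemma~\ref{le:composition of legal morphism} (closure of legal morphisms under composition). So this is essentially a one-line argument, with the real content already encapsulated in that earlier lemma.

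More precisely, I would start by picking an arbitrary graph $G \in \mathcal{L}(T_1[M])$. By the definition of the language of a multiply annotated type graph, this is equivalent to the existence of a legal morphism $\psi \colon G[s_G, s_G] \to T_1[M]$, i.e., a graph morphism $\psi \colon G \to T_1$ together with a pair $(\ell,u) \in M$ such that $\ell \le \mathcal{A}_\psi(s_G) \le u$.

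Next, I would compose $\psi$ with the given legal morphism $\varphi \colon T_1[M] \to T_2[N]$. By Lemma~\ref{le:composition of legal morphism}, the composition $\varphi \circ \psi \colon G[s_G, s_G] \to T_2[N]$ is itself a legal morphism. Unfolding the definition again, this means precisely that $G \in \mathcal{L}(T_2[N])$, completing the proof of the inclusion. Since $G$ was arbitrary, we obtain $\mathcal{L}(T_1[M]) \subseteq \mathcal{L}(T_2[N])$.

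There is no real obstacle here: all the technical work on how lower and upper bounds behave under the functor $\mathcal{A}$ and monotonicity of the $\ell$-monoid homomorphisms has already been discharged in Lemma~\ref{le:composition of legal morphism}. The only thing I would double-check is that the standard annotation $s_G$ plays the same role on both sides, which it does, since $G[s_G, s_G]$ does not change when we change the target of the legal morphism from $T_1[M]$ to $T_2[N]$. Note also that the converse implication does not hold in general, which is why the statement is phrased as a sufficient condition only.
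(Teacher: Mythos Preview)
Your proposal is correct and follows essentially the same approach as the paper: take an arbitrary $G \in \mathcal{L}(T_1[M])$, obtain a legal morphism $G[s_G,s_G] \to T_1[M]$, compose with the given legal morphism $\varphi$, and invoke Lemma~\ref{le:composition of legal morphism} to conclude $G \in \mathcal{L}(T_2[N])$. The paper's proof is identical in structure and content.
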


We would like to remark that this condition is sufficient but not
necessary, and we present the following counterexample. Let the
following two multiply annotated type graphs $T_1[M_1]$ and $T_2[M_2]$
over $\mathcal{B}^1$ be given where $|M_1|=|M_2|=1$:

  \begin{center}
    \begin{tabular}{ccccc}
      $T_1[M_1] = $ &
      \begin{tikzpicture}[x=1.5cm,y=-1.2cm,baseline=(1.south)]
  	    \node[gnode] (a1) at (0,0) {} ;
   	    \node[glab,below] (lab1) at (a1.south) {$[1,m]$} ;
      \end{tikzpicture}
      & \qquad \qquad \qquad & $T_2[M_2] = $ &
      \begin{tikzpicture}[x=1.5cm,y=-1.2cm,baseline=(1.south)]
   	    \node[gnode] (b1) at (0,0) {} ;
   	    \node[glab,below] (labb1) at (b1.south) {$[1,1]$} ;
   	    \node[gnode] (b2) at (.5,0) {} ;
      	    \node[glab,below] (labb2) at (b2.south) {$[0,m]$} ;
      \end{tikzpicture}
    \end{tabular}
  \end{center}

  Clearly we have that the languages $\mathcal{L}(T_1[M_1])$ and
  $\mathcal{L}(T_2[M_2])$ are equal as both contain all discrete
  non-empty graphs. Thus
  $\mathcal{L}(T_1[M_1]) \subseteq \mathcal{L}(T_2[M_2])$, but there
  exists no legal morphism $\varphi \colon T_1[M_1] \to T_2[M_2]$. In
  fact, the upper bound of the first node of $T_2$ would be violated
  if the node of $T_1$ is mapped by $\varphi$ to it, while the lower
  bound would be violated if the node of $T_1$ is mapped to the other
  node.

\subsection{Deciding Language Inclusion for Annotated Type Graphs}
\label{sec:dpannot-incl}
In this section we show that if we allow only bounded graph languages consisting of graphs
up to a fixed pathwidth, the language inclusion problem becomes
decidable for annotations based on $\mathcal{B}^n$. Pathwidth is a well-known
concept from graph theory that intuitively measures how much a graph
resembles a path. 

The proof is based on the notion of recognizability, which will be
described via automaton functors that were introduced in
\cite{bk:rec-arrow-graph}. We start with the main result and explain
step by step the arguments that will lead to decidability.

\newcommand{\languageinclusionannot}{ The language inclusion problem
  is decidable for graph languages of bounded pathwidth  characterized by multiply
  annotated type graphs over $\mathcal{B}^n$. That is, given
  $k \in \mathbb{N}$ and two multiply annotated type graphs $T_1[M_1]$
  and $T_2[M_2]$ over $\mathcal{B}^n$, it is decidable whether
  $\mathcal{L}(T_1[M_1])^{\leq k} \subseteq
  \mathcal{L}(T_2[M_2])^{\leq k}$, where
  $\mathcal{L}(T[M])^{\leq k} = \{G \in \mathcal{L}(T[M]) \mid G$ has
  pathwidth $\leq k \}$.
}

\begin{proposition}
  \label{prop:language inclusion is decidable for multi annot graphs}
  \languageinclusionannot
\end{proposition}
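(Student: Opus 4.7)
The plan is to show that for every multiply annotated type graph $T[M]$ over $\mathcal{B}^n$ the language $\mathcal{L}(T[M])$ is recognizable in the sense of the automaton functors of \cite{bk:rec-arrow-graph}, and then to invoke the standard meta-theorems (see also \cite{bbek:implementation-graph-automata,bbfk:tree-pathwidth-cospan-journal}) stating that recognizable graph languages are effectively closed under Boolean operations and that emptiness becomes decidable once we restrict to graphs of pathwidth at most $k$. Given these two ingredients, the inclusion $\mathcal{L}(T_1[M_1])^{\leq k}\subseteq \mathcal{L}(T_2[M_2])^{\leq k}$ reduces to checking emptiness of the bounded-pathwidth language $\mathcal{L}(T_1[M_1])^{\leq k}\cap \overline{\mathcal{L}(T_2[M_2])^{\leq k}}$, and that is decidable by the previous point.

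First I would build, for each $T[M]$, an automaton functor $\mathcal{F}_{T[M]}$. At an interface graph $J$ the state set consists of pairs $(\phi,a)$ where $\phi\colon J\to T$ is a graph morphism and $a\in\mathcal{B}^n(J)$ records the multiplicity accumulated so far, transported to the current interface. Finiteness of the state set is automatic: $J$ and $T$ are finite, and $\mathcal{M}_n=\{0,1,\dots,n,m\}$ has finitely many elements, so $\mathcal{B}^n(J)$ is finite as well. The transition along a cospan $J\to G\leftarrow K$ sends $(\phi_J,a_J)$ to every $(\phi_K,a_K)$ for which there is a morphism $\psi\colon G\to T$ whose precomposition with $J\to G$ equals $\phi_J$ and with $K\to G$ equals $\phi_K$, and where $a_K$ is obtained by pushing forward, via $\mathcal{B}^n$, both $a_J$ and the standard multiplicity of the items of $G$ not coming from $J$. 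Initial states at the empty interface carry the zero annotation; accepting states at the empty interface (reached after consuming the whole graph, so $\phi$ lives on $T$ itself) are those $(\phi,a)$ such that $\ell\le a\le u$ for some $(\ell,u)\in M$.

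Second, I would verify that $\mathcal{F}_{T[M]}$ is genuinely a functor on the cospan category: composition of cospans via pushout must match relational composition of the state transitions above, and identities must be preserved. The content here is that for any decomposition of $G$ into cospans, summing up the fresh contributions along the decomposition and pushing them forward through the chosen morphisms reproduces exactly the single pushforward $\mathcal{B}^n_\psi(s_G)$ of the standard multiplicity along the composed morphism $\psi\colon G\to T$. From this one reads off that accepting runs on $G$ are in bijection with the legal morphisms $G[s_G,s_G]\to T[M]$, so that $\mathcal{F}_{T[M]}$ indeed recognises $\mathcal{L}(T[M])$.

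The main obstacle is precisely the bookkeeping in step two: designing the annotation component of the state so that accumulation along successive cospans commutes with pushout composition, without double-counting the items sitting on shared interfaces. Tracking annotations relative to the current interface, and letting the transition along a cospan contribute only the standard multiplicity of items outside the image of the incoming interface, is the natural remedy and makes functoriality routine. Once $\mathcal{F}_{T_1[M_1]}$ and $\mathcal{F}_{T_2[M_2]}$ are in place, restricting to pathwidth $\leq k$ turns each language into a regular language over the finite alphabet of cospans with interfaces of size $\leq k+1$ (via the path decompositions of \cite{bbfk:tree-pathwidth-cospan-journal}), and the intersection-with-complement emptiness check becomes a finite reachability problem in the product automaton, yielding decidability.
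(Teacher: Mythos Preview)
Your overall strategy matches the paper's: build an automaton functor (the paper's \emph{counting cospan automaton}) for each $T[M]$, verify functoriality and that it accepts exactly $\mathcal{L}(T[M])$, then pass to bounded pathwidth where the automaton becomes a genuine NFA over a finite alphabet of atomic cospans and inclusion is decided by standard means.

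There is, however, a real problem with your state space. You place the annotation $a$ in $\mathcal{B}^n(J)$, i.e.\ on the current interface. At the empty interface $\mathcal{B}^n(\emptyset)$ is the trivial monoid (there are no items to annotate), so there is exactly one state there and you cannot record any accumulated count; your acceptance condition $\ell\le a\le u$ does not even type-check, since $\ell,u\in\mathcal{B}^n(T)$ while $a\in\mathcal{B}^n(\emptyset)$. The parenthetical ``so $\phi$ lives on $T$ itself'' also signals this confusion: at the empty interface $\phi\colon\emptyset\to T$ is the unique empty morphism and carries no information.

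The fix, which is exactly what the paper does, is to let the annotation live in $\mathcal{B}^n(T)$. A state at interface $J$ is a pair $(f\colon J\to T,\,b)$ with $b\in\mathcal{B}^n(T)$; along a cospan $J\to G\leftarrow K$ with chosen $h\colon G\to T$, the paper updates $b$ by adding, for each $x\in T$, the number of items of $G$ outside the image of the \emph{outer} interface $K$ that $h$ sends to $x$ (your variant of excluding the incoming interface would also work, but the running count must still be indexed by $T$, not by the interface). The initial state has $b=0$ and a final state satisfies $\ell\le b\le u$ for some $(\ell,u)\in M$. With this correction the remainder of your outline, including the functoriality argument and the reduction of inclusion to emptiness via complement and intersection, goes through and agrees with the paper.
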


Our automaton model, given by automaton functors, reads cospans
(i.e., graphs with interfaces) instead of single graphs. Therefore in
the following, the category under consideration will be
$\emph{Cospan}_m(\GR)$, i.e. the category of cospans of
graphs where the objects are discrete graphs $J,K$ and the arrows are
cospans $c \colon J \to G \leftarrow K$ where both graph morphisms are
injective. We will refer to the graph $J$ as the \emph{inner
  interface} and to the graph $K$ as the \emph{outer interface} of the
graph $G$. In addition we will sometimes abbreviate the cospan
$c \colon J \to G \leftarrow K$ to the short representation
$c \colon J \mor K$.




According to \cite{bbfk:tree-pathwidth-cospan-journal} a graph has
pathwidth $k$ iff it can be decomposed into cospans where each middle
graph of a cospan has at most $k+1$ nodes. Hence it is easy to check that a 
path has pathwidth~$1$, while a clique of order $k$ has pathwidth $k-1$.


Our main goal is to build an automaton which can read all graphs of
our language step by step, similar to the idea of finite automata
reading words in formal languages. Such an automaton can be
constructed for an unbounded language, where the pathwidth is not
restricted. However, we obtain a \emph{finite} automaton only if we
restrict the pathwidth. 
%
%
Then we can use well-known algorithms for finite automata to solve the
language inclusion problem. Note that, if we would use tree automata
instead of finite automata, our result could be generalized to graphs
of bounded \emph{treewidth}.

We will first introduce the notion of automaton functor (which is a
categorical automaton model for so-called recognizable arrow
languages) and which is inspired by Courcelle's theory of recognizable
graph languages \cite{ce:graph-structure-mso}.

\begin{definition}[Automaton functor \cite{bk:rec-arrow-graph}]
  An automaton functor \linebreak
  $\mathcal{C} \colon \emph{Cospan}_m(\GR) \to \mathbf{Rel}$ is a
  functor that maps every object $J$ (i.e., every discrete graph) to a
  finite set $\mathcal{C}(J)$ (the set of states of $J$) and every
  cospan $c \colon J \mor K$ to a relation
  $\mathcal{C}(c) \subseteq \mathcal{C}(J) \times \mathcal{C}(K)$ (the
  transition relation of $c$). In addition there is a distinguished
  set of initial states $I \subseteq \mathcal{C}(\emptyset)$ and a
  distinguished set of final states
  $F \subseteq \mathcal{C}(\emptyset)$.
  The \emph{language $\mathcal{L}_{\mathcal{C}}$} of
  $\mathcal{C}$ is defined as follows:
  \begin{quote}
    A graph $G$ is contained in $\mathcal{L}_{\mathcal{C}}$ if and
    only if there exist states $q \in I$ and $q' \in F$  which are related by
    $\mathcal{C}(c)$, i.e.~ $(q,q') \in\mathcal{C}(c)$,   where
    $c \colon \emptyset \to G \leftarrow \emptyset$ is the unique
    cospan with empty interfaces and middle graph $G$.
  \end{quote}
  Languages accepted by automaton functors are called
  \emph{recognizable}.

\end{definition}

We will now define an automaton functor for a type graph $T[M]$ over $\mathcal{B}^n$.

\begin{definition}[Counting cospan automaton] 
  Let $T[M]$ be a multiply annotated type graph over
  $\mathcal{B}^n$. We define an automaton functor
  $\mathcal{C}_{T[M]} \colon \emph{Cospan}_m(\GR)\linebreak \to
  \mathbf{Rel}$ as follows:
  \begin{itemize}
    \item For each object $J$ of $\emph{Cospan}_m(\GR)$ (thus $J$ is a finite discrete graph),  $\mathcal{C}_{T[M]}(J) = \{ (f,b) \mid f 
    \colon J \to T, b \in \mathcal{B}^n(T)\}$ is its finite set of states 
    \item $I \subseteq \mathcal{C}_{T[M]}(\emptyset)$ is the set of 
    \emph{initial} 
    states with $I = \{ (f \colon \emptyset \to T , 0) \}$, where $0$
    is the constant $0$-function 
    \item $F \subseteq \mathcal{C}_{T[M]}(\emptyset)$ is the set of 
    \emph{final} states 
    with $F = \{ (f \colon \emptyset \to T , b) \mid \exists (\ell,u) \in M 
    : \ell \leq b \leq u \}$ 
  \end{itemize}

  \noindent{}
    \begin{minipage}{.64\linewidth}
      {\textbf{\,--}}~Let $c \colon J \arright[\psi_L] G \arleft[\psi_R] K$ be an arrow in
      the category $\emph{Cospan}_m(\GR)$ with discrete interface
      graphs $J$ and $K$ where both graph morphisms
      $\psi_L \colon J \to G$ and $\psi_R \colon K \to G$ are
      injective.  Two states $(f \colon J \to T,b)$ and
      $(f' \colon K \to T,b')$ are in the relation
      $\mathcal{C}_{T[M]}(c)$ if and only if there exists a morphism
      $h \colon G \to T$ such that the diagram to the right commutes
      and for all $x \in V_T\cup E_T$ the following equation holds:
    \end{minipage}
    \quad
    \begin{minipage}{.26\linewidth}
      \begin{center}
        \begin{tikzpicture}[shorten >=1pt, node distance=15mm and 15mm, on grid]
        	\draw
        		node (J) at (0,0) {\(J\)}
        		node [right=of J] (G) {\(G\)}
        		node [right=of G] (K) {\(K\)}
        		node [below=of G] (T) {\(T\)};
        
        	\begin{scope}[decoration={brace, raise=6mm}]
        		\draw[decorate] (J.west) -- node[midway, above=7mm] {\(c \colon J 
        		\mor K\)} (K.east);
        
        	\end{scope}	
        	\begin{scope}[->]
        		\draw (J) -- node[midway, left=2mm] {\(f\)}  (T);
        		\draw (J) -- node[midway, above=.1mm] {\(\psi_L\)} (G);
        		\draw[dashed] (G) -- node[above right=.2mm] {\(\exists h\)} (T);
        		\draw (K) -- node[midway, above=.1mm] {\(\psi_R\)}(G);
        		\draw (K) -- node[midway, right=2mm] {\(f'\)} (T);
        	\end{scope}
        \end{tikzpicture}
      \end{center}
    \end{minipage}

    $$ b'(x) = b(x) + | \{ y \in (G\setminus \psi_R(K)) \mid h(y)=x \}| $$ 
    
    \noindent The set $G\setminus \psi_R(K)$ consists of all elements of
    $G$ which are not targeted by the morphism $\psi_R$, e.g.
    $G\setminus \psi_R(K) = (V_G \setminus \psi_R(V_K)) \cup (E_G \setminus
    \psi_R(E_K))$. 
  Instead of $\mathcal{L}_{\mathcal{C}_{T[M]}}$ and $\mathcal{C}_{T[M]}$ we 
  just write $\mathcal{L}_{\mathcal{C}}$ and $\mathcal{C}$ if  $T[M]$ 
  is clear from the context.
\end{definition}

The intuition behind this construction is to count for each item $x$ of $T$, step by step, the
number of elements that are being mapped from a graph $G$ (which is in
the form of a cospan decomposition) to $x$,  and then check if the bounds of a
pair of annotations $(\ell,u) \in M$ of the multiply annotated type
graph $T[M]$ are satisfied. We give a short example before moving on
to the results.

\begin{example}
  Let the following multiply annotated type graph (over
  $\mathcal{B}^2$) $T[\ell,u]$ and the cospan
  $(c \colon \emptyset \to G \leftarrow \emptyset)$ with $G \in
  \mathcal{L}(T[\ell,u])$ be given:
  
  \begin{center}
    \begin{tabular}{ccccc}
      $T[\ell,u] = $ &
      \begin{tikzpicture}[x=1.5cm,y=-1.2cm,baseline=(1.south)]
      	    \node[gnode] (a1) at (0,0) {} ;
            \node[glab,below] (labm1) at (a1.south) {[0,1]} ;
      	    \node[gnode] (a2) at (1,0) {} ;
            \node[glab,below] (labm2) at (a2.south) {[1,m]} ;
      	    \draw[gedge] (a1) to
      	            node[arlab,above] (lab1) {A [0,2]} (a2) ;
      	    \draw[gedge] (a2) .. controls +(-20:1cm) and +(20:1cm) .. 
      	            node[arlab,right] (labm3) {B [0,m]} (a2) ;
      \end{tikzpicture}
      & \qquad \quad \quad & 
            $c \colon {}$ &
      \begin{tikzpicture}[x=.9cm,y=-1.2cm,baseline=(a1.south)]
        \begin{scope}[shift={(-2.5,0)}]
          \node[glab] (lab1) at (0,0) {$\emptyset$} ;
          \graphboxthick[j]{ (lab1) }  
        \end{scope}
        \begin{scope}
          \node[gnode] (a1) at (-1,0) {} ;
          \node[gnode] (a2) at (0,0) {} ;
          \node[gnode] (a3) at (1,0) {} ;
          \draw[gedge] (a1) -- node[glab,above] (labA1) {$\mathit{A}$} (a2) ;
          \draw[gedge] (a2) -- node[glab,above] (labB1) {$\mathit{B}$} (a3) ;
          \graphboxthick[g]{(a1) (a2) (a3) (labA1) (labB1)}  
        \end{scope}
        \begin{scope}[shift={(2.5,0)}]
          \node[glab] (lab1) at (0,0) {$\emptyset$} ;
          \graphboxthick[k]{ (lab1) }  
        \end{scope}
        \draw[->] \substy{(j.east)}{0} -- \substy{(g.west)}{0} ;
        \draw[->] \substy{(k.west)}{0} -- \substy{(g.east)}{0} ;
      \end{tikzpicture}
    \end{tabular} 
  \end{center} 

\noindent We will now decompose the cospan $c$ into two cospans $c_1,c_2$ with 
$c = c_1;c_2$ in the following way:
\begin{center}
\vspace*{-.2cm}
  \scalebox{.95}{
  \begin{tikzpicture}[x=.9cm,y=-1.2cm,baseline=(a1.south)]
    \begin{scope}[shift={(-2.5,0)}]
      \node[glab] (lab1) at (0,0) {$\emptyset$} ;
      \graphboxthick[l]{ (lab1) }  
    \end{scope}
    \begin{scope}
      \node[gnode] (a1) at (-.5,0) {} ;
      \node[gnode,fill=blue] (a2) at (.5,0) {} ;
      \draw[gedge] (a1) -- node[glab,above] (labA1) {$\mathit{A}$} (a2) ;
      \graphboxthick[i]{(a1) (a2) (labA1)}  
    \end{scope}
    \begin{scope}[shift={(2.5,0)}]
      \node[gnode,fill=blue] (a1) at (0,0) {};
      \graphboxthick[r]{ (a1) }  
    \end{scope}
    \begin{scope}[shift={(5,0)}]
      \node[gnode,fill=blue] (a2) at (-.5,0) {} ;
      \node[gnode] (a3) at (.5,0) {} ;
      \draw[gedge] (a2) -- node[glab,above] (labB1) {$\mathit{B}$} (a3) ;
      \graphboxthick[g]{ (a2) (a3) (labB1)}  
    \end{scope}
    \begin{scope}[shift={(7.5,0)}]
      \node[glab] (lab1) at (0,0) {$\emptyset$} ;
      \graphboxthick[k]{ (lab1) }  
    \end{scope}
    \draw[->] \substy{(l.east)}{0} -- \substy{(i.west)}{0} ;
    \draw[->] \substy{(r.west)}{0} -- \substy{(i.east)}{0} ;
    \draw[->] \substy{(r.east)}{0} -- \substy{(g.west)}{0} ;
    \draw[->] \substy{(k.west)}{0} -- \substy{(g.east)}{0} ;
    \begin{scope}[decoration={brace, raise=6mm}]
      \draw[decorate] (l.west) -- node[midway, above=7mm] {\(c_1\)} (r.center);
     \end{scope}
    \begin{scope}[decoration={brace, raise=6mm}]
      \draw[decorate] (r.center) -- node[midway, above=7mm] {\(c_2\)} (k.east);
     \end{scope}
    \begin{scope}[decoration={brace, raise=6mm, mirror}]
      \draw[decorate] (l.west) -- node[midway, below=7mm] {\(c\)} (k.east);
     \end{scope}
  \end{tikzpicture} }
\end{center}
We let our counting cospan automaton parse the cospan decomposition
$c_1;c_2$ step by step now to show how the annotations for the type
graph $T$ evolve during the process. According to our construction,
every element in $T$ has multiplicity $0$ in the initial state of the
automaton. We then sum up the number of elements within the middle
graphs of the cospans which are \emph{not} part of the right
  interface. Therefore we get the following parsing process:

\begin{center}
\scalebox{.97}{
  \begin{tikzpicture}[x=.9cm,y=-1.2cm,baseline=(a1.south)]
    \begin{scope}[shift={(-2.5,0)}]
      \node[glab] (lab1) at (0,0) {$\emptyset$} ;
      \graphboxthick[l]{ (lab1) }  
    \end{scope}
    \begin{scope}
      \node[gnode] (a1) at (-.4,0) {} ;
      \node[gnode,fill=blue] (a2) at (.6,0) {} ;
      \draw[gedge] (a1) -- node[glab,above] (labA1) {$\mathit{A}$} (a2) ;
      \graphboxthick[i]{(a1) (a2) (labA1)}  
    \end{scope}
    \begin{scope}[shift={(2.5,0)}]
      \node[gnode,fill=blue] (a1) at (0,0) {};
      \graphboxthick[r]{ (a1) }  
    \end{scope}
    \begin{scope}[shift={(5,0)}]
      \node[gnode,fill=blue] (a2) at (-.4,0) {} ;
      \node[gnode] (a3) at (.6,0) {} ;
      \draw[gedge] (a2) -- node[glab,above] (labB1) {$\mathit{B}$} (a3) ;
      \graphboxthick[g]{ (a2) (a3) (labB1)}  
    \end{scope}
    \begin{scope}[shift={(7.5,0)}]
      \node[glab] (lab1) at (0,0) {$\emptyset$} ;
      \graphboxthick[k]{ (lab1) }  
    \end{scope}
    \draw[->] \substy{(l.east)}{0} -- \substy{(i.west)}{0} ;
    \draw[->] \substy{(r.west)}{0} -- \substy{(i.east)}{0} ;
    \draw[->] \substy{(r.east)}{0} -- \substy{(g.west)}{0} ;
    \draw[->] \substy{(k.west)}{0} -- \substy{(g.east)}{0} ;
    \begin{scope}[shift={(-2.5,2)}]
      \node[gnode] (a1) at (-1,0) {} ;
      \node[glab,below] (labm1) at (a1.south) {[0]} ;
      \node[gnode] (a2) at (0,0) {} ;
      \node[glab,below] (labm2) at (a2.south) {[0]} ;
      \draw[gedge] (a1) to node[arlab,above] (lab1) {A[0]} (a2) ;
      \draw[gedge] (a2) .. controls +(-20:1cm) and +(20:1cm) .. 
              node[arlab,right] (labm3) {B[0]} (a2) ;
      \graphboxthick[t1]{ (a1) (a2) (labm3) (lab1) (labm1) }  
    \end{scope}
    \begin{scope}[shift={(2.5,2)}]
      \node[gnode] (a1) at (-1,0) {} ;
      \node[glab,below] (labm1) at (a1.south) {[1]} ;
      \node[gnode,fill=blue] (a2) at (0,0) {} ;
      \node[glab,below] (labm2) at (a2.south) {[0]} ;
      \draw[gedge] (a1) to node[arlab,above] (lab1) {A[1]} (a2) ;
      \draw[gedge] (a2) .. controls +(-20:1cm) and +(20:1cm) .. 
              node[arlab,right] (labm3) {B[0]} (a2) ;
      \graphboxthick[t2]{ (a1) (a2) (labm3) (lab1) (labm1) }  
    \end{scope}
    \begin{scope}[shift={(7.5,2)}]
      \node[gnode] (a1) at (-1,0) {} ;
      \node[glab,below] (labm1) at (a1.south) {[1]} ;
      \node[gnode] (a2) at (0,0) {} ;
      \node[glab,below] (labm2) at (a2.south) {[2]} ;
      \draw[gedge] (a1) to node[arlab,above] (lab1) {A[1]} (a2) ;
      \draw[gedge] (a2) .. controls +(-20:1cm) and +(20:1cm) .. 
              node[arlab,right] (labm3) {B[1]} (a2) ;
      \graphboxthick[t3]{ (a1) (a2) (labm3) (lab1) (labm1) }  
    \end{scope}
    \draw[->,thick] \substx{(l.south)}{-2.5} -- node[arlab,left] (f1) 
    {$f_1$} \substx{(t1.north)}{-2.5} ;
    \draw[->,thick] \substx{(r.south)}{2.5} -- node[arlab,left] (f2) 
        {$f_2$} \substx{(t2.north)}{2.5} ;
    \draw[->,thick] \substx{(k.south)}{7.5} -- node[arlab,left] (f3) 
        {$f_3$} \substx{(t3.north)}{7.5} ;
    
    \interfacebox[q1]{(l) (t1)};
    \interfacebox[q2]{(r) (t2)};
    \interfacebox[q3]{(k) (t3)};
    \node[glab,below] (q1lab) at (q1.south) {\large{$q_1$}} ;
    \node[glab,below] (q2lab) at (q2.south) {\large{$q_2$}} ;
    \node[glab,below] (q3lab) at (q3.south) {\large{$q_3$}} ;
  \end{tikzpicture} }
\end{center}

\noindent We visited three states $q_1,q_2$ and $q_3$ in the automaton
with $(q_1,q_2) \in \mathcal{C}(c_1)$ and
$(q_2,q_3) \in \mathcal{C}(c_2)$. Since $\mathcal{C}$ is supposed to
be a functor we get that
$\mathcal{C}(c_1);\mathcal{C}(c_2) = \mathcal{C}(c)$ and therefore
$(q_1,q_3) \in \mathcal{C}(c)$ also holds. In addition we have
$q_1 \in I$ and since the annotation function $b \in \mathcal{B}^2(T)$
in $q_3 = (f_3,b)$ satisfies $\ell \leq b \leq u$ we can infer that
$q_3 \in F$. Therefore we can conclude that
$G \in \mathcal{L}_{\mathcal{C}}$ holds as well.
\end{example}

We still need to prove that $\mathcal{C}$ is indeed a
functor. Intuitively this shows that acceptance of a graph by the
automaton is not dependent on its specific decomposition.


\newcommand{\countingCospanFunctor}{ Let
  $c_1 \colon J \to G \leftarrow K$ and
  $c_2 \colon K \to H \leftarrow L$ be two arrows and let
  $id_G \colon G \to G \leftarrow G$ be the identity cospan. 

  The mapping
  $\mathcal{C}_{T[M]} \colon \emph{Cospan}_m(\GR) \to \mathbf{Rel}$ is
  a functor:
\begin{enumerate}
  \item $\mathcal{C}_{T[M]}(id_G) = id_{C_{T[M]}(G)} $
  \item $\mathcal{C}_{T[M]}(c_1;c_2) = 
  \mathcal{C}_{T[M]}(c_1);\mathcal{C}_{T[M]}(c_2)$
\end{enumerate} }

\begin{proposition}
  \label{prop:c is a functor}
  \countingCospanFunctor
\end{proposition}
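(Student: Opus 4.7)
The plan is to verify the two functorial axioms separately, with the composition axiom being the substantial part.

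For the identity axiom, I would take an identity cospan $id_J \colon J \to J \leftarrow J$ on a discrete graph $J$. Here both legs $\psi_L, \psi_R$ are the identity on $J$, so $J \setminus \psi_R(J) = \emptyset$ and the counting equation forces $b' = b$. The commutativity condition $f = h \circ \psi_L = h$ and $f' = h \circ \psi_R = h$ forces $f = f'$. Thus $\mathcal{C}_{T[M]}(id_J)$ is exactly the diagonal on $\mathcal{C}_{T[M]}(J)$.

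For composition, let $c_1 \colon J \xrightarrow{\alpha} G \xleftarrow{\beta} K$ and $c_2 \colon K \xrightarrow{\gamma} H \xleftarrow{\delta} L$, and recall that $c_1;c_2$ is formed by pushing out $\beta,\gamma$, yielding $\beta' \colon G \to G +_K H$ and $\gamma' \colon H \to G +_K H$. I need to show two inclusions.

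For the direction $\mathcal{C}(c_1);\mathcal{C}(c_2) \subseteq \mathcal{C}(c_1;c_2)$, given witnesses $h_1 \colon G \to T$ and $h_2 \colon H \to T$ for the two transitions with a common intermediate state $(f_K,b_K)$, the compatibility $h_1 \circ \beta = f_K = h_2 \circ \gamma$ together with the pushout universal property produces a unique $h \colon G +_K H \to T$ satisfying $h \circ \beta' = h_1$ and $h \circ \gamma' = h_2$. Commutativity with $\alpha$ and $\delta$ is then immediate.

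For the converse, given $h \colon G+_K H \to T$ witnessing $\mathcal{C}(c_1;c_2)$, I set $h_1 := h \circ \beta'$, $h_2 := h \circ \gamma'$, $f_K := h_1 \circ \beta = h_2 \circ \gamma$, and $b_K(x) := b_J(x) + |\{y \in G \setminus \beta(K) \mid h_1(y) = x\}|$, then check both transition conditions hold.

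The main obstacle in both directions is the counting equation, which requires a careful bijective decomposition of the set $(G +_K H) \setminus (\gamma' \circ \delta)(L)$. Since $\beta,\gamma,\delta$ are injective and pushouts along monos in $\GR$ preserve injectivity, every element of $G +_K H$ has a unique representative in one of three disjoint pieces: (a) $G \setminus \beta(K)$, (b) equivalence classes $\{\beta(k),\gamma(k)\}$ for $k\in K$, and (c) $H \setminus \gamma(K)$. Removing $\gamma'(\delta(L))$ leaves (a) intact, keeps those classes in (b) with $\gamma(k) \notin \delta(L)$, and keeps elements of (c) outside $\delta(L)$. The pieces of type (b) and (c) retained are exactly in bijection (via $h_2$) with $H \setminus \delta(L)$, while (a) is in bijection (via $h_1$) with $G \setminus \beta(K)$. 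This gives, fibrewise over each $x \in V_T \cup E_T$, the identity
\[
|\{y \in (G+_K H) \setminus \gamma'(\delta(L)) \mid h(y) = x\}| = |\{y \in G \setminus \beta(K) \mid h_1(y) = x\}| + |\{y \in H \setminus \delta(L) \mid h_2(y) = x\}|,
\]
from which the counting equations telescope as $b_L = b_K + (\text{contribution of } c_2) = b_J + (\text{contribution of } c_1) + (\text{contribution of } c_2)$, matching the required equation for $c_1;c_2$. This bookkeeping is the heart of the argument; once it is in place, both implications follow directly.
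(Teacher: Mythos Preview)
Your proposal is correct and follows essentially the same route as the paper: both verify the identity axiom by noting that $J\setminus \psi_R(J)=\emptyset$ forces $b'=b$ and commutativity forces $f=f'$; both handle composition by using the pushout universal property in one direction and restriction of $h$ along the pushout injections in the other, and both reduce the counting condition to the same fibrewise identity obtained from the decomposition $(G+_K H)\setminus \gamma'(\delta(L)) \cong (G\setminus\beta(K)) \uplus (H\setminus\delta(L))$. One small wording slip: when you say the retained pieces are ``in bijection (via $h_2$) with $H\setminus\delta(L)$'' and ``(via $h_1$) with $G\setminus\beta(K)$'', the bijections are given by the pushout injections $\gamma'$ and $\beta'$, not by $h_1,h_2$ (which land in $T$); what you need is that under these bijections $h$ restricts to $h_2$ and $h_1$, which then yields the fibrewise count.
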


\noindent The language accepted by the automaton
$\mathcal{L}_{\mathcal{C}}$ is exactly the graph language
$\mathcal{L}(T[M])$.

\newcommand{\languageEquivalenceCountingCospan}{ Let the multiply
  annotated type graph $T[M]$ (over $\mathcal{B}^n$) and the
  corresponding automaton functor
  $\mathcal{C} \colon \emph{Cospan}_m(\GR) \to \mathbf{Rel}$ for
  $T[M]$ be given. Then
  $\mathcal{L}_{\mathcal{C}} = \mathcal{L}(T[M])$ holds, i.e. for a
  graph $G$ we have $G\in \mathcal{L}(T[M])$ if and only if there
  exist states $i \in I \subseteq \mathcal{C}(\emptyset)$ and
  $f \in F \subseteq \mathcal{C}(\emptyset)$ such that
  $(i,f) \in \mathcal{C}(c)$, where
  $c\colon \emptyset \to G\leftarrow \emptyset$. }

\begin{proposition}
  \label{prop:automaton functor language equivalence}
  \languageEquivalenceCountingCospan
\end{proposition}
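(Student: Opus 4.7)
The plan is to show both inclusions simultaneously by unwinding the definitions of $\mathcal{L}_{\mathcal{C}}$ and of $\mathcal{L}(T[M])$ on the special cospan $c\colon \emptyset \to G \leftarrow \emptyset$, and observing that they both reduce to the same condition: the existence of a graph morphism $h\colon G \to T$ together with a pair $(\ell,u)\in M$ satisfying $\ell \le \mathcal{B}^n_h(s_G) \le u$.

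First, I will unfold the acceptance condition of $\mathcal{C}$. By definition, the set $I$ of initial states contains the unique pair $(f_\emptyset,0)$, where $f_\emptyset\colon \emptyset \to T$ is the unique morphism from the empty graph and $0$ is the constant zero function in $\mathcal{B}^n(T)$. Applied to the cospan $c\colon \emptyset \xrightarrow{\psi_L} G \xleftarrow{\psi_R} \emptyset$, the transition relation yields $((f_\emptyset,0),(f',b'))\in \mathcal{C}(c)$ iff there exists $h\colon G\to T$ (the diagram commutes vacuously, since both interfaces are empty) such that for every $x\in V_T\cup E_T$
\[
b'(x) \;=\; 0 \,+\, |\{y \in G\setminus \psi_R(\emptyset) \mid h(y)=x\}| \;=\; |\{y\in V_G\cup E_G \mid h(y)=x\}|.
\]

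Second, I will identify this count with the action of the functor on the standard multiplicity. Since $s_G$ is the constant $1$ function on $V_G\cup E_G$, by definition of $\mathcal{B}^n$ on morphisms one has
\[
\mathcal{B}^n_h(s_G)(x) \;=\; \sum_{h(y)=x} s_G(y) \;=\; |\{y\in V_G\cup E_G \mid h(y)=x\}|.
\]
Hence a state $(f',b')$ is reached from the initial state via $c$ precisely when $b' = \mathcal{B}^n_h(s_G)$ for some morphism $h\colon G\to T$.

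Finally, I will match the acceptance criterion with membership in $\mathcal{L}(T[M])$. By definition of $F$, the state $(f',b')$ is final iff there exists $(\ell,u)\in M$ with $\ell \le b' \le u$. Combining with the previous step, $G\in \mathcal{L}_{\mathcal{C}}$ iff there exist $h\colon G\to T$ and $(\ell,u)\in M$ with $\ell \le \mathcal{B}^n_h(s_G) \le u$, which is exactly the condition that $h$ lifts to a legal morphism $G[s_G,s_G]\to T[M]$, i.e.\ $G \in \mathcal{L}(T[M])$. There is no real obstacle here beyond carefully tracking that $\psi_R(K)=\emptyset$ makes the counting formula collapse to the full preimage count and that this count coincides with $\mathcal{B}^n_h(s_G)$ because $s_G$ is constantly $1$; both reductions are immediate from the definitions.
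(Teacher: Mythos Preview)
Your proof is correct and follows essentially the same approach as the paper: both arguments unfold the transition relation of $\mathcal{C}$ on the cospan $\emptyset \to G \leftarrow \emptyset$, observe that with empty interfaces the counting formula collapses to $|\{y \mid h(y)=x\}| = \mathcal{B}^n_h(s_G)(x)$, and then match the final-state condition against the definition of a legal morphism into $T[M]$. The only cosmetic difference is that the paper treats the two implications separately while you chain them into a single biconditional.
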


Therefore we can construct an automaton for each graph language
specified by a multiply annotated type graph $T[M]$, which accepts
exactly the same language.  In case of a bounded graph language this
automaton will have only finitely many states. Furthermore we can
restrict the label alphabet, i.e., the cospans by using only atomic
cospans, adding a single node or edges (see
\cite{bbek:implementation-graph-automata}). Once these steps are
performed, we obtain conventional non-deterministic finite automata over
a finite alphabet and we can use standard techniques from automata
theory to solve the language inclusion problem directly on the finite
automata.


\subsection{Closure Properties for Multiply Annotated Graphs}
\label{sec:cpannot}

Extending the expressiveness of the type graphs by adding
multiplicities gives us positive results in case of closure under
union and intersection. Here we use constructions that rely on 
products and coproducts in the category of graphs.

\noindent Closure under intersection holds for the most general form of
annotations. From $T_1[M_1]$, $T_2[M_2]$ we can construct an annotated
type graph $(T_1\times T_2)[N]$, where $N$ contains all annotations
which make both projections $\pi_i\colon T_1\times T_2\to T_i$ legal.

\newcommand{\closurePropsAnnotIntersect}{ 
  The category of multiply annotated graphs is closed under 
  intersection.  }

\begin{proposition}
\label{prop:multi annot graphs are closed under intersect}
 \closurePropsAnnotIntersect
\end{proposition}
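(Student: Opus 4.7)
The plan is to exploit the fact that the category $\GR$ of graphs has binary products: given two multiply annotated type graphs $T_1[M_1]$ and $T_2[M_2]$, I would form the product $T_1 \times T_2$ with projections $\pi_1,\pi_2$ in $\GR$ and then define
\[
N = \{(\ell,u)\in \mathcal{A}(T_1\times T_2)\times \mathcal{A}(T_1\times T_2)\mid \ell\le u \text{ and for } i=1,2,\ \exists (\ell_i',u_i')\in M_i \text{ with } \mathcal{A}_{\pi_i}(\ell)\ge \ell_i',\ \mathcal{A}_{\pi_i}(u)\le u_i'\}.
\]
In other words, $N$ consists of exactly those pairs of annotations on the product graph that make both projections $\pi_i\colon (T_1\times T_2)[N]\to T_i[M_i]$ legal. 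The claim will then be that $\mathcal{L}((T_1\times T_2)[N]) = \mathcal{L}(T_1[M_1])\cap \mathcal{L}(T_2[M_2])$.

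For the inclusion ``$\subseteq$'', suppose $G\in \mathcal{L}((T_1\times T_2)[N])$ is witnessed by a legal morphism $\phi\colon G[s_G,s_G]\to (T_1\times T_2)[N]$ with some $(\ell,u)\in N$ satisfying $\ell\le \mathcal{A}_\phi(s_G)\le u$. By Lemma~\ref{le:composition of legal morphism} the composition $\pi_i\circ\phi$ is a legal morphism into $T_i[M_i]$, so $G\in \mathcal{L}(T_i[M_i])$ for $i=1,2$. For the converse ``$\supseteq$'', suppose $G\in \mathcal{L}(T_1[M_1])\cap \mathcal{L}(T_2[M_2])$, witnessed by legal morphisms $\phi_i\colon G[s_G,s_G]\to T_i[M_i]$ with pairs $(\ell_i',u_i')\in M_i$ such that $\ell_i'\le \mathcal{A}_{\phi_i}(s_G)\le u_i'$. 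The universal property of the product supplies a unique $\phi\colon G\to T_1\times T_2$ with $\pi_i\circ \phi=\phi_i$. Setting $\ell=u=\mathcal{A}_\phi(s_G)$, functoriality of $\mathcal{A}$ gives $\mathcal{A}_{\pi_i}(\ell)=\mathcal{A}_{\pi_i}(u)=\mathcal{A}_{\phi_i}(s_G)$, so the bounds $(\ell_i',u_i')$ witness $(\ell,u)\in N$, and $\phi$ is a legal morphism into $(T_1\times T_2)[N]$, yielding $G\in \mathcal{L}((T_1\times T_2)[N])$.

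The main subtlety I expect is the requirement that $M$ be \emph{finite} in the definition of a multiply annotated graph, since the set $N$ written above is defined by a condition rather than explicitly enumerated. For the multiplicity functors $\mathcal{B}^n$ of primary interest this is automatic, because $\mathcal{B}^n(T_1\times T_2)$ is itself finite; more generally one would either have to restrict attention to functors whose image on finite graphs is finite, or replace $N$ by an equivalent finite subset (for instance, by taking, for each pair in $M_1\times M_2$, only the representative pairs of bounds induced by the projections, which suffices for the argument above). A secondary point to verify explicitly is that $\mathcal{A}_{\pi_i}$ is monotone, so that $\ell\le u$ on the product entails $\mathcal{A}_{\pi_i}(\ell)\le \mathcal{A}_{\pi_i}(u)$; this is immediate from the fact that arrows of $\LMon$ are monotone monoid homomorphisms by definition.
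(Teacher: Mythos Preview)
Your proof is correct and follows essentially the same approach as the paper: form the product $T_1\times T_2$ in $\GR$, take $N$ to be all annotation pairs making both projections legal, and verify the two inclusions via the universal property of the product and Lemma~\ref{le:composition of legal morphism}. Your observation about the finiteness of $N$ is a genuine subtlety that the paper does not address; your remark that $\mathcal{B}^n(T_1\times T_2)$ is finite for the concrete functors of interest is the natural resolution.
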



\noindent We can prove closure under union for the case of annotations based
on the functor $\mathcal{B}^n$. Here we take the coproduct
$(T_1\oplus T_2)[N]$, where $N$ contains all annotations of $M_1$,
$M_2$, transferred to $T_1\oplus T_2$ via the injections
$i_j\colon T_j\to T_1\oplus T_2$. Intuitively, graph items not in
the original domain of the annotations receive annotation $[0,0]$. This can be 
generalized under some mild assumptions%
\full{(see proof in the appendix).}
\short{(see proof in \cite{ckn:specifying-graph-languages-arxiv}).
}


\newcommand{\closurePropsAnnotUnion}{The category of multiply
  annotated graphs over functor $\mathcal{B}^n$ is closed under union.}

\begin{proposition}
  \label{prop:multi annot graphs are closed under union}
  \closurePropsAnnotUnion
\end{proposition}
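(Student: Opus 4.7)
The plan is to construct, from $T_1[M_1]$ and $T_2[M_2]$, a multiply annotated type graph on the coproduct $T = T_1 \oplus T_2$ with injections $i_1\colon T_1 \to T$ and $i_2\colon T_2\to T$, and to transport the annotations along these injections. Concretely, for any $a \in \mathcal{B}^n(T_j)$ let $\bar a := \mathcal{B}^n_{i_j}(a) \in \mathcal{B}^n(T)$; since $i_j$ is injective, $\bar a$ coincides with $a$ on the image of $T_j$ and is $0$ on the other summand. Define
\[
  N = \{(\bar\ell,\bar u) \mid (\ell,u)\in M_1\} \cup \{(\bar\ell,\bar u) \mid (\ell,u)\in M_2\}.
\]
I will show $\mathcal{L}(T[N]) = \mathcal{L}(T_1[M_1]) \cup \mathcal{L}(T_2[M_2])$.

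For the inclusion $(\supseteq)$, suppose $G \in \mathcal{L}(T_j[M_j])$ witnessed by $\phi\colon G\to T_j$ and $(\ell,u)\in M_j$ with $\ell \le \mathcal{B}^n_\phi(s_G) \le u$. Composing with $i_j$ gives $i_j\circ\phi\colon G\to T$, and by functoriality $\mathcal{B}^n_{i_j\circ\phi}(s_G) = \mathcal{B}^n_{i_j}(\mathcal{B}^n_\phi(s_G))$. Because arrows in $\LMon$ are monotone, applying $\mathcal{B}^n_{i_j}$ to the inequality yields $\bar\ell \le \mathcal{B}^n_{i_j\circ\phi}(s_G) \le \bar u$, so $G \in \mathcal{L}(T[N])$.

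For the inclusion $(\subseteq)$, suppose $\phi\colon G\to T$ witnesses membership via some pair $(\bar\ell,\bar u)\in N$ coming from $(\ell,u)\in M_j$. The key observation, and the main delicate step, is that $\bar u(y)=0$ for every $y$ in the other summand $T_{3-j}$, hence $\mathcal{B}^n_\phi(s_G)(y)=0$ there. By definition of $\mathcal{B}^n$ and $s_G$, this sum equals $|\phi^{-1}(y)|$ (capped at $m$), which in $\mathcal{M}_n$ is zero only when the preimage is empty. Therefore no element of $G$ is mapped into $T_{3-j}$, and $\phi$ factors uniquely as $\phi = i_j\circ\phi'$ for some $\phi'\colon G\to T_j$. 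Restricting the inequality $\bar\ell \le \mathcal{B}^n_\phi(s_G) \le \bar u$ to the image of $i_j$ and using that $\mathcal{B}^n_{i_j}$ is injective on that subset recovers $\ell \le \mathcal{B}^n_{\phi'}(s_G) \le u$, so $G\in\mathcal{L}(T_j[M_j])$.

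The main obstacle is the factorisation step in the second direction: it is precisely there that we use properties specific to $\mathcal{B}^n$, namely that all multiplicities are non\nobreakdash-negative and that $0$ is the bottom of $\mathcal{M}_n$, so that a sum bounded above by $0$ forces each summand, and hence the whole preimage, to vanish. This is what restricts the statement from a generic annotation functor $\mathcal{A}$ to $\mathcal{B}^n$ (and the ``mild assumptions'' under which the generalisation works will be exactly the abstract counterparts of this property, together with the injectivity of $\mathcal{A}_{i_j}$ on the image of $T_j$).
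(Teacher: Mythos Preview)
Your proof is correct and follows essentially the same approach as the paper: build the coproduct, push annotations forward along the injections, and for the non-trivial inclusion use that the transported upper bound vanishes on the other summand to force a factorisation, then cancel $\mathcal{B}^n_{i_j}$ using its injectivity. The paper merely packages your two inline steps as separate lemmas (a factorisation lemma and a ``reduction'' operator acting as a left inverse to $\mathcal{B}^n_{i_j}$), but the substance is the same.
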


Closure under complement is still an open issue. If we restrict to
graphs of bounded pathwidth, we have a (non-deterministic) automaton
(functor), as described in Section~\ref{sec:dpannot}, which could be
determinized and complemented. However, this does not provide us with
an annotated type graph for the complement. We conjecture that closure
under complement does not hold.

\section{Conclusion} 
\label{sec:conclusion}

Our results on decidability and closure properties for specification
languages are summarized in the following table. In the case where the
results hold only for bounded pathwidth, the checkmark is in brackets.


\begin{center}
  \begin{tabular}{|c|l||c|c|c|c|}
   \hline
   \multicolumn{2}{|c||}{} & Pure TG & Restr. Gr. & TG Logic & Annotated 
   TG \\ \hline
   & $G \in \mathcal{L}$? & \Yes & \Yes & \Yes & \Yes \\
   Decidability &$\mathcal{L} = \emptyset$? & \Yes & \Yes & \Yes & \Yes \\
   & $\mathcal{L}_1 \subseteq \mathcal{L}_2$? & \Yes & \Yes & \Yes & (\Yes) \\
   \hline
   & $\mathcal{L}_1 \cup \mathcal{L}_2$ & \No & \Yes & \Yes & \Yes
   \\
   Closure Properties & $\mathcal{L}_1 \cap \mathcal{L}_2$ & \Yes &
   \No & \Yes & \Yes \\
   & $\GRL \setminus \mathcal{L}$ & \No & \No & \Yes & ?
   \\
   \hline
  \end{tabular}
\end{center}

One open question that remains is whether language inclusion for
annotated type graphs is decidable if we do not restrict to bounded
treewidth. Similarly, closure under complement is still open.

Furthermore, in order to be able to use these formalisms extensively in
applications, it is necessary to provide a mechanism to compute
weakest preconditions and strongest postconditions. This does not seem
feasible for pure type graphs or the type graph logic. Hence, we are
currently working on characterizing weakest preconditions and
strongest postconditions in the setting of annotated type graphs. This
requires a materialisation construction, similar to
\cite{srw:shape-analysis-3vl}, which we plan to characterize
abstractly, exploiting universal properties in category theory.

Note that our annotations are global, i.e., we count \emph{all} items
that are mapped to a specific item in the type graph. This holds also
for edges, as opposed to UML multiplicities, which are local wrt.\ 
the classes which are related by an
edge (i.e., an association). We plan to study the possibility to
integrate this into our framework and investigate the corresponding
decidability and closure properties.


\noindent \emph{Related work:} As already mentioned there are many
approaches for specifying graph languages. One can not say that one is
superior to the other, usually there is a tradeoff between
expressiveness and decidability properties, 
furthermore they differ in terms of closure properties.

Recognizable graph languages
\cite{c:mso-graphs-I,ce:graph-structure-mso}, which are the
counterpart to regular word languages, 
are closely related with monadic second-order graph logic. If one
restricts recognizable graph languages to bounded treewidth (or
pathwidth as we did), one obtains satisfactory decidability
properties. On the other hand, the size of the resulting graph
automata is often quite intimidating
\cite{bbek:implementation-graph-automata} and hence they are difficult
to work with in practical applications. The use of nested application 
conditions \cite{hp:nested-constraints},
equivalent to first-order logic \cite{r:representing-fol}, has a long
tradition in graph rewriting and they can be used to compute pre- and
postconditions for rules \cite{p:development-correct-gts}. However,
satisfiability and implication are undecidable for first-order logic.

A notion of grammars that is equivalent to context-free (word)
grammars are hyperedge replacement grammars \cite{h:hyper}. Many
aspects of the theory of context-free languages can be transferred
to the graph setting.

In heap analysis the representation of pointer structures to be
analyzed requires methods to specify sets of graphs. Hence both the
TVLA approach by Sagiv, Reps and Wilhelm
\cite{srw:shape-analysis-3vl}, as well as separation logic
\cite{h:resources-concurrency-local,dhy:shape-separation-logic} face
this problem. In \cite{srw:shape-analysis-3vl} heaps are represented
by graphs, annotated with predicates from a three-valued logics (with
truth values \emph{yes}, \emph{no} and \emph{maybe}).

A further interesting approach are forest automata
\cite{ahjltv:heap-forest-automata} that have many interesting
properties, but are somewhat complex to handle.

In \cite{rrlw:diagrammatic-mof-based} the authors study an
approach called Diagram Predicate Framework (DPF), in which type
graphs have annotations based on generalized sketches. This
formalism is intended for MOF-based modelling languages and allows
more complex annotations than our framework.

\bibliography{references}
\bibliographystyle{plain}


\full{
\appendix


\newpage
\section{Proofs}
\label{sec:proofs}

\subsection{Languages Specified by Type or Restriction Graphs}

\begin{proposition_for}{prop:closure properties of pure and restriction}{.}
  \closurePropsPureAndRestrict
\end{proposition_for} 

\begin{proof}
  The product $T_1 \times T_2$ has the property that for any graph $G$
  we have $G \to T_1 \times T_2$ if and only if $G \to T_1$ and
  $G \to T_2$. Hence, given two type graphs $T_1$ and $T_2$, by the
  universal property of the product graph we get immediately the
  following equality:
  $\mathcal{L}(T_1) \cap \mathcal{L}(T_2) = \mathcal{L}(T_1 \times
  T_2)$.

  Dually, given two restriction graphs $R_1$ and $R_2$, we show that
  $\mathcal{L}_R(R_1) \cup \mathcal{L}_R(R_2) =
  \mathcal{L}_R(R_1\oplus R_2)$. In fact,
  $G \not \in \mathcal{L}_R(R_1\oplus R_2)$ iff $R_1\oplus R_2 \to G$,
  iff (by the universal property of coproducts) $R_1 \to G$ and
  $R_2 \to G$, iff $G \not \in \mathcal{L}_R(R_1)$ and
  $G \not \in \mathcal{L}_R(R_2)$, iff
  $G \not \in (\mathcal{L}_R(R_1) \cup \mathcal{L}_R(R_2))$.

For the negative results, we will show counterexamples using the following 
graphs over $\Lambda = 
\{A,B\}$:
  \begin{center}
    \begin{tabular}{ccccc}
      $G_A = $ &
      \begin{tikzpicture}[x=1.5cm,y=-1.2cm,baseline=(1.south)]
      	    \node[gnode] (a1) at (0,0) {} ;
      	    \draw[gedge] (a1) .. controls +(-20:1cm) and +(20:1cm) .. 
      	            node[arlab,right] (lab1) {A} (a1) ;
      \end{tikzpicture}
      & \qquad \qquad \qquad & $G_B = $ &
      \begin{tikzpicture}[x=1.5cm,y=-1.2cm,baseline=(1.south)]
      	    \node[gnode] (a1) at (0,0) {} ;
      	    \draw[gedge] (a1) .. controls +(-20:1cm) and +(20:1cm) .. 
      	            node[arlab,right] (lab1) {B} (a1) ;
      \end{tikzpicture}
    \end{tabular}
  \end{center}

  We first show by contradiction that there is no type graph $T$ such
  that $\mathcal{L}(T) = \mathcal{L}(G_A) \cup \mathcal{L}(G_B)$. In
  fact, the type graph language $\mathcal{L}(G_A)$ contains all graphs
  which do not have any $B$-labeled edge, and $\mathcal{L}(G_B)$
  contains all graphs which do not have any $A$-labeled edge. Since
  $G_A, G_B \in \mathcal{L}(G_A) \cup \mathcal{L}(G_B)$, we would have
  $G_A \to T$ and $G_B \to T$, which implies that $T$ contains at
  least one $A$-labeled loop (thus $T \nrightarrow G_B$) and one
  $B$-labeled loop (thus $T \nrightarrow G_A$). It follows that
  $T \not \in \mathcal{L}(G_A) \cup \mathcal{L}(G_B)$, but instead
  $T \in \mathcal{L}(T)$ yielding a contradiction.

  Now we show by contradiction that there is no restriction graph $R$
  such that
  $\mathcal{L}_R(R) = \mathcal{L}_R(G_A) \cap \mathcal{L}_R(G_B)$. In
  fact, if such an $R$ exists we would have
  $\mathcal{L}_R(R) \subseteq \mathcal{L}_R(G_A)$, and thus
  $R \to G_A$ by Proposition~\ref{prop:decidability problems of pure
    and restriction}.4, and
  $\mathcal{L}_R(R) \subseteq \mathcal{L}_R(G_B)$, and thus
  $R \to G_B$. But $R \to G_A$ means that $R$ has no $B$-edges, and
  $R \to G_B$ that it has no $A$-edges, thus $R$ must be discrete.
  This yields a contradiction, because any graph $G$ with no loops but
  with at least one edge belongs to
  $\mathcal{L}_R(G_A) \cap \mathcal{L}_R(G_B)$ but not to
  $\mathcal{L}_R(R)$, because $R \to G$.  


  The lack of closure under complementation immediately follows from
  these negative results and the fact that union can be expressed
  using intersection and complementation, and dually. \qed
\end{proof}

\begin{proposition_for}{prop:decidability problems of pure and restriction}{.}
  \decidabilityPureAndRestrict
\end{proposition_for} 

\begin{proof}

  \begin{enumerate}
  \item To decide whether $G \in \mathcal{L}(T)$ (or
    $G \in \mathcal{L}_R(R)$) holds, we need to check for the
    existence of a morphism $\varphi \colon G \to T$ (or for the
    non-existence of a morphism $\varphi \colon R \to G$), which is
    obviously possible because graphs are finite.  Nevertheless, note
    that this problem is NP-complete. For instance, searching for a
    morphism from any graph into the 3-clique is the same as deciding
    if the graph is 3-colorable.
  
  \item The emptiness problem is pretty trivial. If
    $\mathcal{L} = \mathcal{L}(T)$ for a type graph $T$, then
    $\mathcal{L}(T) \neq \emptyset$ because it holds
    $\emptyset \in \mathcal{L}(T)$ (recall that $\emptyset$ is the
    initial object of $\GR$).

    If instead $\mathcal{L} = \mathcal{L}_R(R)$ for a restriction
    graph $R$, then $\mathcal{L} = \emptyset$ if and only if
    $R = \emptyset$.  In fact, if $R = \emptyset$ then $R \to G$ for
    all $G \in \GR$, and thus $\mathcal{L}_R(R) = \emptyset$.  Instead
    if $R \not = \emptyset$ then clearly $R \nrightarrow \emptyset$,
    thus $\emptyset \in \mathcal{L}_R(R) \not = \emptyset$.
  
  \item We show that $\mathcal{L}(T_1) \subseteq \mathcal{L}(T_2)$ iff
    $T_1 \to
    T_2$, which is decidable. \\
    $\Rightarrow$: Assume
    $\mathcal{L}(T_1) \subseteq \mathcal{L}(T_2)$ holds.  Since
    $T_1 \in \mathcal{L}(T_1)$ holds then $T_1 \in
    \mathcal{L}(T_2)$ also holds and therefore  $T_1 \to T_2$. \\
    $\Leftarrow$: Assume $T_1 \to T_2$ holds, and let
    $G \in \mathcal{L}(T_1)$.  Therefore $G \to T_1$, and by
    transitivity $G \to T_2$, thus $G \in \mathcal{L}(T_2)$.
  
  \item We show that $\mathcal{L}_R(R_1) \subseteq \mathcal{L}_R(R_2)$
    iff
    $R_1 \to R_2$. \\
    $\Rightarrow$: Assume that
    $\mathcal{L}_R(R_1) \subseteq \mathcal{L}_R(R_2)$
    holds. Equivalently,
    $\{G \mid R_2 \to G\} = \overline{\mathcal{L}_R(R_2)} \subseteq
    \overline{\mathcal{L}_R(R_1)} = \{G \mid R_1 \to G\}$, where we
    wrote $\overline{\mathcal{L}}$ for the complement language
    $(\GRL\setminus
    \mathcal{L})$. Thus, since obviously $R_2 \to R_2$, we obtain $R_1 \to R_2$.\\
    $\Leftarrow$: Assume that $R_1 \to R_2$ holds and that
    $G \in {\mathcal{L}_R(R_1)}$, which means $R_1 \nrightarrow
    G$. If, by contradiction, $G \not \in {\mathcal{L}_R(R_2)}$, then
    we have $R_2 \to G$ and, by transitivity, $R_1 \to G$, which is a
    contradiction.\qed
   \end{enumerate} 
\end{proof}

\begin{proposition_for}{prop:closure-rg}{.}
  \closureUnderRewritingRestrict
\end{proposition_for}

\begin{proof}~
  \noindent $\Leftarrow$: Assume that the condition holds. Now let
  $G,H$ with $G\Rightarrow_\rho H$. Instead of showing that
  $G\in \mathcal{L}_R(S)$ implies $H \in \mathcal{L}_R(S)$, we show
  that $H \not\in \mathcal{L}_R(S)$ implies
  $G\not\in \mathcal{L}_R(S)$.

  Since $G\Rightarrow_\rho H$ we have the following DPO
  diagram (below, on the left) for a rule
  $\rho = (L \arleft[\varphi_L] I \arright[\varphi_R] R) \in
  \mathcal{R}$. Furthermore, since $H \not\in \mathcal{L}_R(S)$, there
  exists a morphism $S\to H$.
  \[
    \xymatrix{
      L \ar[d] & I \ar[l]_{\phi_L} \ar[r]^{\phi_R} \ar[d] & R
      \ar[d]_{\alpha} 
      & S \ar[ld]^{\beta} \\ 
      G & C \ar[l] \ar[r] & H
    }
    \quad
    \xymatrix{
      L \ar[d] & I \ar[l]_{\phi_L} \ar[r]^{\phi_R} \ar[d] & R
      \ar[d]_{\alpha} 
      & S \ar[ld]^{\beta} \ar@/_.5pc/@{.>}[llld] \\ 
      E \ar@{>->}[d] & C' \ar@{>->}[d] \ar[l] \ar[r] & F \ar@{>->}[d] & \\
      G & C \ar[l] \ar[r] & H &
    }
  \]
  Now take the joint image $F$ of $R$ and $S$ in $H$, i.e., factor the
  morphisms $\alpha$, $\beta$ into $R\to F\rightarrowtail H$ and
  $S\to F\rightarrowtail H$, where the arrows $R\to F$, $S\to F$ are
  jointly epi. Since we are working in an adhesive category, the
  pushouts split into pushouts according to \cite{ls:adhesive-journal}
  (see diagram above, on the right). Now, $E$ is obtained from $F$ by
  applying rule $\rho$ backwards. Hence, the condition implies that
  there exists a morphism $S\to E$ and this means that there is a
  morphism $S\to G$, which implies $G\not\in \mathcal{L}_R(S)$.

  \smallskip

  \noindent $\Rightarrow$: Assume that $\mathcal{L}_R(S)$ is closed
  under rewriting via a rule $\rho$. We show that the condition
  holds. Let $\alpha\colon R\to F$, $\beta\colon S\to F$ be a pair of
  morphisms which are jointly epi and assume that $E$ is obtained from
  $F$ by applying $\rho$ backwards.

  Now, since $E\Rightarrow_\rho F$ and $H\not\in \mathcal{L}_R(S)$, we
  infer that $G\not\in \mathcal{L}_R(S)$, otherwise we would have a
  counterexample to closure under rewriting. Hence there exists a
  morphism $S\to E$. \qed
\end{proof}

For the next result we need to recall the following lemma presented in~\cite{nt:duality-finite-structures}.

\begin{lemma}[Lemma 2.1 of~\cite{nt:duality-finite-structures}]
\label{le:cores}
Let $T$ be a graph and $\core{T}$ be its core. Then for each morphism
$f: T \to \core{T}$ there exists a morphism $f': \core{T} \to T$ such
that $f \circ f' = id_{\core{T}}$. Vice versa, for each morphism
$g': \core{T} \to T$ there exists a morphism $g: T \to \core{T}$ such
that $g \circ g' = id_{\core{T}}$.
\end{lemma}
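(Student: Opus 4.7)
The plan is to first isolate a key auxiliary fact about cores, namely that every endomorphism of a core is an automorphism, and then deduce both halves of the lemma as almost-symmetric corollaries by pre- or post-composing with the retraction/inclusion pair witnessing that $\core{T}$ is a retract of $T$.

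First, I would fix once and for all the two morphisms given by the definition of retract: an inclusion $i\colon \core{T}\to T$ (exhibiting $\core{T}$ as a subgraph of $T$) and a retraction $r\colon T\to \core{T}$ satisfying $r\circ i = id_{\core{T}}$. These exist by the definition of core preceding the lemma.

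Next, I would prove the auxiliary claim: \emph{every endomorphism $e\colon \core{T}\to \core{T}$ is an automorphism.} Since $\core{T}$ is finite, the sequence $e, e^2, e^3, \ldots$ takes only finitely many values, so there exist $k\ge 0$ and $n\ge 1$ with $e^{k+n}=e^k$, whence $e^{k+jn}=e^k$ for all $j\ge 0$. Choosing $m$ a multiple of $n$ with $m\ge k$ gives $e^{2m}=e^m$, i.e.\ $e^m$ is idempotent. Its image $S\subseteq \core{T}$ is then fixed pointwise by $e^m$, so $S$ is a retract of $\core{T}$. Because $\core{T}$ has no proper retracts, $S=\core{T}$, so $e^m$ is surjective on nodes and edges, hence bijective by finiteness, hence an automorphism. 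This forces $e$ itself to be a bijection and therefore an automorphism. The slightly delicate point here is checking that ``retract'' in the edge-labelled setting still forces equality in the no-proper-retract condition, but this is immediate from the definition preceding the lemma.

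With the auxiliary claim in hand, both halves of the lemma are one-line constructions. For the first part, given $f\colon T\to \core{T}$, the composite $f\circ i\colon \core{T}\to \core{T}$ is an endomorphism of $\core{T}$, hence an automorphism; letting $\sigma = (f\circ i)^{-1}$ and setting $f' = i\circ \sigma$ gives $f\circ f' = (f\circ i)\circ \sigma = id_{\core{T}}$. For the converse, given $g'\colon \core{T}\to T$, the composite $r\circ g'$ is again an endomorphism of $\core{T}$, hence an automorphism; letting $\tau = (r\circ g')^{-1}$ and setting $g = \tau \circ r$ gives $g\circ g' = \tau\circ(r\circ g') = id_{\core{T}}$.

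The only real obstacle is the auxiliary claim about endomorphisms of cores; everything else is formal manipulation of the composition. Once that claim is established, the two halves of the lemma are essentially dual and require no further ideas.
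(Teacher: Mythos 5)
Your proof is correct, and it is essentially the standard argument behind Lemma~2.1 of the cited source; note that the paper itself does not prove this lemma but simply quotes it from \cite{nt:duality-finite-structures}, so there is no in-paper proof to compare against. The heart of your argument --- that every endomorphism $e$ of a finite core is an automorphism, obtained by extracting an idempotent power $e^m$ whose image is a retract and hence all of $\core{T}$ --- is exactly the key fact used there, and your two one-line deductions from it are sound. One small remark: the paper's definition of retract only guarantees the existence of \emph{some} morphism $r\colon T\to\core{T}$, not one with $r\circ i=id_{\core{T}}$ as you assert at the outset; but this is harmless, since your constructions of $f'=i\circ\sigma$ and $g=\tau\circ r$ never actually use the identity $r\circ i=id_{\core{T}}$, only the existence of $i$ and $r$ together with the auxiliary claim.
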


\begin{proposition_for}{prop:closure-tg}{.}
  \closureUnderRewritingTypeGraphs
\end{proposition_for}

\begin{proof}~

  \noindent $\Rightarrow$: \quad Notice that if
  $L \not \in \mathcal{L}(T)$ then there is no morphism
  $t_L\colon L\to \core{T}$ and we are done. Otherwise, let
  $t_L\colon L\to \core{T}$, and let $n: I \to \core{T}$ be defined as
  $n = t_L \circ \phi_L$.  Consider the following diagram, where the
  top span is the rule, and the two squares are built as pushouts ($A$
  is the pushout of $\phi_L,n$; $B$ is the pushout of $\phi_R,n$).

  \centerline{\xymatrix{
      L \ar[d] \ar@/_40pt/[ddr]_{t_L} &
      I \ar[l]_{\phi_L} \ar[r]^{\phi_R} \ar[d]_n&
      R \ar[d]_m \\
      {A } \ar[dr] &
      \core{T} \ar[r]^f  \ar[l] \ar[d]_{id} &
      B \ar[dl]^{g'}\\
      & \core{T} 
    }
  }
  
  Arrow $A \to \core{T}$ is uniquely determined because the left
  square is a pushout and $id \circ n = t_L \circ \phi_L$.  This arrow
  witnesses that $A \in \mathcal{L}(T)$ (because $\core{T} \to T$),
  and thus by assumption $B \in \mathcal{L}(T)$, because obviously
  $A \Rightarrow_\rho B $. Therefore we know that $B \to T$, and thus
  that there is an arrow $g: B \to \core{T}$.  In general, this arrow
  does not make the lower right triangle commute, but given that we
  also have arrow $f: \core{T} \to B$ as the base of the right
  pushout, it follows that $B\sim T$ and hence
  $\core{B} \cong \core{T}$. Therefore by Lemma~\ref{le:cores}, we
  know that there is an arrow $g': B \to \core{T}$ such that
  $g' \circ f = id_{\core{T}}$ (in particular,
  $g' = (g\circ f)^{-1}\circ g$). Therefore in the above diagram also
  the lower right triangle commutes, and arrow
  $t_R = g' \circ m: R \to \core{T}$ satisfies
  $t_L\circ\varphi_L = t_R\circ\varphi_R$, as desired.

  \noindent $\Leftarrow$: \quad Assume that $G\in \mathcal{L}(T)$ by
  morphism $t_G: G \to T$, and that $G$ is rewritten to $H$ via rule
  $\rho = (L \arleft[\varphi_L] I \arright[\varphi_R] R)\in
  \mathcal{R}$. Also, let $t$ be any arrow from $T$ to $\core{T}$.
  This gives us the diagram below, where the two squares are pushouts,
  and the left triangle commutes by taking for $C \to \core{T}$ the
  composition $t \circ t_G\circ\psi_L$.
  \begin{center}
    \begin{tikzpicture}[shorten >=1pt, node distance=15mm and 15mm, on grid]
      \draw
      node (L) at (0,0) {\(L\)}
      node [right=of L] (I) {\(I\)}
      node [right=of I] (R) {\(R\)}
      node [below=of L] (G) {\(G\)}
      node [below=of I] (C) {\(C\)}
      node [below=of R] (H) {\(H\)}
      node [below=of C] (T) {\(\core{T}\)};
      \begin{scope}[->]
        \draw (L) -- node[midway, left=.1mm] {\(m\)} (G);
        \draw (I) -- node[midway, above=.1mm] {\(\varphi_L\)} (L);
        \draw (I) -- node[midway, left=.1mm] {\(n\)} (C);
        \draw (I) -- node[midway, above=.1mm] {\(\varphi_R\)}(R);
        \draw (R) -- node[midway, left=.1mm] {\(m'\)} (H);
        \draw (C) -- node[midway, above=.1mm] {\(\psi_L\)} (G);
        \draw (C) -- node[midway, above=.1mm] {\(\psi_R\)} (H);
        \draw (G) -- node[midway, left=2mm] {\(t \circ t_G \)} (T);
        \draw (C) -- (T);
        \draw [dotted] (R) edge[bend left=80] node[midway, right=2mm] {\(t_R\)} 
        (T);
        \draw [dotted] (H) -- node[midway, right=2mm] {\(t_H\)} (T);
      \end{scope}
    \end{tikzpicture}
  \end{center}

  \noindent
  By assumption, since 
  $t\circ t_G\circ m\colon L\to \core{T}$, there exists a morphism
  $t_R\colon R\to \core{T}$ such that
  $t\circ t_G\circ m\circ \phi_L = t_R\circ \phi_R$. This means that the
  square consisting of $I,C,R,\core{T}$ commutes, that is
  $t \circ t_G\circ\psi_L\circ n = t \circ t_G\circ m\circ \phi_L =
  t_R\circ\phi_R$. Hence there exists a mediating morphism
  $t_H\colon H\to \core{T}$, which implies $H\in \mathcal{L}(T)$ because $\core{T} \to T$. \qed
\end{proof}

\subsection{Type Graph Logic}

\begin{proposition_for}{prop:decidability problems of the type graph logic}{.}
  \decidabilityTypeGraphLogic
\end{proposition_for}

\begin{proof}~
  
  \noindent \emph{Membership:} The membership problem for graph
  languages over $\TGL$ formulae is decidable since it is decidable
  for every type graph language $\mathcal{L}(T)$. We simply build the
  syntax tree of the formula $F$ and search for morphisms
  $\varphi_i \colon G \to T_i$ at the leafs of the tree.  Afterwards
  we pass the boolean results up to the root to decide whether
  $G \in \mathcal{L}(F)$ holds.
  
  \noindent \emph{Emptiness:} In order to show whether
  $\mathcal{L}(F) = \emptyset$ holds, we transform $F$ into
  disjunctive normal form (DNF). 
  It is sufficient to check whether all conjunctions of the form
  $(T_0\land \lnot T_1\land \dots \land \lnot T_n)$ are unsatisfiable.
  We can assume that there is at most one positive type graph in every
  conjunction, since type graphs are closed under
  conjunction/intersection. Furthermore we can even assume that there
  is exactly one positive type graph, since we always add
  $T_{\text{\ding{92}}}$ (the flower graph).

  Now we have: 
  \begin{eqnarray*}
    && \mathcal{L}(T_0 \land \lnot T_1
    \land \ldots \land \lnot T_n) = \emptyset \\
    & \iff & \mathcal{L}(T_0 \land \lnot (T_1
    \lor \ldots \lor T_n) ) = \emptyset \\
    & \iff & \mathcal{L}(T_0) \cap \overline{\mathcal{L}
    (T_1 \lor \ldots \lor T_n)} = \emptyset \\
    & \iff & \mathcal{L}(T_0) \subseteq
    \mathcal{L}(T_1 \lor \ldots \lor T_n) \\
    & \iff & \mathcal{L}(T_0) \subseteq
    \mathcal{L}(T_1) \cup \ldots \cup \mathcal{L}(T_n) \\
    & \iff & \exists \varphi \colon T_0 \to T_k \quad \mbox{for some index
      $1\le k\le n$}
  \end{eqnarray*}

  \noindent
  Therefore, we need to check whether for each of the conjunctions
  $(T_0\land \lnot T_1\land \dots \land \lnot T_n)$ in the DNF of $F$,
  there exists a morphism $\varphi \colon T_0 \to T_k$ for some
  $1 \leq k \leq n$ .
  
  \noindent \emph{Inclusion:} The language inclusion problem can be
  reduced to the aforementioned emptiness problem. To solve the
  language inclusion we use the following equivalence:
  $$ \mathcal{L}(F_1) \subseteq \mathcal{L}(F_2) \iff \mathcal{L}(F_1 \land 
  \lnot F_2) = \emptyset $$ 
  Since the emptiness problem is decidable we can conclude that the language 
  inclusion problem is decidable as well. \qed
\end{proof}



\subsection{Annotated Type Graphs}

\begin{lemma_for}{le:composition of legal morphism}{.}
  \compOfLegalMorph
\end{lemma_for}

\begin{proof}
  Let $\varphi_1 \colon T_1[M_1] \to T_2[M_2]$ and
  $\varphi_2 \colon T_2[M_2] \to T_3[M_3]$ be two legal morphisms in
  the category of multiply annotated graphs. Since $\varphi_1$ is
  legal we get that for all $(\ell_1,u_1) \in M_1$ there exists
  $(\ell_2,u_2) \in M_2$ such that
  $\ell_2 \leq \mathcal{A}_{\phi_1}(\ell_1)$ and
  $\mathcal{A}_{\phi_1}(u_1) \leq u_2$ hold. Furthermore $\varphi_2$
  is legal and therefore we get that for all $(\ell_2,u_2) \in M_2$
  there exists $(\ell_3,u_3) \in M_3$ such that
  $\ell_3 \leq \mathcal{A}_{\phi_2}(\ell_2)$ and
  $\mathcal{A}_{\phi_2}(u_2) \leq u_3$ hold as well. We define
  $\varphi \colon T_1[M_1] \to T_3[M_3]$ to be the composed morphism
  with $\varphi = \varphi_2 \circ \varphi_1$ and due to the fact that
  $\mathcal{A}$ is a functor which preserves monotonicity, we get the
  following two inequalities:

  \vspace*{-.5cm}\begin{center}
  \begin{minipage}[b]{.45\linewidth}
    \begin{align*}
        \phantom{\equiv}&& \ell_2 &\leq \mathcal{A}_{\phi_1}(\ell_1) \\
        \Rightarrow && \mathcal{A}_{\phi_2}(\ell_2) &\leq 
        \mathcal{A}_{\phi_2}(\mathcal{A}_{\phi_1}(\ell_1)) \\
        \Rightarrow && \ell_3 \leq \mathcal{A}_{\phi_2}(\ell_2) &\leq 
        \mathcal{A}_{\phi_2 \circ \phi_1}(\ell_1) \\
        \Rightarrow && \ell_3 &\leq 
        \mathcal{A}_{\phi}(\ell_1) 
      \end{align*}
  \end{minipage}
  \rule{1pt}{2.1cm} \ 
  \begin{minipage}[b]{.48\linewidth}
    \begin{align*}
        \phantom{\equiv}&&
        \mathcal{A}_{\phi_1}(u_1) &\leq u_2 \\
        \Rightarrow && 
        \mathcal{A}_{\phi_2}(\mathcal{A}_{\phi_1}(u_1)) &\leq 
        \mathcal{A}_{\phi_2}(u_2)  \\
        \Rightarrow && 
        \mathcal{A}_{\phi_2 \circ \phi 1}(u_1) &\leq 
        \mathcal{A}_{\phi_2}(u_2) \leq u_3 \\
        \Rightarrow && 
        \mathcal{A}_{\phi}(u_1) &\leq u_3
      \end{align*}
  \end{minipage}
  \end{center}

  \noindent Since both $\ell_3 \leq \mathcal{A}_{\phi}(\ell_1)$ and
  $\mathcal{A}_{\phi}(u_1) \leq u_3$ hold, the morphism $\varphi$ is
  legal. \qed
\end{proof}

\begin{proposition_for}{prop:emptiness is decidable for annotated type 
graphs}{.}
  \emptinessCheckAnnot
\end{proposition_for}

\begin{proof}
  \noindent $\Leftarrow$: Assume that $M=\emptyset$, in this case
  $\mathcal{L}(T[M])$ is clearly empty as well. Assume that there is
  an annotation $(\ell,u)\in M$ such that $\ell(e) \ge 1$ and
  ($u(\sSrc(e)) = 0$ or $u(\sTgt(e)) = 0$) for some edge $e\in E_T$. Then
  no graph can satisfy these lower and upper bounds, since we are
  forced to map at least one edge to $e$, but are not allowed to map
  any node to the source respectively target node. If this is true for
  all annotations, the language of the type graph must be empty.

  \noindent $\Rightarrow$: Now assume that $M \neq \emptyset$ and
  there exists one annotation $(\ell,u)\in M$ such that for every
  edge $e\in E_T$ with $\ell(e) \ge 1$ we have $u(\sSrc(e)) \ge 1$ and
  $u(\sTgt(e)) \ge 1$). 

  Now take $T[\ell,u]$ and remove from $T$ all edges and nodes $x$
  with $u(x) = 0$, resulting in a graph $T'$. If a node is removed
  all incident edges are removed as well. Note that in such a case
  only edges $e$ with $\ell(e) = 0$ will be removed (due to the
  condition above).

  Now define $\ell' = \ell|_{T'}$ and $u' = u|_{T'}$. Due to the considerations 
  above there exists a legal morphisms (embedding)
  $T'[\ell',u']\mor T[\ell,u]$, since the removed items
  had a lower bound of $0$.  Furthermore each remaining item has an
  upper bound of at least $1$, i.e., it represents at least one node
  or edge.

  Now construct a graph $G$ from $T'$ by proceeding as follows: for
  every node $v$ with $\ell'(v) = k$ add $k-1$ isolated nodes (zero
  isolated nodes if $k=0$). For every edge $e$ with $\ell'(e) = k$ put
  $k$ parallel edges between $\sSrc(e),\sTgt(e)$. There is a morphism
  $\phi\colon G\to T'$ obtained by mapping every item to the item from
  which it orginated.

  Mapping $G[s_G,s_G]$ to $T'$ via $\phi$ will give us an annotation
  $\mathcal{B}^n_\phi(s_G)$. This annotation will coincide will the
  lower bound $\ell$ in all cases, apart from the case where there is a
  node $v$ with $\ell(v) = 1$. In this case
  $\mathcal{B}^n_\phi(s_G)(v) = 1$, but this is covered by the upper
  bound which is at least $1$.

  Hence there is a legal graph morphism from $G[s_G,s_G]$ to
  $T'[\ell',u']$ and -- by composition -- to $T[\ell,u]$. Hence
  $G\in \mathcal{L}(T[M])$ and hence $\mathcal{L}(T[M])\neq \emptyset$.
  %
  %
  \qed
\end{proof}

\begin{proposition_for}{prop:language inclusion using legal morphisms}{.}
  \languageinclusionWithLegalMorphism
\end{proposition_for}

\begin{proof}
  Every graph $G \in \mathcal{L}(T_1[M])$ has a legal morphism
  $\varphi' \colon G[s_G,s_G] \to T_1[M]$. Whenever there exists a
  legal morphism $\varphi \colon T_1[M] \to T_2[N]$ between the two
  multiply annotated type graphs, we obtain the morphism
  $\eta \colon G[s_G,s_G] \to T_2[N]$ with
  $\eta = \varphi \circ \varphi'$ which is legal due to Lemma
  \ref{le:composition of legal morphism}. Therefore $G \in T_2[N]$
  holds and we can conclude that
  $\mathcal{L}(T_1[M]) \subseteq \mathcal{L}(T_2[N])$ also holds. \qed
\end{proof}

\begin{proposition_for}{prop:c is a functor}{.}
  \countingCospanFunctor
\end{proposition_for}

\begin{proof} 

  \noindent $1.$ The identity relation $id_{C_{T[M]}(G)}$ consists of
  all pairs $(i,i)$ with $i \in C_{T[M]}(G)$. Let the two states
  $i,j \in \mathcal{C}_{T[M]}(G)$ be given with
  $i = (f_1 \colon G \to T,b_1)$ and $j = (f_2 \colon G \to
  T,b_2)$. The pair $(i,j)$ is in the relation
  $\mathcal{C}_{T[M]}(id_G)$ if and only if there exists a morphism
  $h \colon G \to T$ such that, for all\footnote{We write $x\in T$ as
    an abbreviation for $x\in V_T\cup E_T$.} $x \in T$ the equation
  $b_2(x) = b_1(x) + | \{ y \in (G \setminus id(G)) \mid h(y)=x \}|$
  holds and the following diagram commutes:
  \begin{center}
    \begin{tikzpicture}[shorten >=1pt, node distance=15mm and 15mm, on grid]
      	\draw
      		node (J) at (0,0) {\(G\)}
      		node [right=of J] (G) {\(G\)}
      		node [right=of G] (K) {\(G\)}
      		node [below=of G] (T) {\(T\)};
    
      	\begin{scope}[decoration={brace, raise=6mm}]
      		\draw[decorate] (J.west) -- node[midway, above=6mm] {\(id_G \colon G 
      		\mor G\)} (K.east);
    
      	\end{scope}	
      	\begin{scope}[->]
      		\draw (J) -- node[midway, left=2mm] {\(f_1\)}  (T);
      		\draw (J) -- node[midway, above=.1mm] {\(id\)} (G);
      		\draw[dashed] (G) -- node[above right=.2mm] {\(\exists h\)} (T);
      		\draw (K) -- node[midway, above=.1mm] {\(id\)}(G);
      		\draw (K) -- node[midway, right=2mm] {\(f_2\)} (T);
      	\end{scope}
    \end{tikzpicture}
  \end{center}
Since the diagram commutes we obtain that $f_1 = f_2$ since $f_1 = id;h = h;id 
= 
f_2$ holds and for all $x \in T$ the annotation functions $b_1$ and $b_2$ are 
equal due to the following equation:
\begin{align*}
  b_2(x) &= b_1(x) + | \{ y \in (G \setminus id(G)) \mid h(y)=x \}| \\
         &= b_1(x) + | \{ y \in \emptyset \mid h(y)=x \}| = b_1(x) + 0 = b_1(x)
\end{align*}
This is equivalent to $i=j$ and therefore for all $i \in C_{T[M]}(G)$ the 
following equation holds:
\begin{align*}
\mathcal{C}_{T[M]}(id_G) &= \{ (i,j) \in C_{T[M]}(G) \times C_{T[M]}(G) \mid 
i=j \} \\
&= id_{C_{T[M]}(G)}
\end{align*}
Therefore $\mathcal{C}_{T[M]}(id_G) = id_{C_{T[M]}(G)}$ holds. \\

\noindent In the following part let $c_1 \colon J \arright[g_1] G \arleft[g_2] 
K$ and $c_2 \colon K \arright[g_1'] H \arleft[g_2'] L$ be given and let $c = 
c_1;c_2$ with $c \colon J \arright[g_1;j_1] G' \arleft[g_2';j_2] L$ be the 
composed morphism of $c_1$ and $c_2$. \\

\noindent $2. \subseteq :$  Let $(i,j) \in 
\mathcal{C}_{T[M]}(c_1;c_2)$ be given with $i \in \mathcal{C}_{T[M]}(J)$ 
and $j \in \mathcal{C}_{T[M]}(L)$ such that $i = (f_1 \colon J \to T, 
b_1)$ and $j = (f_3 \colon L \to T, b_3)$. Then there exists a 
morphism $h \colon G' \to T$ such that $b_3(x) = b_1(x) + | \{ y 
\in (G'\setminus g_2';j_2(L)) \mid h(y)=x \}|$ holds for all $x \in T$ and the 
following diagram commutes:

  \begin{center}
    \begin{tikzpicture}[shorten >=1pt, node distance=12mm and 15mm, on grid]
      	\draw
      		node (J) at (0,0) {\(J\)}
      		node [right=of J] (G) {\(G\)}
      		node [right=of G] (K) {\(K\)}
      		node [right=of K] (H) {\(H\)}
      		node [right=of H] (L) {\(L\)}
      		node [below=of K] (G') {\(G'\)}
      		node [below=of G'] (T) {\(T\)};
    
      	\begin{scope}[decoration={brace, raise=6mm}]
      		\draw[decorate] (J.west) -- node[midway, above=7mm] {\(c_1 \colon J 
      		\mor K\)} (K.center);
      	\end{scope}	
      	\begin{scope}[decoration={brace, raise=6mm}]
      		\draw[decorate] (K.center) -- node[midway, above=7mm] {\(c_2 \colon K 
      		\mor L\)} (L.east);
      	\end{scope}	

      	\begin{scope}[->]
      		\draw (J) -- node[midway, left=2mm] {\(f_1\)}  (T);
      		\draw (J) -- node[midway, above=.1mm] {\(g_1\)} (G);
      		\draw (K) -- node[midway, above=.1mm] {\(g_1'\)} (H);
      		\draw (L) -- node[midway, above=.1mm] {\(g_2'\)} (H);
      		\draw (G) -- node[midway, below=.1mm] {\(j_1\)} (G');
      		\draw (H) -- node[midway, below=.1mm] {\(j_2\)} (G');
      		\draw[dashed] (G') -- node[above right=.2mm] {\(\exists h\)} (T);
      		\draw (K) -- node[midway, above=.1mm] {\(g_2\)}(G);
      		\draw (L) -- node[midway, right=2mm] {\(f_3\)} (T);
      	\end{scope}
    \end{tikzpicture}
  \end{center}

\noindent To prove that $(i,j) \in 
\mathcal{C}_{T[M]}(c_1);\mathcal{C}_{T[M]}(c_2)$ is satisfied from the above 
properties, we need to show that there 
exists a $k \in \mathcal{C}_{T[M]}(K)$ where $k = (f_2 \colon K \to T, b_2)$ 
such that $(i,k) \in \mathcal{C}_{T[M]}(c_1)$ 
 and $(k,j) \in \mathcal{C}_{T[M]}(c_2)$. 
Let $f_2 = g_2;j_1;h = g_1';j_2;h$. Then there must exist 
two morphisms $h_1 \colon G \to T$, $h_2 \colon H \to T$ such that the 
following six properties hold:
$$
(i,k) \in \mathcal{C}_{T[M]}(c_1)
\left\{
	\begin{array}{ll}
		(1)  & g_1;h_1 = f_1 \\
		(2)  & g_2;h_1 = f_2 \\
		(3)  & \forall x \in T \quad b_2(x) = b_1(x) + | \{ y \in (G \setminus 
		g_2(K)) \mid h_1(y)=x \}|
	\end{array}
\right. $$
$$
(k,j) \in \mathcal{C}_{T[M]}(c_2)
\left\{
	\begin{array}{ll}
		(4)  & g_1';h_2 = f_2 \\
		(5)  & g_2';h_2 = f_3 \\
		(6)  & \forall x \in T \quad b_3(x) = b_2(x) + | \{ y \in (H \setminus 
		g_2'(L)) \mid h_2(y)=x \}|
	\end{array}
\right.
$$ 
We define $h_1,h_2$ to be $h_1 = j_1;h$ and $h_2 = j_2;h$ which
already satisfy the following four properties:
\begin{align*}
(1) \quad & g_1;h_1 = g_1;j_1;h = f_1 && (4) \quad g_1';h_1 = g_1';j_2;h = 
f_2 \\
(2) \quad & g_2;h_1 = g_2;j_1;h = f_2 && (5) \quad g_2';h_1 = g_2';j_2;h = 
f_3 
\end{align*}
We define $b_2$ with respect to property $(3)$, such that for all $x \in T$  
the equation $b_2(x) = b_1(x) + | \{ y \in (G \setminus g_2(K)) \mid h_1(y)=x 
\}|$ holds. Due to property $(6)$ we get the following equation for $b_3$, 
for which we need to prove that it holds for all elements $x \in T$:
$$ b_3(x) = b_1(x) + | \{ y \in (G \setminus g_2(K)) \mid h_1(y)=x 
\}| + | \{ y \in (H \setminus g_2'(L)) \mid h_2(y)=x \}| $$ We prove
the following equation $(7)$ instead for all $x \in T$, from which we
can easily derive afterwards that also $(6)$ holds: 
\begin{align*}
  & | \{ y \in (G \setminus g_2(K)) \mid h_1(y)=x 
  \}| + | \{ y \in (H \setminus g_2'(L)) \mid h_2(y)=x \}| \\
  = &| \{ y \in (G'\setminus 
  g_2';j_2(L)) \mid h(y)=x \}| \tag{7}
\end{align*}
Since the morphisms $j_1$ and $j_2$ are both injective and $G'$ is the
pushout object of $G$ and $H$ over the common graph $K$ we get that
$G' = j_1(G \setminus g_2(K)) \uplus j_2(H)$. Subtracting all elements
$x \in L$ that are being mapped into $H$ on both sides of the
equation, we get that
$G'\setminus g_2';j_2(L) = j_1(G \setminus g_2(K)) \uplus j_2(H
\setminus g_2'(L))$ holds as well. Using this fact we can prove
equation $(7)$ which holds for all $x \in T$:
\begin{align*}
  &| \{ y \in (G'\setminus g_2';j_2(L)) \mid h(y)=x \}| \\
  = &| \{ y \in \bigl(j_1(G \setminus g_2(K)) \uplus j_2(H \setminus 
  g_2'(L))\bigr) \mid 
  h(y)=x \}| \\
  = &| \{ y \in j_1(G \setminus g_2(K)) \mid 
    h(y)=x \} \uplus \{ y \in j_2(H \setminus 
      g_2'(L)) \mid h(y)=x \}  | \\
  = &| \{ y \in G \setminus g_2(K) \mid 
    (j_1;h)(y)=x \} \uplus \{ y \in H \setminus 
      g_2'(L) \mid (j_2;h)(y)=x \}  | \\
  = &| \{ y \in G \setminus g_2(K) \mid 
    h_1(y)=x \} \uplus \{ y \in H \setminus 
      g_2'(L) \mid h_2(y)=x \}  | \\
  = & | \{ y \in (G \setminus g_2(K)) \mid h_1(y)=x 
  \}| + | \{ y \in (H \setminus g_2'(L)) \mid h_2(y)=x \}| 
\end{align*}
Using equation $(7)$ we conclude that property $(6)$ always holds for all $x 
\in T$:
\begin{align*}
b_3(x) &= b_1(x) + | \{ y \in (G'\setminus g_2';j_2(L)) \mid h(y)=x \}| \\
&= b_1(x) + | \{ y \in (G \setminus g_2(K)) \mid h_1(y)=x 
\}| + | \{ y \in (H \setminus g_2'(L)) \mid h_2(y)=x \}| \\
&= b_2(x) + | \{ y \in (H \setminus g_2'(L)) \mid h_2(y)=x \}| 
\end{align*}
Therefore $(i,j) \in \mathcal{C}_{T[M]}(c_1);\mathcal{C}_{T[M]}(c_2)$ holds as 
well. \\

\noindent $2. "\supseteq" :$ Let two pairs $(i,k) \in \mathcal{C}_{T[M]}(c_1)$ 
and $(k,j) \in \mathcal{C}_{T[M]}(c_2)$ be given with $i \in 
\mathcal{C}_{T[M]}(J)$, 
$k \in \mathcal{C}_{T[M]}(K)$ and $j \in \mathcal{C}_{T[M]}(L)$ 
such that $i = (f_1 \colon J \to T, b_1)$, $k = (f_2 \colon K \to T, b_2)$ and 
$j = (f_3 \colon L \to T, b_3)$. Then in addition there exist two morphisms 
$h_1 \colon G \to T$ and $h_2 \colon H \to T$ such that for all $x \in T$ the 
two equations $b_2(x) = b_1(x) + | \{ y \in (G \setminus g_2(K)) \mid h_1(y)=x 
\}|$ and \\ $b_3(x) = b_2(x) + | \{ y \in (H \setminus 
		g_2'(L)) \mid h_2(y)=x \}|$ both hold and the following diagram 
		commutes:

  \begin{center}
    \begin{tikzpicture}[shorten >=1pt, node distance=12mm and 15mm, on grid]
      	\draw
      		node (J) at (0,0) {\(J\)}
      		node [right=of J] (G) {\(G\)}
      		node [right=of G] (K) {\(K\)}
      		node [right=of K] (H) {\(H\)}
      		node [right=of H] (L) {\(L\)}
      		node [below=of K] (G') {}
      		node [below=of G'] (T) {\(T\)};
    
      	\begin{scope}[decoration={brace, raise=6mm}]
      		\draw[decorate] (J.west) -- node[midway, above=7mm] {\(c_1 \colon J 
      		\mor K\)} (K.center);
      	\end{scope}	
      	\begin{scope}[decoration={brace, raise=6mm}]
      		\draw[decorate] (K.center) -- node[midway, above=7mm] {\(c_2 \colon K 
      		\mor L\)} (L.east);
      	\end{scope}	

      	\begin{scope}[->]
      		\draw (J) -- node[midway, left=2mm] {\(f_1\)}  (T);
      		\draw (J) -- node[midway, above=.1mm] {\(g_1\)} (G);
      		\draw (K) -- node[midway, above=.1mm] {\(g_1'\)} (H);
      		\draw (K) -- node[midway,above left=.5mm] {\(f_2\)} (T);
      		\draw (L) -- node[midway, above=.1mm] {\(g_2'\)} (H);
      		\draw[dashed] (G) -- node[above left=4mm] {\(\exists h_1\)} (T);
      		\draw[dashed] (H) -- node[above right=4mm] {\(\exists h_2\)} (T);
      		\draw (K) -- node[midway, above=.1mm] {\(g_2\)}(G);
      		\draw (L) -- node[midway, right=2mm] {\(f_3\)} (T);
      	\end{scope}
    \end{tikzpicture}
  \end{center}

\noindent To prove that $(i,j) \in \mathcal{C}_{T[M]}(c_1;c_2)$ is satisfied 
from the properties gained so far, we need to show that $b_3(x) = 
b_1(x) + | \{ y \in (G'\setminus g_2';j_2(L)) \mid h(y)=x 
\}|$ holds and that there exists a morphism $h \colon G' \to T$ such that the 
following diagram commutes:

  \begin{center}
    \begin{tikzpicture}[shorten >=1pt, node distance=17mm and 19mm, on grid]
      	\draw
      		node (J) at (0,0) {\(J\)}
      		node [right=of J] (G) {\(G\)}
      		node [right=of G] (K) {\(K\)}
      		node [right=of K] (H) {\(H\)}
      		node [right=of H] (L) {\(L\)}
      		node [below=of K] (G') {\(G'\)}
      		node [below=of G'] (T) {\(T\)}
      		node (PO) at (3.75,-.8) {\((PO)\)};
    
      	\begin{scope}[decoration={brace, raise=6mm}]
      		\draw[decorate] (J.west) -- node[midway, above=7mm] {\(c_1 \colon J 
      		\mor K\)} (K.center);
      	\end{scope}	
      	\begin{scope}[decoration={brace, raise=6mm}]
      		\draw[decorate] (K.center) -- node[midway, above=7mm] {\(c_2 \colon K 
      		\mor L\)} (L.east);
      	\end{scope}	

      	\begin{scope}[->]
      		\draw (J) -- node[midway, left=2mm] {\(f_1\)}  (T);
      		\draw (J) -- node[midway, above=.1mm] {\(g_1\)} (G);
      		\draw (K) -- node[midway, above=.1mm] {\(g_1'\)} (H);
      		\draw (G) -- node[midway, above=.1mm] {\(j_1\)} (G');
      		\draw (H) -- node[midway, above=.1mm] {\(j_2\)} (G');
      		\draw (L) -- node[midway, above=.1mm] {\(g_2'\)} (H);
      		\draw (G) -- node[above left=4mm] {\(h_1\)} (T);
      		\draw[dashed] (G') -- node[above left] {\(\exists h\)} (T);
      		\draw (H) -- node[above right=4mm] {\(h_2\)} (T);
      		\draw (K) -- node[midway, above=.1mm] {\(g_2\)}(G);
      		\draw (L) -- node[midway, right=2mm] {\(f_3\)} (T);
      	\end{scope}
    \end{tikzpicture}
  \end{center}
The morphism $h \colon G' \to T$ exists and is unique due to the universal 
property of pushouts. From the two equations $b_2(x) = b_1(x) + | \{ y \in (G 
\setminus g_2(K)) \mid h_1(y)=x \}|$ and $b_3(x) = b_2(x) + | \{ y \in (H 
\setminus g_2'(L)) \mid h_2(y)=x \}|$ we can derive the following equation 
which holds for all $x \in T$:
$$ b_3(x) = b_1(x) + | \{ y \in (G \setminus g_2(K)) \mid h_1(y)=x 
\}| + | \{ y \in (H \setminus g_2'(L)) \mid h_2(y)=x \}| $$
Using the results of equation $(7)$ from the previous proof direction, we 
directly can conclude that $b_3(x) = 
b_1(x) + | \{ y \in (G'\setminus g_2';j_2(L)) \mid h(y)=x 
\}|$ also holds and therefore $(i,j) \in \mathcal{C}_{T[M]}(c_1;c_2)$ holds, 
which completes this proof. \qed
\end{proof}

\begin{proposition_for}{prop:automaton functor language equivalence}{.}
  \languageEquivalenceCountingCospan
\end{proposition_for}

\begin{proof}
We will prove the following equality:
$$ G \in \mathcal{L}(T[M]) \iff 
\exists i\in I \subseteq \mathcal{C}(\emptyset),\exists j \in F
\subseteq \mathcal{C}(\emptyset) : (i,j) \in \mathcal{C}(c)$$

\noindent "$\Rightarrow$": Since
$(c \colon \emptyset \to G \leftarrow \emptyset) \in
\mathcal{L}(T[M])$ holds, there exists a legal morphism
$\varphi \colon G \to T$ and a pair of multiplicities
$(\ell,u) \in M$ such that
$\ell \leq \mathcal{B}^n_{\phi}(s_G) \leq u$ holds. Let $(i,j)$ be
$i = (f_1 \colon \emptyset \to T,0) \in I$ and
$j = (f_2 \colon \emptyset \to T, \mathcal{B}^n_{\phi}(s_G)) \in F$
which are clearly in the relation $\mathcal{C}(c)$, i.e.
$(i,j) \in \mathcal{C}(c)$ since for all $x \in T$ the equation
$\mathcal{B}^n_{\phi}(s_G)(x) = 0 + | \{ y \in (G \setminus
g_2(\emptyset)) \mid \varphi(y)=x \} | = | \{ y \in G \mid \varphi(y)
= x \} |$ holds by definition and the following diagram commutes:
  \begin{center}
    \begin{tikzpicture}[shorten >=1pt, node distance=15mm and 15mm, on grid]
      	\draw
      		node (J) at (0,0) {\(\emptyset\)}
      		node [right=of J] (G) {\(G\)}
      		node [right=of G] (K) {\(\emptyset\)}
      		node [below=of G] (T) {\(T\)};
    
      	\begin{scope}[decoration={brace, raise=6mm}]
      		\draw[decorate] (J.west) -- node[midway, above=6mm] {\(c \colon 
      		\emptyset \mor \emptyset\)} (K.east);
    
      	\end{scope}	
      	\begin{scope}[->]
      		\draw (J) -- node[midway, left=2mm] {\(f_1\)}  (T);
      		\draw (J) -- node[midway, above=.1mm] {\(g_1\)} (G);
      		\draw (G) -- node[above right=.2mm] {\(\varphi\)} (T);
      		\draw (K) -- node[midway, above=.1mm] {\(g_2\)}(G);
      		\draw (K) -- node[midway, right=2mm] {\(f_2\)} (T);
      	\end{scope}
    \end{tikzpicture}
  \end{center}

\noindent "$\Leftarrow$": There exists $i \in I 
\subseteq \mathcal{C}(\emptyset)$ and $j \in F \subseteq 
\mathcal{C}(\emptyset)$ with $i = (f_1 \colon \emptyset \to T,0)$ and $j = (f_2 
\colon \emptyset \to T, b)$ such that $(i,j) \in \mathcal{C}(c)$ holds. 
Therefore, there exists a pair of multiplicities $(\ell,u) \in M$ with $\ell 
\leq b \leq u$ and we get that there exists a morphism $\varphi \colon G \to 
T$ such that the following diagram commutes:
  \begin{center}
    \begin{tikzpicture}[shorten >=1pt, node distance=15mm and 15mm, on grid]
      	\draw
      		node (J) at (0,0) {\(\emptyset\)}
      		node [right=of J] (G) {\(G\)}
      		node [right=of G] (K) {\(\emptyset\)}
      		node [below=of G] (T) {\(T\)};
    
      	\begin{scope}[decoration={brace, raise=6mm}]
      		\draw[decorate] (J.west) -- node[midway, above=6mm] {\(c \colon 
      		\emptyset \mor \emptyset\)} (K.east);
    
      	\end{scope}	
      	\begin{scope}[->]
      		\draw (J) -- node[midway, left=2mm] {\(f_1\)}  (T);
      		\draw (J) -- node[midway, above=.1mm] {\(g_1\)} (G);
      		\draw[dashed] (G) -- node[above right=.2mm] {\(\exists \varphi\)} (T);
      		\draw (K) -- node[midway, above=.1mm] {\(g_2\)}(G);
      		\draw (K) -- node[midway, right=2mm] {\(f_2\)} (T);
      	\end{scope}
    \end{tikzpicture}
  \end{center}
For all $x \in T$ the following equation holds:
\begin{align*}
  b(x) &= 0 + | \{ y \in (G \setminus g_2(\emptyset)) \mid \varphi(y)=x \} | \\
       &= | \{ y \in G \mid \varphi(y)=x \} | \\
       &= \mathcal{B}^n_{\phi}(s_G)(x)
\end{align*}
From $\ell \leq b \leq u$ we can infer that $\varphi \colon G \to T$
is a legal morphism due to the fact that
$\ell \leq \mathcal{B}^n_{\phi}(s_G) \leq u$ holds as well, and
therefore $G \in \mathcal{L}(T[M])$. \qed
\end{proof}

\begin{proposition_for}{prop:multi annot graphs are closed under intersect}{.}
  \closurePropsAnnotIntersect
\end{proposition_for}

\begin{proof}
  Let two multiply annotated type graphs $T_1[M_1]$ and $T_2[M_2]$ be
  given. Let $T_1\times T_2$ be the usual product graph in the
  underlying category $\GR$.

  We now consider the multiply annotated type graph
  $(T_1 \times T_2)[N]$ where the set of annotations $N$ is defined as
  follows:
  \begin{align*}
    N = \{(\ell,u) \mid &\ \ell,u \in \mathcal{A}(T_1\times T_1)\text{
      such that} \\ &\ \pi_1
    \colon (T_1\times T_2)[\ell,u] \to T_1[M_1] \text{ is legal and } \\
    &\ \pi_2 \colon (T_1 \times T_2)[\ell,u] \to T_2[M_2] \text{ is legal}
    \}
  \end{align*}  

  Therefore for each $(\ell,u) \in N$ there exist
  $(\ell_1,u_1) \in M_1$ and $(\ell_2,u_2) \in M_2$ such that the
  following four properties hold:

  \noindent \begin{minipage}{0.45\textwidth}
    \begin{align*}
      \mathcal{A}_{\pi_1}(\ell) &\geq \ell_1 \qquad\
      \mathcal{A}_{\pi_1}(u) \leq u_1
      \\
      \mathcal{A}_{\pi_2}(\ell) &\geq \ell_2 \qquad
      \mathcal{A}_{\pi_2}(u) \leq u_2
      \\
    \end{align*} \\

  \end{minipage} 
  \begin{minipage}{0.58\textwidth}
    \vspace*{-.9cm}
    \scalebox{.9}{
      \xymatrix{
        & (T_1 \times T_2)[N] \ar@{->}[dl]^{\pi_1} 
        \ar@{->}[dr]_{\pi_2} & \\
        T_1[M_1] & & T_2[M_2]
      }
    }
  \end{minipage}
  \vspace*{-.9cm}
  
  \noindent We will now prove the following equality:
  \[
    \mathcal{L}(T_1[M_1]) \cap \mathcal{L}(T_2[M_2]) = 
    \mathcal{L}((T_1 \times T_2)[N])
  \] 
  
  \noindent $\subseteq$: Let
  $G \in \mathcal{L}(T_1[M_1]) \cap \mathcal{L}(T_2[M_2])$. Then there
  exist two legal morphisms $\varphi_1 \colon G[s_G,s_G] \to T_1[M_1]$
  and $\varphi_2 \colon G[s_G,s_G] \to T_2[M_2]$. Due to the universal
  property of pullbacks in the underlying category $\GR$, there exists
  a unique graph morphism $\eta \colon G \to T_1 \times T_2$ such that
  the following diagram commutes:

  \scalebox{.9}{ \hspace*{3cm} \xymatrix{ & G[s_G,s_G]
      \ar@{->}@/_1pc/[ddl]_{\varphi_1}
      \ar@{->}@/^1pc/[ddr]^{\varphi_2}
      \ar@{.>}[d]^{\eta} & \\
      & (T_1 \times T_2)[N] \ar@{->}[dl]^{\pi_1}
      \ar@{->}[dr]_{\pi_2} & \\
      T_1[M_1] & & T_2[M_2] } } \vspace*{.5cm}

  Since $\varphi_i = \pi_i \circ \eta$ with $i \in \{1,2\}$ is a legal
  morphism, there exist annotations $(\ell_1,u_1) \in M_1$ and
  $(\ell_2,u_2) \in M_2$ such that the following inequalities hold:
  \begin{align*}
    \ell_1 \leq \mathcal{A}_{\phi_1}(s_G) = \mathcal{A}_{\pi_1 \circ
      \eta}&(s_G)
    = \mathcal{A}_{\pi_1}(\mathcal{A}_{\eta}(s_G)) \leq u_1 \\
    \ell_2 \leq \mathcal{A}_{\phi_2}(s_G) = \mathcal{A}_{\pi_2 \circ
      \eta}&(s_G) = \mathcal{A}_{\pi_2}(\mathcal{A}_{\eta}(s_G)) \leq
    u_2
  \end{align*}
  Therefore the pair
  $(\mathcal{A}_{\eta}(s_G),\mathcal{A}_{\eta}(s_G))$ is one of the
  annotations in $N$ and we can conclude that 
  $G \in \mathcal{L}((T_1 \times T_2)[N])$ holds.\\

  \noindent $\supseteq$: We now assume
  $G \in \mathcal{L}((T_1 \times T_2)[N])$. Then there exists a legal
  morphism $\eta \colon G[s_G,s_G] \to (T_1 \times T_2)[N]$ with an
  annotation pair $(\ell,u) \in N$ such that
  $\ell \leq \mathcal{A}_{\eta}(s_G) \leq u$. For each such pair
  $(\ell,u) \in N$ we have two legal morphisms
  $\pi_1 \colon (T_1 \times T_2)[\ell,u] \to T_1[M_1]$ and
  $\pi_2 \colon (T_1 \times T_2)[\ell,u] \to T_2[M_2]$, by
  construction. We obtain two morphisms
  $\varphi_1 \colon G[s_G,s_G] \to T_1[M_1]$ with
  $\varphi_1 = \pi_1 \circ \eta$ and
  $\varphi_2 \colon G[s_G,s_G] \to T_2[M_2]$ with
  $\varphi_2 = \pi_2 \circ \eta$, which are legal due to Lemma
  \ref{le:composition of legal morphism}. Therefore we can conclude
  that $G \in (\mathcal{L}(T_1[M_1]) \cap
  \mathcal{L}(T_2[M_2]))$. \qed
\end{proof}

In order to show closure under union for  annotated type
graphs  over  $\mathcal{B}^n$, we first have a look at the following lemma.

\begin{lemma}
  \label{lem:property-union}
  Assume that we are working with annotations  over  $\mathcal{B}^n$.\\[0.1cm]
      \begin{minipage}[c]{.67\linewidth}
        Let $i \colon A[M] \to T[N]$ and
        $\varphi \colon G[s_G,s_G] \to T[N]$ be two legal graph
        morphisms where $i$ is injective. Let $(\ell,u) \in M$ be one
        of the double multiplicities of the graph $A$. Whenever
        $\mathcal{B}^n_{\varphi}(s_G) \leq \mathcal{B}^n_{i}(u)$, we can
        deduce that there exists a graph morphism
        $\zeta \colon G \to A$ with $i \circ \zeta = \varphi$,
        i.e. the diagram commutes.
      \end{minipage}
      \quad
      \begin{minipage}[c]{.28\linewidth}
        \begin{center}
          \scalebox{.9}{ 
            \xymatrix{
                 & G[s_G,s_G] \ar@{->}[d]^{\varphi} \ar@{.>}[dl]_{\zeta} \\
                 A[M] \ar@{>->}[r]_{i} & T[N] \\
                     }
          } 
        \end{center}
      \end{minipage} 
\end{lemma}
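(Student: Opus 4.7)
The plan is to exploit injectivity of $i$ to show that $\varphi$ must set-theoretically factor through $i$, yielding $\zeta$.

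First, I would unfold the action of the functor $\mathcal{B}^n$ on the injective morphism $i$: for each $x\in V_T\cup E_T$, the sum $\mathcal{B}^n_i(u)(x) = \sum_{i(z)=x} u(z)$ collapses to a single term, equal to $u(z)$ when $x = i(z)$ for the unique preimage $z\in V_A\cup E_A$, and to $0$ when $x\notin i(A)$ (empty sum). Combined with the hypothesis $\mathcal{B}^n_{\varphi}(s_G)\le \mathcal{B}^n_i(u)$, this forces $\mathcal{B}^n_{\varphi}(s_G)(x)\le 0$ for every $x\notin i(A)$; since $0$ is the bottom of $\mathcal{M}_n$, this means $\mathcal{B}^n_{\varphi}(s_G)(x)=0$. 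Unfolding the definition of $\mathcal{B}^n_{\varphi}(s_G)(x)$ as the (capped) cardinality of $\{y\in V_G\cup E_G\mid \varphi(y)=x\}$, this cardinality must in fact be $0$, so no element of $G$ maps outside $i(A)$.

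Second, I would define $\zeta\colon V_G\cup E_G\to V_A\cup E_A$ by $\zeta(y) = i^{-1}(\varphi(y))$; this is well defined because $\varphi(y)\in i(A)$ by the previous step and $i$ is injective, and by construction $i\circ\zeta = \varphi$. It remains to check that $\zeta$ is a graph morphism, i.e.~that it respects source, target and label. This is a routine diagram chase: for any edge $e\in E_G$, we have $i(\zeta(\sSrc[G](e))) = \varphi(\sSrc[G](e)) = \sSrc[T](\varphi(e)) = \sSrc[T](i(\zeta(e))) = i(\sSrc[A](\zeta(e)))$, and injectivity of $i$ on nodes yields $\zeta(\sSrc[G](e)) = \sSrc[A](\zeta(e))$; the argument for targets is identical, and labels are preserved because $i$ and $\varphi$ both do so.

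No genuine obstacle arises: the construction of $\zeta$ is forced once we observe that $\varphi$ cannot hit elements of $T$ outside $i(A)$, and the verification that $\zeta$ is a graph morphism uses only the injectivity of $i$ together with the morphism properties of $i$ and $\varphi$. The only point to be mindful of is that the cap at $m$ in $\mathcal{M}_n$ does not interfere, which is automatic since we only need to conclude equality with $0$.
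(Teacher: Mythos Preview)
Your proof is correct and follows essentially the same approach as the paper: both arguments hinge on the observation that $\mathcal{B}^n_i(u)$ vanishes outside the image of $i$, so the inequality $\mathcal{B}^n_\varphi(s_G)\le\mathcal{B}^n_i(u)$ forces $\varphi$ to land inside $i(A)$, after which injectivity of $i$ yields the factorisation. Your write-up is in fact more thorough than the paper's, which does not spell out the verification that $\zeta$ is a graph morphism nor the observation about the cap $m$ being harmless.
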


\begin{proof}
  The morphisms $\zeta$ exists if all elements of the form $\phi(x)$
  with $x\in G$ are in the range of $i$. For such an $x$ we have
  $1 = s_G(x) \le \mathcal{B}^n_\phi(s_G)(\phi(x))$, since
  $\mathcal{B}^n_\phi(s_G)(\phi(x))$ is the sum of the $s_G$-annotations
  of all preimages of $x$. Furthermore
  $\mathcal{B}^n_\phi(s_G)(\phi(x)) \le
  \mathcal{B}^n_i(u)(\phi(x))$. But $\mathcal{B}^n_i(u)(y) = 0$ for
  all $y\in T$ that are not in the range of $i$, since the empty sum
  evaluates to $0$. But since $\mathcal{B}^n_i(u)(\phi(x)) \ge 1$, we
  can conclude that $\phi(x)$ has a preimage under $i$.  \qed
\end{proof}

In addition, we need the concept of reduction: the reduction operation
shifts annotations over morphisms in the reverse direction.

\begin{definition}[Reduction]
  \label{def:reduction}
  Let $\mathcal{A}$ be an (annotation) functor.  For a morphism
  $\phi\colon G\to G'$ and a monoid element $a'\in \mathcal{A}(G')$ we
  define the reduction of $a'$ to $G$ as follows:
  \[ \mathit{red}_\phi(a') = \bigvee \{a\mid \mathcal{A}_\phi(a) \le
    a'\}. \] 
\end{definition}

In the case of concrete annotations,  the reduction operator satisfies
the following properties:

\begin{lemma}
  \label{lem:properties-red}
  Assume that we are working with annotations over
  $\mathcal{B}^n$. 
If $\varphi \colon G \to H$ is injective, we
  obtain the following equality for all $x \in G$:
  \[ \mathit{red}_\phi(a')(x) = a'(\varphi(x)) \] Furthermore,
  if $\phi\colon G\to G'$ is injective, it holds that
  $\mathit{red}_\phi(\mathcal{B}^n_\phi(a)) = a$ for every
  $a\in\mathcal{B}^n(G)$.
\end{lemma}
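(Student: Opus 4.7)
The plan is to unfold the definition of $\mathit{red}_\phi$ from Definition~\ref{def:reduction} and then exploit injectivity of $\phi$ to decouple the pointwise constraints that appear in the join. First I would fix $\phi\colon G\to H$ injective and $a'\in\mathcal{B}^n(H)$, and observe that, by injectivity, the sum defining the action of the functor collapses: for any $a\in\mathcal{B}^n(G)$ and any $y\in H$, $\mathcal{B}^n_\phi(a)(y) = a(x)$ if $y = \phi(x)$ for a (necessarily unique) $x\in G$, and $\mathcal{B}^n_\phi(a)(y) = 0$ otherwise. Hence the condition $\mathcal{B}^n_\phi(a)\le a'$ is equivalent to the family of pointwise inequalities $a(x)\le a'(\phi(x))$ for all $x\in G$ (the condition at $y\notin\phi(G)$ is automatic, since $0\le a'(y)$).

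For part~1, I would then note that these pointwise constraints on the values $a(x)$ are independent across different $x\in G$, because changing the value of $a$ at one node or edge does not affect the constraint at any other. Therefore the join in $\mathit{red}_\phi(a') = \bigvee\{a\mid \mathcal{B}^n_\phi(a)\le a'\}$, which is taken pointwise in the $\ell$-monoid $\mathcal{B}^n(G)$, can be computed coordinatewise, yielding
\[
  \mathit{red}_\phi(a')(x) \;=\; \bigvee\{a(x) \mid a(x)\le a'(\phi(x))\} \;=\; a'(\phi(x)),
\]
where the last equality uses that this supremum is realised by the function that takes the value $a'(\phi(x))$ at $x$ (and, say, $0$ elsewhere). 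This establishes the first claim.

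Part~2 then follows immediately by specialising $a' = \mathcal{B}^n_\phi(a)$ in part~1 and computing, again using injectivity of $\phi$:
\[
  \mathit{red}_\phi(\mathcal{B}^n_\phi(a))(x) \;=\; \mathcal{B}^n_\phi(a)(\phi(x)) \;=\; a(x),
\]
for every $x\in G$, so $\mathit{red}_\phi(\mathcal{B}^n_\phi(a)) = a$. I do not foresee a substantial obstacle; the only thing to be slightly careful about is that the join in $\mathcal{B}^n(G)$ is literally taken pointwise (a direct consequence of the pointwise construction of the function $\ell$-monoid in Example~\ref{ex:lmonoid}) and that the supremum at each coordinate is a maximum, which holds because $\mathcal{M}_n$ is a finite totally ordered set.
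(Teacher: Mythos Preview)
Your proposal is correct and carries out precisely the computation that the paper leaves implicit: the paper's own proof is a single line, ``Straightforward from the definition of concrete annotations,'' so your argument simply spells out the details (collapsing the sum via injectivity, reducing the constraint to independent pointwise inequalities, and computing the join coordinatewise) that the authors deemed routine.
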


\begin{proof}
  Straightforward from the definition of concrete annotations. \qed
\end{proof}



We are now ready to prove closure under union for the concrete
case. Since we do work with abstract annotations in the proof, but
need the results of the lemmas, one could generalize this result to a
setting where the properties stated in Lemma~\ref{lem:property-union}
and Lemma~\ref{lem:properties-red} hold.

\smallskip

\begin{proposition_for}{prop:multi annot graphs are closed under union}{.}
  \closurePropsAnnotUnion
\end{proposition_for}

\begin{proof}
  Let two multiply annotated type graphs $T_1[M_1]$ and $T_2[M_2]$ be
  given. Let $T_1 \oplus T_2$ be the usual coproduct graph in the
  underlying category $\GR$, together with the embedding morphisms
  $i_1 \colon T_1 \to T_1 \oplus T_2$ and $i_2 \colon T_2 \to
  T_1 \oplus T_2$: \\

  \scalebox{.9}{ \hspace*{3cm}
    \xymatrix{
      T_1[M_1] \ar@{>->}[dr]_{i_1} & & T_2[M_2] \ar@{>->}[dl]^{i_2} \\
      & T_1 \oplus T_2[N] & \\
    }
  } \vspace*{.5cm}
  
  \noindent We define the set of annotations $N$ for the multiply
  annotated type graph $(T_1 \oplus T_2)[N]$ using the following two
  sets:
  \begin{align*}
    N_1 &= \{\ (\mathcal{B}^n_{i_1}(\ell_1),\mathcal{B}^n_{i_1}(u_1))\ \mid \ 
    (\ell_1,u_2) \in M_1 \} \\
    N_2 &= \{(\mathcal{B}^n_{i_2}(\ell_2),\mathcal{B}^n_{i_2}(u_2)) \mid \  
    (\ell_2,u_2) \in M_2 \}
  \end{align*}  

  Finally we define $N = N_1 \cup N_2$. \\

  By this definition, we get that for all elements $x \in T_1$ and for
  all $(\ell,u) \in M$ there exists $(\ell_1,u_1) \in N_{T_1(M)}$ such
  that $\mathcal{B}^n_{i_1}(\ell)(i_1(x)) = \ell_1(i_1(x))$ and
  $\mathcal{B}^n_{i_1}(u)(i_1(x)) = u_1(i_1(x))$. This makes $i_1$ a
  legal morphism since $N_1 \subseteq N$. The same holds for $i_2$
  analogously.  We will now prove the following equality:
  \[
    \mathcal{L}(T_1[M_1]) \cup \mathcal{L}(T_2[M_2]) =
    \mathcal{L}((T_1 \oplus T_2)[N])
  \] 
  
  $\subseteq$: Let
  $G \in (\mathcal{L}(T_1[M_1]) \cup \mathcal{L}(T_2[M_2]))$. Then there
  exists at least one legal morphism
  $\varphi_1 \colon G[s_G,s_G] \to T_1[M_1]$ or
  $\varphi_2 \colon G[s_G,s_G] \to T_2[M_2]$. We assume that
  $G \in \mathcal{L}(T_1[M_1])$. Let
  $\eta \colon G[s_G,s_G] \to (T_1 \oplus T_2)[N]$ be the composed
  morphism of $i_1$ and $\varphi_1$ with $\eta = i_1 \circ
  \varphi_1$. Then $\eta$ is legal due to Lemma \ref{le:composition of
    legal morphism} and therefore $G \in \mathcal{L}((T_1 \oplus T_2)[N])$
  holds.  The proof for the case where $G \in \mathcal{L}(T_2[M_2])$
  works in the same way.

  $\supseteq$: We now assume $G \in \mathcal{L}((T_1 \oplus
  T_2)[N])$. Then, there exists a legal morphism
  $\eta \colon G[s_G,s_G] \to (T_1 \oplus T_2)[N]$ with an annotation
  $(\ell,u) \in N$ such that
  $\ell \leq \mathcal{B}^n_{\eta}(s_G) \leq u$. For each
  $(\ell,u) \in N$, we know that the pair belongs to $N_1$ or
  $N_2$. Assume that $(\ell,u) \in N_1$. Then we know that there
  exists $(\ell_1,u_1)\in M_1$ such that
  $\ell=\mathcal{B}^n_{i_1}(\ell_1)$, $u=\mathcal{B}^n_{i_1}(u_1)$.
  Hence $\mathcal{B}^n_\eta(S_G) \le u = \mathcal{B}^n_{i_1}(u_1)$.
  From Lemma~\ref{lem:property-union} it follows that there exists a
  graph morphism $\zeta_1 \colon G \to T_1$ with
  $\eta = i_1 \circ \zeta_1$ such that the following diagram commutes
  in the underlying category $\GR$:

  \scalebox{.9}{ \hspace*{3cm}
    \xymatrix{
         & G[s_G,s_G] \ar@{.>}[dl]^{\zeta_1} 
         \ar@{->}[dd]^{\eta} & \\
         T_1[M_1] \ar@{>->}[dr]_{i_1} & & T_2[M_2] \ar@{>->}[dl]^{i_2} \\
         & (T_1 \oplus T_2)[N] & \\
             }
  } \vspace*{.5cm}
 
  We need to prove that $\zeta_1$ is a legal graph morphism in the
  category of multiply annotated graphs. We get that
  $\mathcal{B}^n_{\eta}(s_G) = \mathcal{B}^n_{i_1 \circ \zeta_1}(s_G) =
  \mathcal{B}^n_{i_1}(\mathcal{B}^n_{\zeta_1}(s_G))$ and since $i_1$ is
  injective, the following inequality holds due to the fact that
  $\mathit{red}_\phi$ is monotone and
  $\mathit{red}_\phi(\mathcal{B}^n_\phi(a)) = a$ holds for every
  $a\in\mathcal{B}^n(G)$, whenever $\phi$ is injective
  (cf. Lemma~\ref{lem:properties-red}):

  \begin{align*}
    && \mathcal{B}^n_{i_1}(\ell_1) &\leq \quad \qquad
    \mathcal{B}^n_{\eta}(s_G) \quad \,\,\, \quad \leq \mathcal{B}^n_{i_1}(u_1) 
    \\
    \Rightarrow && \mathcal{B}^n_{i_1}(\ell_1) &\leq \qquad
    \mathcal{B}^n_{i_1}(\mathcal{B}^n_{\zeta_1}(s_G)) \quad \leq
    \mathcal{B}^n_{i_1}(u_1)
    \\
    \Rightarrow && \mathit{red}_{i_1}(\mathcal{B}^n_{i_1}(\ell_1)) &\leq
    \mathit{red}_{i_1}(\mathcal{B}^n_{i_1}(\mathcal{B}^n_{\zeta_1}(s_G)))
    \leq \mathit{red}_{i_1}(\mathcal{B}^n_{i_1}(u_1))
    \\
    \Rightarrow && \ell_1 &\leq \quad \qquad \mathcal{B}^n_{\zeta_1}(s_G)
    \quad \,\,\, \quad \leq u_1
    \\
  \end{align*}

  Therefore $\zeta_1 \colon G[s_G,s_G] \to T_1[M_1]$ is a legal
  morphism and we can conclude that $G \in \mathcal{L}(T_1[M_1])$. For
  a legal morphism $\eta \colon G[s_G,s_G] \to (T_1 \oplus T_2)[N]$
  with a pair $(\ell,u) \in N_2$ we get a similar proof which shows
  that $G \in \mathcal{L}(T_2[M_2])$. Summarizing, in all cases
  $G \in (\mathcal{L}(T_1[M_1]) \cup \mathcal{L}(T_2[M_2]))$ holds.
\end{proof}

\newpage
\section{Extended Example: Annotated Type Graphs}
\label{sec:exampleAnnotGraphs}


In order to illustrate the use of annotated type graphs in
applications, we model a client-server scenario with the following
specification:

\begin{itemize}
  \item There exists exactly one server.
  \item An arbitrary number of users can connect to the server, even using 
    multiple connection sessions at the same time.
  \item There exists one user with special administrative rights.
  \item At least one user is always connected to the server.
  \item The server can host an arbitrary number of files from which at
    most one can be edited at the same time.
\end{itemize}

\noindent The above scenario can be modelled using an annotated type
graph $T_1[\ell,u]$ (see below). We will use the following edge
labels: $A$-labeled loops for administrative rights, $C$-labeled
edges for connections between users and the server and $E$-labeled
edges which are pointing to the file that is currently edited. We now
extend the requirements of our specification:

\begin{itemize}
\item The user with the administrative rights is always connected to
  the server.
  \item There has to be at least one file on the server.
\end{itemize}

\noindent We use the annotated type graph $T_2[\ell',u']$, depicted
below to model the extended scenario.  


\begin{center}
  \begin{tabular}{ccccc}
    $T_1[\ell,u]$ = &
    \begin{tikzpicture}[x=1.5cm,y=-1.2cm,baseline=(1.south)]
      \node[glab] (top) at (0,-.1) {} ;
      \node[gnode] (1) at (0,0) {} ; 
      \node[glab,below] (lab1) at (1.south) {$[1,m]$} ;
      \node[gnode] (2) at (1,0) {} ; 
      \node[glab,below] (lab2) at (2.south) {$[1,1]$} ;
      \node[gnode] (3) at (2,0) {} ; 
      \node[glab,below] (lab3) at (3.south) {$[0,m]$} ;
         	    \draw[gedge] (1) .. controls +(70:.7cm) and +(110:.7cm) .. 
         	            node[arlab,above] (labm3) {$\mathit{A}\ [1,1]$} (1) 
         	            ;
      \draw[gedge] (1) to node[arlab,above] {$\mathit{C}\ [1,m]$} (2) ;
      \draw[gedge] (2) to node[arlab,above] {$\mathit{E}\ [0,1]$} (3) ;
    \end{tikzpicture}
  & \hspace*{.7cm}
   &
    $T_2[\ell',u']$ = &
    \begin{tikzpicture}[x=1.5cm,y=-1.2cm,baseline=(base.south)]
      \node[glab] (base) at (0,.75) {} ;
      \node[glab] (top) at (0,-.1) {} ;
      \node[gnode] (1) at (0,0.5) {} ; 
      \node[glab,below] (lab1) at (1.south) {$[1,1]$} ;
      \node[gnode] (2) at (1,0.5) {} ; 
      \node[glab,below] (lab2) at (2.south) {$[1,1]$} ;
      \node[gnode] (3) at (2,0.5) {} ; 
      \node[glab,below] (lab3) at (3.south) {$[1,m]$} ;
      \node[gnode] (5) at (0,1) {} ; 
      \node[glab,below] (lab5) at (5.south) {$[0,m]$} ;
         	    \draw[gedge] (1) .. controls +(70:.7cm) and +(110:.7cm) .. 
         	            node[arlab,midway,above] (labm3) {$\mathit{A}\ 
         	            [1,1]$} (1) 
         	            ;
      \draw[gedge] (1) to node[arlab,above] {$\mathit{C}\ [1,1]$} (2) ;
      \draw[gedge] (5) to node[arlab,below=.15cm] {$\mathit{C}\ [0,m]$} 
      (2) ;
      \draw[gedge] (2) to node[arlab,above] {$\mathit{E}\ [0,1]$} (3) ;
    \end{tikzpicture}
  \end{tabular}
\end{center}

\noindent Since the second scenario is more restrictive than the
first, there exist graphs in $\mathcal{L}(T_1[\ell,u])$, which do not
fulfil the additional requirements of the extended specification.

  \begin{wrapfigure}{r}{2.2cm}
  \vspace{-1.2cm}
    \begin{tikzpicture}[x=1.5cm,y=-1.2cm,baseline=(base.south)]
      \node[glab] (base) at (0,.75) {} ;
      \node[glab] (top) at (0,-.1) {} ;
      \node[gnode] (1) at (0,0.5) {} ; 
      \node[gnode] (2) at (1,0.5) {} ; 
      \node[gnode] (5) at (0,1) {} ; 
         	    \draw[gedge] (1) .. controls +(70:.7cm) and +(110:.7cm) .. 
         	         node[arlab,midway,above] (labm3) {$\mathit{A}$} (1) ;
      \draw[gedge] (5) to node[arlab,below=.15cm] {$\mathit{C}$} 
      (2) ;
    \end{tikzpicture}
    \vspace{-.9cm}
  \end{wrapfigure}

  \noindent For instance the graph $G$ shown to the right is such a
  model, which describes that there exists a user with administrative
  rights but he is not connected to the server. Instead there is
  another user which is currently connected. However, it holds that
  that
  $\mathcal{L}(T_2[\ell',u']) \subseteq \mathcal{L}(T_1[\ell,u])$,
  since we can easily find a legal graph morphism
  $\varphi \colon T_2[\ell',u'] \to T_1[\ell,u]$.
  
}
\end{document}